\documentclass[singlecolumn,aps,pra,superscriptaddress,floatfix,tightenlines,nofootinbib,nobibnotes]{revtex4-2}

\usepackage[colorlinks,linkcolor=blue,urlcolor=blue, citecolor=blue]{hyperref}
\usepackage{xurl}
\usepackage{booktabs}
\usepackage{graphicx}
\usepackage{comment}
\usepackage{nicematrix}
\usepackage{xcolor}

\usepackage{dsfont}
\usepackage{physics}
\usepackage{nicefrac}
\usepackage{mathtools}
\usepackage{amsmath, amssymb, amsfonts, amsthm}
\usepackage{thmtools, thm-restate}

\usepackage{mathrsfs}

\usepackage{algorithmic}
\usepackage{algorithm}

\usepackage{enumitem}

\newtheorem{theorem}{Theorem}
\newtheorem*{theorem*}{Theorem}

\newtheorem{corollary}[theorem]{Corollary}
\newtheorem{lemma}[theorem]{Lemma}
\newtheorem{definition}[theorem]{Definition}
\newtheorem{proposition}[theorem]{Proposition}
\newtheorem{example}[theorem]{Example}
\newtheorem{remark}[theorem]{Remark}

\DeclareMathOperator{\supp}{supp}

\allowdisplaybreaks

\makeatletter
\renewcommand{\@fnsymbol}[1]{%
  \ensuremath{%
    \ifcase#1
      \ast
    \or
      \dagger
    \or
      \ddagger
    \or
      \P
    \else
      \@ctrerr
    \fi
  }%
}
\makeatother

\begin{document}

\begin{abstract}
The generalized quantum Stein's lemma characterizes the optimal asymptotic exponent of the type-II error in quantum hypothesis testing for an independent and identically distributed (IID) null hypothesis against a composite alternative hypothesis.
Classically, a probabilistic mixture of IID sources arises as a natural generalization of IID sources, and, in the non-composite setting, the optimal type-II error exponent in hypothesis testing for such classical mixed sources is known to be characterized concisely by the worst-case component of the mixture.
In this work, we extend these foundational results to composite quantum hypothesis testing where the null hypothesis is a mixed source, i.e., a probabilistic mixture of IID quantum states, and the alternative hypothesis is composite as in the generalized quantum Stein's lemma.
When the type-I error vanishes asymptotically, we characterize the optimal type-II error exponent of this composite quantum hypothesis testing problem in terms of the worst-case component of the mixture, by developing techniques for the non-commutative quantum setting inspired by the classical information-spectrum analysis.
We also show that the analogous characterization does not hold in general for a fixed nonzero type-I error threshold, by providing a counterexample beyond the vanishing type-I error regime.
These results clarify the applicability of the generalized quantum Stein's lemma to highly non-IID null hypotheses arising from arbitrary finite probabilistic mixtures of IID quantum states.
\end{abstract}

\title{Generalized quantum Stein's lemma for mixed sources}

\author{Haruka Kanazawa$^{*,}$}
\email{kanazawa-haruka306@g.ecc.u-tokyo.ac.jp}
\affiliation{
Department of Information Science, School of Science, The University of Tokyo, 7-3-1 Hongo, Bunkyo-ku, Tokyo, 113-8656, Japan
}

\author{Hayata Yamasaki$^{*,}$}
\email{hayata.yamasaki@gmail.com}
\affiliation{
Department of Computer Science, Graduate School of Information Science and Technology, The University of Tokyo, 7-3-1 Hongo, Bunkyo-ku, Tokyo, 113-8656, Japan
}

\begingroup
\renewcommand{\thefootnote}{$*$}
\footnotetext{The authors contributed equally to this work.}
\endgroup
\setcounter{footnote}{0}

\maketitle

\tableofcontents

\section{Introduction}

Quantum hypothesis testing~\cite{hiai1991proper,ogawa2000strong} is a fundamental task in quantum information theory.
Given two possible descriptions of an unknown quantum state, the task is to perform a measurement on the quantum state to distinguish the null hypothesis from the alternative hypothesis with small error probabilities.
The two errors are asymmetric: the type-I error is the probability of rejecting the null hypothesis when it is true, while the type-II error is the probability of accepting the null hypothesis when the alternative hypothesis is true.
A central question is to determine the optimal asymptotic decay rate of the type-II error under a constraint on the type-I error.

In the classical IID setting, this rate is characterized by the Chernoff-Stein lemma~\cite{cover1999elements}, which states that the optimal asymptotic type-II error exponent is given by the relative entropy between the two underlying probability distributions.
Its quantum counterpart, known as the quantum Stein's lemma, was established by Hiai and Petz~\cite{hiai1991proper} and by Ogawa and Nagaoka~\cite{ogawa2000strong}, showing that the same operational role is played by Umegaki's quantum relative entropy~\cite{umegaki1962conditional}.
Beyond the IID setting, the classical information-spectrum method provides a general framework for analyzing hypothesis testing of arbitrary non-IID sequences of probability distributions~\cite{han2003,Han2000HypothesisTestingGeneralSource}.
A quantum extension of this framework was developed by Hayashi and Nagaoka~\cite{4069150}, where the optimal type-II error exponent for general sequences of quantum states is characterized by the spectral inf-divergence rate.
These results form a general foundation for hypothesis testing beyond independently and identically distributed sources.
Together with various other asymptotic formulations of quantum state discrimination tasks, including Chernoff bounds~\cite{chernoff1952measure,Audenaert_2007,10.1214/08-AOS593} and Hoeffding bounds~\cite{hoeffding1965asymptotically,ogawa2002errorexponentsquantumhypothesis,PhysRevA.76.062301,nagaoka2006conversetheoremquantumhoeffding,audenaert2008asymptotic}, they constitute a foundation of quantum information theory.

A particularly important extension of quantum hypothesis testing is the composite hypothesis testing in which the alternative hypothesis is not a single state but a set of possible states.
This setting appears naturally in quantum resource theories, where the alternative hypothesis may be a family of free states satisfying physically well-motivated axioms, such as the set of separable states in entanglement theory~\cite{Gour_2025,Chitambar_2019}.
Under such axioms, the generalized quantum Stein's lemma characterizes the optimal type-II error exponent for an IID null hypothesis against a composite alternative hypothesis in terms of the regularized relative entropy of resources~\cite{Brand_o_2008,brandao2010reversible,brandao2010generalization,Brandao2015,Berta_2023,hayashi2025generalizedquantumsteinslemma,10898013}.
The original formulation by Brand\~{a}o and Plenio~\cite{brandao2010generalization} proposed axioms including closedness under taking partial traces and invariance under permutations of subsystems.
Reference~\cite{hayashi2025generalizedquantumsteinslemma} showed that the generalized quantum Stein's lemma holds under a smaller set of axioms on the sequence of alternative-hypothesis sets, such as the existence of a full-rank state, compactness, closedness under tensor products, and convexity, while Ref.~\cite{10898013} also showed the generalized quantum Stein's lemma under the original Brand\~{a}o-Plenio axioms.
More recently, Refs.~\cite{hayashi2025generalizedquantumsteinslemmaCQ,bergh2025generalizedquantumsteinslemma} generalized these results from states to classical-quantum channels.
Note that quantum hypothesis testing settings with composite alternative hypotheses have also been studied in Refs.~\cite{fang2025generalizedquantumasymptoticequipartition,fang2025errorexponentsquantumstate,lami2025doublycompositechernoffsteinlemma}, while the settings of these works are incompatible with the application domains originally proposed by Brand\~{a}o and Plenio.
These developments clarify the operational meaning of the regularized relative entropy of resources and provide a broad foundation for quantum resource theories.

The existing analyses of the generalized quantum Stein's lemma, however, assume that the null hypothesis is IID, while information theory for general source coding and the theory of information-spectrum methods have further developed to deal with correlated non-IID sources.
Reference~\cite{10898013} showed that its proof strategy for the generalized quantum Stein's lemma also applies when the null hypothesis is an almost IID state rather than an IID state, while extension of the result to highly non-IID null hypotheses beyond that setting inevitably requires different techniques.
In classical information theory, mixed sources, i.e., probabilistic mixtures of IID sources, arise as a natural and fundamental class of non-IID sources~\cite{han2003}.
Mixed sources model situations in which a source is chosen randomly from several IID components and then kept fixed throughout.
Such sources are generally non-ergodic and are not reducible to a single IID distribution.
In the non-composite classical hypothesis-testing setting, the optimal type-II error exponent for mixed sources admits a concise characterization: when the type-I error is required to vanish asymptotically, the exponent is determined by the worst component of the mixture~\cite{Han2000HypothesisTestingGeneralSource,han2003}.
This characterization is one of the basic examples illustrating how information-spectrum methods capture non-IID behavior.

In this work, we extend these foundational results on the generalized quantum Stein's lemma and the classical mixed-source Stein's lemma to a unified setting of composite quantum hypothesis testing.
We consider a null hypothesis given by a mixed source
$\rho_n=\sum_{i\in J}p_i\overline{\rho}_i^{\otimes n}$,
where $J$ is finite, $\{p_i\}_{i\in J}$ is an arbitrary probability distribution, and each $\overline{\rho}_i$ is a density operator on a finite-dimensional Hilbert space.
The alternative hypothesis is a sequence $\vec{S}=\{S_n\}_{n=1}^{\infty}$ of compact convex sets of quantum states satisfying the assumptions used in the proof of the generalized quantum Stein's lemma in Refs.~\cite{hayashi2025generalizedquantumsteinslemma,hayashi2025generalizedquantumsteinslemmaCQ}.

Our main theorem shows that, when the type-I error vanishes asymptotically, the optimal type-II error exponent is given by the worst IID component in the mixture, establishing an analogue of classical hypothesis testing for mixed sources~\cite{han2003} in the composite quantum setting with a non-IID mixed-source null hypothesis and a composite alternative hypothesis.
In the classical information-spectrum method, the corresponding mixed-source result follows from properties of the probabilistic spectral quantity $\mathrm{p}$-$\liminf$; however, in the quantum setting, the relevant spectral projections are non-commutative, and hence the classical argument does not directly apply.
We overcome this difficulty by proving that the spectral inf-divergence rate of a finite quantum mixed source is equal to the minimum of the spectral inf-divergence rates of its IID components.
The direct part is the technically nontrivial direction.
We construct a single test from the componentwise spectral tests by thresholding the sum of the tests, and control the type-I error for each component using an operator inequality derived from the Cauchy-Schwarz inequality.
This quantum extension applies without any additional axiom beyond those imposed on the composite alternative hypothesis in Ref.~\cite{hayashi2025generalizedquantumsteinslemma}.

We also analyze the extent to which the worst-component characterization remains valid when the type-I error threshold is fixed at a nonzero value.
For IID null hypotheses, the generalized quantum Stein's lemma has a strong-converse form: the exponent is independent of any fixed type-I error threshold $\varepsilon\in(0,1)$; by contrast, for mixed-source null hypotheses, we show that this behavior fails in general.
Even when the alternative hypothesis is a single IID state, the optimal exponent for a mixed source depends on how the allowed type-I error can discard low-relative-entropy components of the mixture.
This gives a counterexample to the direct extension of the worst-component characterization to fixed nonzero type-I error thresholds, and the exponent can jump discontinuously as the probability distribution defining the mixture is varied.
The proof of this nonzero-error characterization uses a complementary method distinct from the above result for vanishing type-I errors.
We show that, when the number of distinct eigenvalues of the alternative-hypothesis sequence grows subexponentially, the pinching map~\cite{Hayashi_2002} preserves the spectral inf-divergence rate up to asymptotically vanishing terms.
This allows us to reduce the problem to a commuting, hence classical, setting and then apply the classical information-spectrum formula for mixed sources~\cite{han2003}.

Therefore, these results clarify both the scope and the limitation of the generalized quantum Stein's lemma for highly non-IID null hypotheses arising from arbitrary finite mixtures of IID quantum states.
In the vanishing type-I error regime, the generalized quantum Stein's lemma extends to null hypotheses given by arbitrary finite probabilistic mixtures of IID quantum states, with the exponent determined by the least distinguishable component.
However, the strong-converse behavior in the setting of IID null hypotheses does not survive when the null hypothesis is a mixed source.
This distinction highlights a genuinely non-IID feature of mixed sources and shows that the probability distribution of the mixture, although irrelevant in the vanishing type-I error regime, becomes operationally relevant under a fixed nonzero type-I error constraint.

The rest of this paper is organized as follows.
In Sec.~\ref{sec:formulation}, we introduce the setting of composite hypothesis testing in this work, along with relevant quantities from the information-spectrum method.
In Sec.~\ref{sec:generalized_qsl_mixed_sources}, we prove the generalized quantum Stein's lemma for mixed sources.
In Sec.~\ref{sec:counterexample_epsilon}, we analyze the setting with fixed nonzero type-I error thresholds and provide the counterexample.
Finally, in Sec.~\ref{sec:conclusion}, we provide the conclusion and outlook.

\section{Formulation}
\label{sec:formulation}

Let $\mathcal{D}(\mathcal{H})$ denote the set of density operators on a finite-dimensional Hilbert space $\mathcal{H}$.
Quantum hypothesis testing is formalized as follows:
we are given a sequence $\vec{\mathcal{H}}=\left\{\mathcal{H}_n\right\}_{n=1}^{\infty}$ of Hilbert spaces and 
two sequences $\vec{\rho}=\left\{\rho_n\right\}_{n=1}^{\infty}$ and $\vec{\sigma}=\left\{\sigma_n\right\}_{n=1}^{\infty}$ of quantum states, 
where $\rho_n\in \mathcal{D}(\mathcal{H}_n)$ and $\sigma_n\in \mathcal{D}(\mathcal{H}_n)$ for each $n$.
Let $I_n$ denote the identity operator on $\mathcal{H}_n$ for each $n$.
Our task is to distinguish between $\vec{\rho}$ and $\vec{\sigma}$ by performing measurements represented by a sequence $\vec{T}=\left\{T_n\right\}_{n=1}^{\infty}$ of elements of positive operator-valued measures (POVMs) $\{T_n,I_n-T_n\}$, where each $T_n$ satisfies $0\le T_n \le I_n$ and is defined on $\mathcal{H}_n$.
In this setting, the null hypothesis is $\vec{\rho}$, and the alternative hypothesis is $\vec{\sigma}$.
If the measurement outcome corresponding to $T_n$ is obtained, we conclude that the given state
is $\rho_n$; otherwise, we conclude that the given state is $\sigma_n$.
We define two types of errors:
\begin{itemize}
    \item Type-I error: A mistaken conclusion that we reject the null hypothesis $\vec{\rho}$ when it is true, i.e., 
    we obtain the measurement outcome corresponding to $I_n - T_n$ when the given state is $\rho_n$. 
    The probability of this error is given by $\alpha_n(T_n) \coloneqq \Tr[(1 - T_n)\rho_n]$.
    \item Type-II error: A mistaken conclusion that we accept the null hypothesis $\vec{\rho}$ when the alternative hypothesis $\vec{\sigma}$ is true, i.e., 
    we obtain the measurement outcome corresponding to $T_n$ when the given state is $\sigma_n$.
    The probability of this error is given by $\beta_n(T_n) \coloneqq \Tr[T_n \sigma_n]$.
\end{itemize}
Classically, the hypothesis testing for such general sequences of sources is studied using techniques from the information spectrum method~\cite{han2003,Han2000HypothesisTestingGeneralSource}.
In the current quantum setting, we may use its quantum extension formulated in Ref.~\cite{4069150}.
Given a target error $\varepsilon\in[0,1]$, we aim to minimize the type-II errors while restricting the type-I errors below $\varepsilon$; as $n$ increases, the type-II errors decay exponentially, and this decay exponent is characterized by the following quantity.
\begin{definition}[Supremum type-II error exponents in $\varepsilon$-hypothesis testing for general sequences of quantum states~\cite{4069150}]
\label{def:B}
    For two sequences of states $\vec{\rho}=\left\{\rho_n\right\}_{n=1}^{\infty}$, $\vec{\sigma}=\left\{\sigma_n\right\}_{n=1}^{\infty}$ defined on a sequence of Hilbert spaces $\vec{\mathcal{H}}=\left\{\mathcal{H}_n\right\}_{n=1}^{\infty}$, 
    and any $\varepsilon\in[0,1]$,
    we define the supremum $\varepsilon$-achievable type-II error exponent as
    \begin{align}
        B\left(
            \varepsilon | \vec{\rho} \| \vec{\sigma}
        \right)
        &\coloneqq
        \sup_{\vec{T}=\{T_n\}_{n=1}^{\infty}}\left\{
            \liminf_{n\to\infty} -\frac{1}{n} \log \beta_n(T_n) \middle| \limsup_{n\to\infty} \alpha_n(T_n) \le \varepsilon
        \right\},\label{B}
\end{align}
where $\log$ is the natural logarithm throughout our work.
In particular, when $\varepsilon=0$, we define the supremum type-II error exponent as
\begin{align}
    B(\vec{\rho}\|\vec{\sigma}) \coloneqq B(0|\vec{\rho}\|\vec{\sigma}).
\end{align}
\end{definition}

To analyze this operational quantity, Ref.~\cite{4069150} introduces the quantum extension of the spectral inf-divergence rate, based on the classical information spectrum method~\cite{han2003}.
\begin{definition}[The spectral inf-divergence rate in $\varepsilon$-hypothesis testing for general sequences of quantum states]\cite{4069150}\label{def D}
    For two sequences of states $\vec{\rho}=\left\{\rho_n\right\}_{n=1}^{\infty}$, $\vec{\sigma}=\left\{\sigma_n\right\}_{n=1}^{\infty}$ defined on a sequence of Hilbert spaces $\vec{\mathcal{H}}=\left\{\mathcal{H}_n\right\}_{n=1}^{\infty}$, 
    and any $\varepsilon\in[0,1]$, we define
    \begin{align}
        \underline{D}\left(
            \varepsilon | \vec{\rho} \| \vec{\sigma}
        \right)
        &\coloneqq
        \sup\left\{
            a \:\middle|\: \limsup_{n\to\infty} \Tr\left[
                \left(I_n - \left\{ \rho_n - e^{na}\sigma_n \ge 0 \right\} \right)\rho_n
            \right] \le \varepsilon
        \right\},
    \end{align}
    where $\{X\ge 0\}$ denotes the projection onto the eigenspace of $X$ corresponding to non-negative eigenvalues.
    In particular, we define $B(\vec{\rho}\|\vec{\sigma}) \coloneqq B(0|\vec{\rho}\|\vec{\sigma})$ and $\underline{D}(\vec{\rho}\|\vec{\sigma}) \coloneqq \underline{D}(0|\vec{\rho}\|\vec{\sigma})$.
\end{definition}

Importantly, Ref.~\cite[Theorem 1]{4069150} shows the following fact connecting these quantities.
\begin{lemma}[Characterization of the supremum type-II error exponent]
\label{lem:B=D}
For any $\varepsilon\in[0,1]$, it holds that
\begin{align}\label{B=D}
    B\left(
        \varepsilon | \vec{\rho} \| \vec{\sigma}
    \right)
=
\underline{D}\left(
    \varepsilon | \vec{\rho} \| \vec{\sigma}
\right).
\end{align}
\end{lemma}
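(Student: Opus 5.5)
We prove the two inequalities $B(\varepsilon|\vec\rho\|\vec\sigma)\ge\underline D(\varepsilon|\vec\rho\|\vec\sigma)$ (direct part) and $B(\varepsilon|\vec\rho\|\vec\sigma)\le\underline D(\varepsilon|\vec\rho\|\vec\sigma)$ (converse part) separately. In both directions the key tool is the elementary spectral inequality for the projection $P_n(a)\coloneqq\{\rho_n-e^{na}\sigma_n\ge0\}$: since $P_n(a)$ commutes with $\rho_n-e^{na}\sigma_n$ and projects onto its nonnegative part,
\begin{align}
\Tr[P_n(a)(\rho_n-e^{na}\sigma_n)]\ge \Tr[T(\rho_n-e^{na}\sigma_n)]\quad\text{for all } 0\le T\le I_n .
\end{align}
This is the Neyman--Pearson-type lemma.

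\emph{Direct part.} Fix any $a<\underline D(\varepsilon|\vec\rho\|\vec\sigma)$. By the definition of the supremum, there is $a'$ with $a\le a'$ satisfying $\limsup_n\Tr[(I_n-P_n(a'))\rho_n]\le\varepsilon$. Since the constraint set in Definition~\ref{def D} is downward closed, we may in fact take $a'=a$; this is checked via the monotonicity given by the inequality above, or simply by working with $a'$ directly. Use the test $T_n=P_n(a)$. Its type-I error satisfies $\limsup_n\alpha_n\le\varepsilon$ by the choice of $a$. For the type-II error, $P_n(a)(\rho_n-e^{na}\sigma_n)P_n(a)\ge0$ gives $\Tr[P_n(a)\sigma_n]\le e^{-na}\Tr[P_n(a)\rho_n]\le e^{-na}$. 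Hence $\liminf_n -\frac1n\log\beta_n\ge a$, and letting $a\uparrow\underline D$ yields $B\ge\underline D$.

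\emph{Converse part.} Take any test sequence $\vec T$ with $\limsup_n\alpha_n(T_n)\le\varepsilon$ and $\liminf_n-\frac1n\log\beta_n(T_n)= r$. Suppose $r>\underline D$, and choose $b$ with $\underline D<b<r$. Apply the spectral inequality with $T=T_n$ and threshold $b$:
\begin{align}
\Tr[(I_n-P_n(b))\rho_n]\le \Tr[(I_n-T_n)\rho_n]+e^{nb}\Tr[T_n\sigma_n]-e^{nb}\Tr[P_n(b)\sigma_n]\le \alpha_n(T_n)+e^{nb}\beta_n(T_n).
\end{align}
Since $b<r$, we have $e^{nb}\beta_n(T_n)\to0$. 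Therefore $\limsup_n\Tr[(I_n-P_n(b))\rho_n]\le\varepsilon$, so $b\le\underline D$ by definition, which is a contradiction. Thus every achievable exponent satisfies $r\le\underline D$, and $B\le\underline D$.

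\emph{Main obstacle.} The delicate points are bookkeeping rather than the inequalities themselves. First, the direct part needs the definition's supremum to be approached by admissible $a$; this is immediate because the $\limsup$ condition is monotone in $a$. Monotonicity follows since $\Tr[(I_n-P_n(a))\rho_n]$ is nondecreasing in $a$, which one obtains from the same variational inequality applied at two thresholds, up to a vanishing correction of the form $e^{-n(a'-a)}$. Second, the edge cases need care: $\varepsilon=1$, and $\underline D=\pm\infty$, where the same arguments go through with the obvious conventions. The substantive content is the two-sided use of the Neyman--Pearson inequality; I expect the monotonicity and limit-exchange details in the direct part to be the main thing needing care.
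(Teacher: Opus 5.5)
Your proof is correct. It is the standard Nagaoka--Hayashi argument~\cite{4069150}, which the paper cites for this lemma without reproducing a proof: the spectral test gives the direct part through $\Tr[P_n(a)\sigma_n]\le e^{-na}$, and the Neyman--Pearson inequality (the same one the paper itself uses in Proposition~\ref{main proposition: converse}) gives the converse.

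On the monotonicity you flag, no vanishing correction is needed. Applying the Neyman--Pearson inequality at thresholds $a<a'$ in both directions and adding the two gives $\Tr[P_n(a)\sigma_n]\ge\Tr[P_n(a')\sigma_n]$, and hence $\Tr[P_n(a)\rho_n]\ge\Tr[P_n(a')\rho_n]$ exactly. Alternatively, using the test $P_n(a')$ directly avoids the issue altogether.
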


We extend the setting of quantum hypothesis testing to composite hypothesis testing, where the alternative hypothesis is not a single sequence of states but a sequence of sets of states.
In this case, we are given a sequence of sets of states $\vec{S}=\left\{S_n\right\}_{n=1}^{\infty}$, where $S_n \subseteq \mathcal{D}(\mathcal{H}_n)$, and the alternative hypothesis is $\vec{S}$.
Following the previous works~\cite{hayashi2025generalizedquantumsteinslemma, hayashi2025generalizedquantumsteinslemmaCQ}, we assume that $\vec{S}$ satisfies the following conditions:
\begin{enumerate}[label=(Axiom \arabic*), ref=Axiom \arabic*]
    \item \label{set1}
    The set $S_1$ contains a full-rank state $\sigma_{\mathrm{full}}$.
    \item \label{set2} 
    For a $d$-dimensional Hilbert space $\mathcal{H}$ ($0<d<\infty$), for each $n$, the set $S_n$ is a compact subset of $\mathcal{D}(\mathcal{H}^{\otimes n})$.
    \item \label{set3}
    The set $S_n$ is closed under tensor product; i.e., for any positive integers $n$ and $m$,
    if $\sigma_n \in S_n$ and $\sigma_m \in S_m$, then $\sigma_n \otimes \sigma_m \in S_{n+m}$.
    \item \label{set4} 
    For a $d$-dimensional Hilbert space $\mathcal{H}$ ($0<d<\infty$), for each $n$, the set $S_n$ is a convex subset of $\mathcal{D}(\mathcal{H}^{\otimes n})$.
\end{enumerate}

To analyze this composite hypothesis testing, we recall the definition of the optimal type-II error used in Ref.~\cite{hayashi2025generalizedquantumsteinslemma}.
We will analyze the relationship between the two definitions (Definitions~\ref{def:B} and~\ref{def beta}) in Sec.~\ref{subsec: the relationship between two definitions of type-II error exponent}.
\begin{definition}[Optimal type-II error in the generalized quantum Stein's lemma]
\label{def beta}
    For two density operators $\rho$ and $\sigma$ defined on the same Hilbert space, and $\varepsilon \in [0, 1]$, we define the optimal $\varepsilon$-achievable type-II error as
    \begin{align}
        \beta_\varepsilon(\rho\|\sigma)  \coloneqq \min \left\{
            \Tr[T\sigma] \middle| 0 \le T \le I, \Tr[(I-T)\rho] \le \varepsilon
        \right\}.
    \end{align}
    More generally, for a state $\rho$ and a compact set  $S$ of states, define
    \begin{align}
        \beta_\varepsilon(\rho\|S) \coloneqq
        \min \left\{
            \max_{\sigma\in S} \Tr[T\sigma] \middle| 0 \le T \le I, \Tr[(I-T)\rho] \le \varepsilon
        \right\}.
    \end{align}
\end{definition}

The quantum relative entropy is defined as follows.
\begin{definition}[Quantum relative entropy]
For two density operators $\rho$ and $\sigma$ defined on the same Hilbert space, the quantum relative entropy is defined by
\begin{align}
    D(\rho\|\sigma)
    =
    \Tr[\rho(\log \rho - \log \sigma)]
\end{align}
if $\supp(\rho)\subseteq \supp(\sigma)$, and $D(\rho\|\sigma)=+\infty$ otherwise, 
where $\supp(\rho)$ denotes the support of $\rho$.
\end{definition}

The following theorem is known as the generalized quantum Stein's lemma~\cite[Theorem 4]{hayashi2025generalizedquantumsteinslemma}.
We note that the original proposal of the generalized quantum Stein's lemma in Ref.~\cite{brandao2010generalization} considered additional conditions on the closedness under taking partial traces and the invariance under permutations of subsystems, for which Ref.~\cite{10898013} also provides an alternative proof to the proofs in Refs.~\cite{hayashi2025generalizedquantumsteinslemma, hayashi2025generalizedquantumsteinslemmaCQ}.
Here, we work on the smaller set of conditions established in Ref.~\cite{hayashi2025generalizedquantumsteinslemma, hayashi2025generalizedquantumsteinslemmaCQ} to make the statements stronger. The right-hand side of \eqref{eq: generalized quantum Stein's lemma} is called the regularized relative entropy of resources~\cite{Gour_2025,Chitambar_2019,Kuroiwa_2020}.

\begin{theorem}[{Generalized quantum Stein's lemma~\cite[Theorem 4]{hayashi2025generalizedquantumsteinslemma}}]\label{generalized quantum Stein's lemma}
    Let $\vec{\rho}=\left\{ \rho^{\otimes n} \right\}_{n=1}^{\infty}$ be a sequence of density operators defined on a sequence $\vec{\mathcal{H}}=\left\{\mathcal{H}_n\right\}_{n=1}^{\infty}$ of Hilbert spaces.
    For any $\varepsilon \in (0, 1)$ and any sequence of sets of states $\vec{S}=\left\{S_n\subset\mathcal{D}(\mathcal{H}_n)\right\}_{n=1}^{\infty}$ satisfying \textnormal{(\ref{set1})}, \textnormal{(\ref{set2})}, \textnormal{(\ref{set3})}, and \textnormal{(\ref{set4})},
    it holds that
    \begin{align}\label{eq: generalized quantum Stein's lemma}
        \lim_{n\to\infty} -\frac{1}{n} \log \beta_\varepsilon(\rho^{\otimes n}\| S_n)
        =
        \lim_{n\to\infty} \frac{1}{n} \min_{\sigma_n \in S_n} D(\rho^{\otimes n}\|\sigma_n).
    \end{align}
\end{theorem}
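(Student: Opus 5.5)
The plan is to set $D_S(\rho) \coloneqq \lim_{n\to\infty}\frac{1}{n}\min_{\sigma_n\in S_n} D(\rho^{\otimes n}\|\sigma_n)$ and show that the optimal exponent $-\frac1n\log\beta_\varepsilon(\rho^{\otimes n}\|S_n)$ is asymptotically bounded above and below by $D_S(\rho)$, for every fixed $\varepsilon\in(0,1)$. I expect the lower bound (the direct part) to be the real obstacle.

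\textbf{Preliminaries.} First, $D_S(\rho)$ exists and is finite.
\begin{itemize}
\item Compactness (\ref{set2}) makes each minimum attained, and lower semicontinuity of $D$ is used here.
\item Closedness under tensor products (\ref{set3}), together with additivity of $D$ on tensor products, gives subadditivity of $n\mapsto \min_{\sigma_n\in S_n} D(\rho^{\otimes n}\|\sigma_n)$. Fekete's lemma then gives the limit.
\item The full-rank state $\sigma_{\mathrm{full}}^{\otimes n}\in S_n$ from (\ref{set1}) and (\ref{set3}) bounds everything by $D(\rho\|\sigma_{\mathrm{full}})<\infty$.
\end{itemize}
Next I would use convexity (\ref{set4}) and compactness to apply a minimax theorem to the bilinear map $(T,\sigma)\mapsto\Tr[T\sigma]$. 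This gives
\begin{align}
\beta_\varepsilon(\rho^{\otimes n}\|S_n)=\max_{\sigma_n\in S_n}\beta_\varepsilon(\rho^{\otimes n}\|\sigma_n),
\end{align}
so both directions reduce to controlling non-composite quantities uniformly over $\sigma_n\in S_n$.

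\textbf{Converse.} I would use the Rényi-type strong-converse bound
\begin{align}
-\log\beta_\varepsilon(\rho\|\sigma)\le \widetilde D_\alpha(\rho\|\sigma)+\frac{\alpha}{\alpha-1}\log\frac{1}{1-\varepsilon},\qquad \alpha>1,
\end{align}
where $\widetilde D_\alpha$ is the sandwiched Rényi divergence. Apply it to $\rho^{\otimes n}$ and a minimizer $\sigma_n$ of $\widetilde D_\alpha$ over $S_n$. Dividing by $n$, the $\varepsilon$-dependent term vanishes, leaving the regularized sandwiched Rényi resource measure $\lim_n\frac1n\min_{\sigma_n\in S_n}\widetilde D_\alpha(\rho^{\otimes n}\|\sigma_n)$, which exists by the same subadditivity argument.

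It then remains to show this quantity tends to $D_S(\rho)$ as $\alpha\downarrow1$. My approach is as follows.
\begin{itemize}
\item Fix a block length $m$ and an optimal $\sigma_m\in S_m$.
\item Use $\sigma_m^{\otimes k}\in S_{mk}$, which holds by (\ref{set3}).
\item Use the single-letter continuity $\widetilde D_\alpha(\rho^{\otimes m}\|\sigma_m)\to D(\rho^{\otimes m}\|\sigma_m)$ as $\alpha\downarrow1$.
\end{itemize}
This yields $\limsup_{\alpha\downarrow1}(\text{regularized }\widetilde D_\alpha)\le \frac1m\min_{\sigma_m\in S_m}D(\rho^{\otimes m}\|\sigma_m)$, and letting $m\to\infty$ gives the converse for all $\varepsilon\in(0,1)$.

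\textbf{Direct part.} Here I must show that for every $\delta>0$ and every $\varepsilon>0$ there is a single test $T_n$ with $\Tr[(I-T_n)\rho^{\otimes n}]\le\varepsilon$ and $\max_{\sigma_n\in S_n}\Tr[T_n\sigma_n]\le e^{-n(D_S(\rho)-\delta)}$ for large $n$.

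I would construct a universal state
\begin{align}
\omega_n\propto\int\sigma_m^{\otimes k}\,d\mu
\end{align}
(a symmetrized mixture in $S_n$ by convexity) that dominates every permutation-covariant element of $S_n$ up to a polynomial factor. Then I would pinch $\rho^{\otimes n}$ with respect to $\omega_n$; since $\omega_n$ has polynomially many distinct eigenvalues, this costs only a vanishing rate. The test is the Hayashi--Nagaoka spectral projection $\{\mathcal{E}_{\omega_n}(\rho^{\otimes n})-e^{n(D_S-\delta)}\omega_n\ge0\}$, and Lemma~\ref{lem:B=D} translates the problem into a lower bound on the spectral inf-divergence rate.

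\textbf{Main obstacle.} The expected obstacle is the gap between the spectral inf-rate and $D_S(\rho)$. The spectrum of $\log\sigma_n$ need not concentrate, so $\underline D(\rho^{\otimes n}\|\sigma_n)$ could be strictly smaller than the regularized relative entropy; this is exactly the gap in the original Brand\~ao--Plenio argument. My first attempt would be a proof by contradiction: assume that for a subsequence, the type-I error of the threshold test at rate $D_S-\delta$ stays above some $\varepsilon_0>0$. Then I would use the failing mass to build an improved free state, namely a convex combination of $\sigma_n$ with a reweighted state placing extra weight on the rejected subspace. This improved state would lower $\frac1n\min_{\sigma_n\in S_n} D(\rho^{\otimes n}\|\sigma_n)$ by a fixed amount, contradicting the definition of $D_S$ via the tensor-product closure.

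Making this ``improvement'' step quantitative is where I expect most of the work. It requires a careful splitting of $\rho^{\otimes n}$ into the accepted and rejected parts, and an entropy bound on each part.
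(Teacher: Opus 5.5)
The paper gives no proof of this statement. It imports it from Ref.~\cite{hayashi2025generalizedquantumsteinslemma} (Theorem~4), so your attempt has to stand on its own.

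\textbf{What works.} Your preliminaries and your converse are sound. The sandwiched-R\'enyi bound $-\log\beta_\varepsilon(\rho\|\sigma)\le \widetilde D_\alpha(\rho\|\sigma)+\frac{\alpha}{\alpha-1}\log\frac{1}{1-\varepsilon}$, evaluated on $\sigma_m^{\otimes k}\in S_{mk}$ and followed by $\alpha\downarrow 1$ and $m\to\infty$, is the standard strong-converse argument. It gives $\limsup_n -\frac1n\log\beta_\varepsilon(\rho^{\otimes n}\|S_n)\le D_S(\rho)$ for every $\varepsilon\in(0,1)$.

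\textbf{The gap: the direct part is not proved.} This is the whole nontrivial content of the cited theorem. By your own minimax reduction, you must show that for the worst-case $\sigma_n^*\in S_n$ the lower tail $\Tr[\rho^{\otimes n}\{\rho^{\otimes n}-e^{n(D_S(\rho)-\delta)}\sigma_n^*<0\}]$ (or its pinched analogue) tends to zero. The only information you have is $\frac1n D(\rho^{\otimes n}\|\sigma)\ge D_S(\rho)$ for all $\sigma\in S_n$. That constrains the mean of the log-likelihood ratio and says nothing about this tail. This is exactly the Brand\~{a}o--Plenio gap, which you identify but do not close. Your contradiction sketch has no mechanism behind it:
\begin{itemize}
\item (\ref{set3}) and (\ref{set4}) only produce tensor products and mixtures of states already in $S_n$.
\item Nothing in the axioms makes a state ``reweighted toward the rejected subspace'' of a spectral projection free.
\item No estimate shows that $\frac1n\min_{\sigma_n\in S_n}D(\rho^{\otimes n}\|\sigma_n)$ drops by a fixed amount.
\end{itemize}
As it stands, you have proved only the converse inequality.

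\textbf{The universal-state step is also unjustified.} The axioms do not include permutation invariance. So the worst-case free state need not be permutation-covariant, and symmetrizing a free state can take it out of $S_n$. Restricting attention to permutation-covariant free states is therefore unwarranted.

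Even for permutation-invariant free states, mixtures $\int\sigma_m^{\otimes k}\,d\mu$ over $\sigma_m\in S_m$ with fixed $m$ need not dominate them up to a polynomial factor. Take $S_n$ to be the convex hull of tensor products, over consecutive blocks, of $I/2$ and $\ell$-qubit GHZ projectors ($\ell\ge 2$). All four axioms hold. Yet the free state $|\mathrm{GHZ}_n\rangle\langle\mathrm{GHZ}_n|$ has overlap at most $2^{1-n/m}$ with every such mixture, which forces a domination constant of at least $2^{n/m-1}$. The familiar polynomial domination of symmetric states uses $\int\tau^{\otimes n}\,d\tau$ over all $\tau$, and those $\tau$ are not free.

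After the minimax step you do not actually need a universal state, since the test may depend on $\sigma_n^*$. But then you lose control of the number of distinct eigenvalues of $\sigma_n^*$, and pinching can cost a rate linear in $n$. In either formulation, the tail estimate above remains the missing idea.
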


\section{Generalized quantum Stein's lemma for mixed sources}
\label{sec:generalized_qsl_mixed_sources}

In this section, we present our main result: a generalization of the generalized quantum Stein's lemma~\cite{brandao2010generalization, hayashi2025generalizedquantumsteinslemma, hayashi2025generalizedquantumsteinslemmaCQ} to mixed sources defined as a probabilistic mixture of independent and identically distributed (IID) quantum states.
In Sec.~\ref{subsec: generalized quantum Stein's lemma for mixed sources}, we present the main theorem in this work. The proof of the main theorem is divided into four subsections. Section~\ref{subsec: analysis of the spectral inf-divergence rate} proves the key proposition to prove the main theorem. Section~\ref{subsec: the spectral inf-divergence rate in composite hypothesis testing} shows a corollary of the proposition proved in Sec.~\ref{subsec: analysis of the spectral inf-divergence rate}. In Sec.~\ref{subsec: the relationship between two definitions of type-II error exponent}, we prove a relationship between two different ways to define the optimal type-II error exponent so that we can relate Proposition~\ref{key proposition to the main theorem} to the known result in Ref.~\cite{hayashi2025generalizedquantumsteinslemma}. Finally, the proof of the main theorem is provided in Sec.~\ref{subsec: proof of the main theorem}.

\subsection{Statement of the result}
\label{subsec: generalized quantum Stein's lemma for mixed sources}

We present the main theorem, Theorem~\ref{Main theorem}, on the generalized quantum Stein's lemma for mixed sources.
The key to proving Theorem~\ref{Main theorem} is  Proposition~\ref{key proposition to the main theorem} shown in Sec.~\ref{subsec: the spectral inf-divergence rate in composite hypothesis testing}, indicating that the spectral inf-divergence rate for mixed sources of quantum states is determined by the worst case in the mixture.
The corresponding result has long been known in the classical case~\cite{Han2000HypothesisTestingGeneralSource, han2003}.
However, it is non-trivial to show Proposition~\ref{key proposition to the main theorem} because the proof of the classical counterpart of Proposition~\ref{key proposition to the main theorem} utilizes characteristics of $\mathrm{p}$-$\liminf$ in the information spectrum method~\cite{4069150}, which appears in the definition of $\underline{D}(\varepsilon|\vec{\rho}\|\vec{\sigma})$ only in the classical case. Moreover, it is generally not easy to derive some relations between different projections $\{X\ge 0\}$, which
appear in Definition~\ref{def D}.
Nevertheless, we prove Proposition~\ref{key proposition to the main theorem} and thus Theorem~\ref{Main theorem} 
by using a fundamentally different technique from the classical case, just as Ref.~\cite{4069150} used a completely different method to prove quantum Stein's lemma. In particular, in the proof, we employ an operator inequality derived from the Cauchy-Schwarz inequality (Lemma~\ref{lem:Cauchy_Schwarz}) and use it to explicitly construct an optimal test operator from the test operators for each IID source.

\begin{theorem}[Generalized quantum Stein's lemma for mixed sources]\label{Main theorem}
    For each $n$, let $S_n$ be a subset of $\mathcal{D}(\mathcal{H}_n)$ satisfying \textnormal{(\ref{set1})}, \textnormal{(\ref{set2})}, \textnormal{(\ref{set3})}, and \textnormal{(\ref{set4})},
    and $\vec{S}$ be the sequence $\left\{S_n\right\}_{n=1}^{\infty}$.
    Suppose that $\vec{\rho}=\left\{\rho_n\right\}_{n=1}^{\infty}$ is a sequence of mixed sources of $\vec{\rho}_i=\left\{\overline{\rho}_i^{\otimes n}\right\}_{n=1}^{\infty}$ defined on $\vec{\mathcal{H}}$; i.e.,
    \begin{align}
    \label{eq:mixed_source_theorem}
        \rho_n = \sum_{i\in J} p_i \overline{\rho}_i^{\otimes n},
    \end{align}
    where $J$ is a finite set, $p_i>0$ for all $i\in J$, and $\sum_{i\in J} p_i = 1$.
    Then, it holds that
    \begin{align}
        \inf_{\vec{\sigma}\in\vec{S}} B(\vec{\rho}\|\vec{\sigma}) = \min_{i\in J} \lim_{n\to\infty} \frac{1}{n} \min_{\sigma_n \in S_n} D(\overline{\rho}_i^{\otimes n}\|\sigma_n).
    \end{align}
\end{theorem}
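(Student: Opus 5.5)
We plan to reduce the statement, through Lemma~\ref{lem:B=D}, to a claim about spectral inf-divergence rates, and then invoke Theorem~\ref{generalized quantum Stein's lemma} for each IID component. Here $\inf_{\vec\sigma\in\vec S}$ ranges over sequences $\vec\sigma=\{\sigma_n\}$ with $\sigma_n\in S_n$. By Lemma~\ref{lem:B=D}, $\inf_{\vec\sigma}B(\vec\rho\|\vec\sigma)=\inf_{\vec\sigma}\underline{D}(\vec\rho\|\vec\sigma)$. The key step is therefore a mixture formula: for every fixed $\vec\sigma$,
\begin{align}
\underline{D}(\vec\rho\|\vec\sigma)=\min_{i\in J}\underline{D}(\vec\rho_i\|\vec\sigma).
\end{align}
Granting this formula, we may exchange $\inf_{\vec\sigma}$ with $\min_i$, since $J$ is finite. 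It then suffices to show that for each IID component
\begin{align}
\inf_{\vec\sigma\in\vec S}B(\vec\rho_i\|\vec\sigma)=\lim_{n}\tfrac1n\min_{\sigma_n\in S_n}D(\overline\rho_i^{\otimes n}\|\sigma_n).
\end{align}

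\emph{Componentwise identity.} The quantity $\inf_{\vec\sigma}B$ uses sequences that are fixed in advance, whereas $\beta_\varepsilon(\rho^{\otimes n}\|S_n)$ optimizes one test against the worst $\sigma_n$ in the set. For the ``$\ge$'' direction, a test achieving the exponent of Theorem~\ref{generalized quantum Stein's lemma} uniformly over $S_n$ also works against any particular $\vec\sigma$. To obtain vanishing type-I error, we let $\varepsilon\to0$ along a diagonal subsequence; the strong-converse form of Theorem~\ref{generalized quantum Stein's lemma} makes the right-hand side independent of $\varepsilon$. For the ``$\le$'' direction, we pick $\sigma_n^*$ attaining $\min_{\sigma_n\in S_n} D(\overline\rho_i^{\otimes n}\|\sigma_n)$, which exists by compactness. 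We then apply the standard converse $-\log\beta_\varepsilon(\rho\|\sigma)\le (D(\rho\|\sigma)+h(\varepsilon))/(1-\varepsilon)$ to the sequence $\vec\sigma^*$. Taking $\liminf$ and then $\varepsilon\to0$ yields the regularized relative entropy, and the limit exists by superadditivity (\ref{set3}).

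\emph{Converse of the mixture formula ($\le$).} Let $a>\underline{D}(\vec\rho_i\|\vec\sigma)$ for some $i$. By Lemma~\ref{lem:B=D}, any test with exponent above $\underline D(\vec\rho_i\|\vec\sigma)$ has $\limsup\alpha_n$ on component $i$ strictly positive. Since $\Tr[(I-T_n)\rho_n]\ge p_i\Tr[(I-T_n)\overline\rho_i^{\otimes n}]$, the type-I error on the mixture is then also not vanishing. Hence $B(\vec\rho\|\vec\sigma)\le\min_i B(\vec\rho_i\|\vec\sigma)$, and the conclusion follows from Lemma~\ref{lem:B=D}.

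\emph{Direct part of the mixture formula ($\ge$); this is the main obstacle.} Fix $a<\min_i\underline{D}(\vec\rho_i\|\vec\sigma)$. Let $P_{n,i}=\{\overline\rho_i^{\otimes n}-e^{na}\sigma_n\ge0\}$; each of these projections has vanishing type-I error on its own component and satisfies $\Tr[P_{n,i}\sigma_n]\le e^{-na}$. We build the combined test as a threshold on the sum of the component tests, $T_n=\{\sum_iP_{n,i}\ge \tfrac12\}$, or alternatively use the projection onto the range of $\sum_i P_{n,i}$ truncated at a constant level. For the type-II error we use $T_n\le 2\sum_iP_{n,i}$, which gives $\Tr[T_n\sigma_n]\le 2|J|e^{-na}$. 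For the type-I error on component $i$ we need $\Tr[(I-T_n)\overline\rho_i^{\otimes n}]\to0$. Here $I-T_n$ is the spectral projection onto $\sum_j P_{n,j}<1/2$, so on its range $P_{n,i}\le\sum_j P_{n,j}<1/2$. Writing $Q=I-T_n$, the Cauchy--Schwarz operator inequality $(A+B)^\dagger(A+B)\le 2A^\dagger A+2B^\dagger B$ applied to $Q=QP_{n,i}+Q(I-P_{n,i})$ should give
\begin{align}
\Tr[Q\overline\rho_i^{\otimes n}]\le 2\Tr[P_{n,i}QP_{n,i}\overline\rho_i^{\otimes n}]+2\Tr[(I-P_{n,i})\overline\rho_i^{\otimes n}].
\end{align}
The second term vanishes by the choice of $a$. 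The first term is the delicate one: we intend to bound $\|QP_{n,i}\|$ away from $1$ using $QP_{n,i}Q\le Q/2$. This gives $\Tr[Q\overline\rho_i^{\otimes n}]\le c\,\Tr[Q\overline\rho_i^{\otimes n}]+o(1)$ with some $c<1$, and hence vanishing type-I error. If the constant does not come out below $1$, we will instead use a threshold $1/(2|J|)$ and adjust the constants, since the type-II bound only loses a polynomial factor $|J|$, which does not affect the exponent. Finally, Lemma~\ref{lem:B=D} converts this into $\underline D(\vec\rho\|\vec\sigma)\ge a$.
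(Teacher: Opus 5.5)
\textbf{Gap: the type-I error in the direct part does not vanish with a constant threshold.} Most of your proposal is sound. The reduction via Lemma~\ref{lem:B=D} works. The converse via $\Tr[(I-T_n)\rho_n]\ge p_i\Tr[(I-T_n)\overline\rho_i^{\otimes n}]$ works. The componentwise identity also works: GQSL tests diagonalized in $\varepsilon$, plus the weak converse against the minimizer $\sigma_n^*$, which avoids the minimax Lemma~\ref{exchange}. (The limit exists by subadditivity, not superadditivity.) The failure is exactly the step you flagged: with threshold $1/2$, the bound $\Tr[Q\overline\rho_i^{\otimes n}]\le c\,\Tr[Q\overline\rho_i^{\otimes n}]+o(1)$ with $c<1$ cannot be derived.

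Since $\Tr[(I-P_{n,i})\overline\rho_i^{\otimes n}]\to0$, gentle measurement gives $\Tr[P_{n,i}QP_{n,i}\overline\rho_i^{\otimes n}]=\Tr[Q\overline\rho_i^{\otimes n}]+o(1)$. So your factor-$2$ inequality reads $\Tr[Q\overline\rho_i^{\otimes n}]\le2\Tr[Q\overline\rho_i^{\otimes n}]+o(1)$, which is vacuous. The norm bound $\|QP_{n,i}\|^2=\|QP_{n,i}Q\|\le\tfrac12$ only gives $\Tr[Q\overline\rho_i^{\otimes n}]\le1+o(1)$. With a threshold $c$, the best you can get is $c+o(1)$.

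The information you use cannot do better: $P_{n,i}\le\sum_jP_{n,j}$, $Q(\sum_jP_{n,j})Q<\tfrac12Q$, and $\Tr[P_{n,i}\overline\rho_i^{\otimes n}]\to1$. Take $P_1=|0\rangle\langle0|$, $P_2=|v\rangle\langle v|$ with $\langle0|v\rangle=3/5$, and $\rho=|0\rangle\langle0|$. Then $P_1+P_2$ has eigenvalues $8/5$ and $2/5<1/2$, the latter with eigenvector $\phi\propto|0\rangle-|v\rangle$. So $Q=|\phi\rangle\langle\phi|$ and $\Tr[Q\rho]=1/5$, although $\Tr[P_1\rho]=1$. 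Your fallback threshold $1/(2|J|)$ fails the same way. For any fixed threshold $c\in(0,1)$, choosing $\langle0|v\rangle\in(1-c,1)$ gives $\Tr[Q\rho]=(1-\langle0|v\rangle)/2>0$.

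\textbf{Missing idea.} The threshold must tend to zero, with the resulting loss in the type-II error paid for by an exponential margin. The paper takes $T_{i,n}=\{\overline\rho_i^{\otimes n}-e^{n(a+2\delta)}\sigma_n\ge0\}$ and $T_n=\{\sum_iT_{i,n}\ge e^{-n\delta}I\}$.
\begin{itemize}
\item Type II: $T_n\le e^{n\delta}\sum_iT_{i,n}$, so $\Tr[T_n\sigma_n]\le|J|e^{-n(a+\delta)}$.
\item Type I: $Q_nT_{i,n}Q_n\le e^{-n\delta}Q_n$. Lemma~\ref{lem:Cauchy_Schwarz} then gives $\Tr[T_{i,n}\overline\rho_i^{\otimes n}]\le\bigl(\sqrt{e^{-n\delta}\Tr[Q_n\overline\rho_i^{\otimes n}]}+\sqrt{1-\Tr[Q_n\overline\rho_i^{\otimes n}]}\bigr)^2$. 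The left side tends to $1$ and $e^{-n\delta}\to0$, which forces $\Tr[Q_n\overline\rho_i^{\otimes n}]\to0$.
\end{itemize}
Alternatively, keep a constant threshold $c$ and prove $\limsup_n\Tr[Q_n\overline\rho_i^{\otimes n}]\le c$ (this does follow, e.g.\ from Lemma~\ref{lem:Cauchy_Schwarz} with the roles of $Q_n$ and $P_{n,i}$ swapped). Then diagonalize over $c\to0$, using right-continuity of $B(\varepsilon|\vec\rho\|\vec\sigma)$ at $\varepsilon=0$. Either device closes the gap, but your proposal supplies neither.
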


\begin{remark}[Extension to the sequence of sets with growing sizes]
    Our proof of this theorem holds even if we consider a corresponding statement with the finite set $J$ replaced with a sequence $\{J_n\}_{n=1}^\infty$ of potentially different sets with subexponential sizes $|J_n|=\exp[o(n)]$.
    In the information spectrum methods, on which our proof is based, the consistency condition between the elements in the sequences is unnecessary~\cite{han2003}.
    However, as shown by the following counterexample, when the size of these sets grows exponentially $|J_n|=\exp[\Theta(n)]$, the theorem no longer holds.
\end{remark}

\begin{example}[A counterexample when $|J|$ is exponentially large in $n$]\label{counterexample if J_n grows}
We show that the claimed lower bound can fail even in the classical case
when the number of mixture components is exponential in $n$.

Fix constants $d>R>0$. For each $n$, let
\begin{align}
J_n=\{1,\ldots,M_n\},
\qquad
M_n=e^{nR}.
\end{align}
For each $i\in J_n$, define the component distribution
\begin{align}
\rho_{i,n}\coloneqq \delta_i,
\end{align}
where $\delta_i$ is the point mass at $i$, i.e., $\rho_{i,n}(i)=1$, and $\rho_{i,n}(i')=0$ for every $i'\neq i$. Define the reference distribution
$\sigma_n$ by
\begin{align}
\sigma_n(i)=e^{-nd}
\qquad (i\in J_n),
\end{align}
and put the remaining probability mass
\begin{align}
1-\sum_{i\in J_n}\sigma_n(i)
=
1-M_ne^{-nd}
=
1-e^{-n(d-R)}
\end{align}
on one additional point outside $J_n$. Since $d>R$, this indeed defines
a probability distribution for all sufficiently large $n$.

For each component $i\in J_n$, we have
\begin{align}
\frac{\rho_{i,n}(i)}{\sigma_n(i)}
=
e^{nd}.
\end{align}
Hence, for each $i\in J_n$, we have
\begin{align}
\underline{D}(\vec{\rho}_i\|\vec{\sigma})=d,
\end{align}
and therefore
\begin{align}
\min_{i\in J_n}\underline{D}(\vec{\rho}_i\|\vec{\sigma})=d.
\end{align}

Now consider the uniform mixture
\begin{align}
\rho_n
:=
\frac1{M_n}\sum_{i=1}^{M_n}\rho_{i,n}
=\frac1{M_n}\sum_{i=1}^{M_n}\delta_i.
\end{align}
Then, for every $i\in J_n$,
\begin{align}
\rho_n(i)=\frac1{M_n}=e^{-nR}.
\end{align}
Thus,
\begin{align}
\frac{\rho_n(i)}{\sigma_n(i)}
=
\frac{e^{-nR}}{e^{-nd}}
=
e^{n(d-R)}.
\end{align}
It follows that
\begin{align}
\underline{D}(\vec{\rho}\|\vec{\sigma})=d-R.
\end{align}
Consequently,
\begin{align}
\underline{D}(\vec{\rho}\|\vec{\sigma})
=
d-R
<
d
=
\min_{i\in J_n}\underline{D}(\vec{\rho}_i\|\vec{\sigma}).
\end{align}
Therefore, for an exponentially large mixture, the bound
\begin{align}
\underline{D}(\vec{\rho}\|\vec{\sigma})
\ge
\min_i \underline{D}(\vec{\rho}_i\|\vec{\sigma})
\end{align}
does not hold in general, even classically.
\end{example}

\subsection{Analysis of the spectral inf-divergence rate}
\label{subsec: analysis of the spectral inf-divergence rate}

In this section, we prove the following proposition, which is key to the proof of Theorem \ref{Main theorem}.
The proposition shows that the spectral inf-divergence rate for mixed sources of quantum states is determined by the worst case in the mixture.

\begin{proposition}[Distribution of the spectral inf-divergence rate over mixed sources]\label{key proposition to the main theorem}
    Suppose that $\vec{\rho}$ is a sequence of mixed sources of $\vec{\rho}_i=\left\{\overline{\rho}_i^{\otimes n}\right\}_{n=1}^{\infty}$ defined on $\vec{\mathcal{H}}$ such that
    \begin{align}
        \rho_n = \sum_{i\in J} p_i \overline{\rho}_i^{\otimes n},
    \end{align}
    where $J$ is a finite set, $p_i>0$ for all $i\in J$, and $\sum_{i\in J} p_i = 1$.
    Let $\vec{\sigma}=\left\{\sigma_n\right\}_{n=1}^{\infty}$ be a sequence of density operators defined on $\vec{\mathcal{H}}$.
    Then, it holds that
    \begin{align}
        \underline{D}\left( \vec{\rho} \| \vec{\sigma} \right)
        =\label{main proposition eq}
        \min_{i\in J}
        \left\{
            \underline{D}\left( \vec{\rho}_i \| \vec{\sigma} \right)
        \right\}.
    \end{align}
\end{proposition}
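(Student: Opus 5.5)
The plan is to work with the operational quantity $B(\vec{\rho}\|\vec{\sigma})$ and not with the projections in Definition~\ref{def D} directly. By Lemma~\ref{lem:B=D} (with $\varepsilon=0$), $\underline{D}(\vec{\rho}\|\vec{\sigma})=B(\vec{\rho}\|\vec{\sigma})$ and $\underline{D}(\vec{\rho}_i\|\vec{\sigma})=B(\vec{\rho}_i\|\vec{\sigma})$ for every $i$. So it suffices to prove $B(\vec{\rho}\|\vec{\sigma})=\min_{i\in J}B(\vec{\rho}_i\|\vec{\sigma})$. The advantage is that we may use arbitrary test operators $0\le T_n\le I_n$, not only the specific spectral projections $\{\rho_n-e^{na}\sigma_n\ge 0\}$ of the mixture. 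Those projections are hard to relate to the componentwise ones because of non-commutativity.

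\emph{Converse ($\le$).} Take any sequence $\vec{T}$ with $\alpha_n(T_n)=\Tr[(I_n-T_n)\rho_n]\to 0$. We have $\Tr[(I_n-T_n)\rho_n]=\sum_i p_i\Tr[(I_n-T_n)\overline{\rho}_i^{\otimes n}]$, every summand is nonnegative, and $p_i>0$. Hence each $\Tr[(I_n-T_n)\overline{\rho}_i^{\otimes n}]\to 0$. Thus $\vec{T}$ is admissible for each $\vec{\rho}_i$ with the same type-II error, which gives $B(\vec{\rho}\|\vec{\sigma})\le B(\vec{\rho}_i\|\vec{\sigma})$ for all $i$.

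\emph{Direct part ($\ge$).} This is the main step. Fix $a<\min_i\underline{D}(\vec{\rho}_i\|\vec{\sigma})$ and set $P_{i,n}\coloneqq\{\overline{\rho}_i^{\otimes n}-e^{na}\sigma_n\ge 0\}$. By definition, $\Tr[(I_n-P_{i,n})\overline{\rho}_i^{\otimes n}]\to 0$ for every $i$. The standard bound $\Tr[P_{i,n}\sigma_n]\le e^{-na}\Tr[P_{i,n}\overline{\rho}_i^{\otimes n}]\le e^{-na}$ also holds. Let $A_n\coloneqq\sum_{i\in J}P_{i,n}$, so that $0\le A_n\le |J|I_n$. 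Choose a threshold $c_n\in(0,1)$ with $c_n\to 0$ subexponentially, e.g.\ $c_n=1/n$, and define the test $T_n\coloneqq\{A_n\ge c_n\}$.

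Type-II error: $T_n$ is a spectral projection of $A_n$, so it commutes with $A_n$ and satisfies $T_n\le A_n/c_n$. Hence
\begin{align}
\Tr[T_n\sigma_n]\le c_n^{-1}\sum_i\Tr[P_{i,n}\sigma_n]\le |J|\,n\,e^{-na},
\end{align}
and the exponent is at least $a$.

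Type-I error for each component: write $P=P_{i,n}$, $P^\perp=I_n-P$, $\rho=\overline{\rho}_i^{\otimes n}$. I will use the operator inequality from Cauchy–Schwarz, $(X+Y)^\dagger(X+Y)\le 2X^\dagger X+2Y^\dagger Y$ (the paper's Lemma~\ref{lem:Cauchy_Schwarz}), in the form $\rho\le 2P\rho P+2P^\perp\rho P^\perp$. This yields
\begin{align}
\Tr[(I_n-T_n)\rho]\le 2\Tr[P(I_n-T_n)P\rho]+2\Tr[P^\perp\rho].
\end{align}
The second term tends to $0$. For the first, $(I_n-T_n)$ commutes with $A_n$ and $A_n(I_n-T_n)\le c_n(I_n-T_n)$, so $(I_n-T_n)P(I_n-T_n)\le (I_n-T_n)A_n(I_n-T_n)\le c_n I_n$. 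Therefore $\|P(I_n-T_n)P\|=\|(I_n-T_n)P(I_n-T_n)\|\le c_n$, and the first term is at most $2c_n\to 0$.

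Consequently $\alpha_n(T_n)=\sum_ip_i\Tr[(I_n-T_n)\overline{\rho}_i^{\otimes n}]\to 0$, so $B(\vec{\rho}\|\vec{\sigma})\ge a$. Letting $a\uparrow\min_i\underline{D}(\vec{\rho}_i\|\vec{\sigma})$ (with the obvious modification if the minimum is $+\infty$) completes the proof. The delicate point is this type-I control under non-commuting $P_{i,n}$. The Cauchy–Schwarz splitting together with the norm identity $\|PQP\|=\|QPQ\|$ for projections $P$ and $Q$ (here $Q=I_n-T_n$) is what I expect to make it work. It also shows why subexponential $|J|$ (or $|J_n|$) suffices.
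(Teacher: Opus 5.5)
Your proof is correct. Its direct part uses the paper's construction: threshold the sum $A_n$ of the componentwise spectral projections, and bound the type-II error through $T_n\le A_n/c_n$. Several steps differ, however, and your converse takes a different route.

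For the converse, the paper works directly with spectral projections. It takes $\{\rho_n-e^{na}\sigma_n\ge 0\}$ and observes that its type-I error vanishes on every component and its type-II error is at most $e^{-na}$. It then transfers this to $\{\overline{\rho}_i^{\otimes n}-e^{n(a-\delta)}\sigma_n\ge 0\}$ via the Neyman--Pearson inequality. Your argument is shorter: any test admissible for the mixture is admissible for each component because $p_i>0$. The price is that you invoke Lemma~\ref{lem:B=D} in both directions, whereas the paper uses it only in the direct part; its converse in effect re-derives the easy half of that lemma.

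In the direct part, your subexponential threshold $c_n=1/n$ loses only a polynomial factor. This avoids the $a+2\delta$ slack and the $e^{-n\delta}$ threshold that the paper needs.

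For the type-I control, the paper applies Lemma~\ref{lem:Cauchy_Schwarz} with $T=T_{i,n}$ (your $P_{i,n}$) and $\Pi=Q_n=I_n-T_n$. This bounds $\Tr[T_{i,n}\overline{\rho}_i^{\otimes n}]$ from above by $(\sqrt{c\,x_n}+\sqrt{1-x_n})^2$, where $x_n=\Tr[Q_n\overline{\rho}_i^{\otimes n}]$, and the paper then infers $x_n\to 0$ from $\Tr[T_{i,n}\overline{\rho}_i^{\otimes n}]\to 1$. You instead split $\overline{\rho}_i^{\otimes n}$ relative to $P_{i,n}$ using $\rho\le 2P\rho P+2P^\perp\rho P^\perp$ and the identity $\|PQP\|=\|QPQ\|$. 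This gives the explicit bound $\Tr[Q_n\overline{\rho}_i^{\otimes n}]\le 2c_n+2\Tr[(I_n-P_{i,n})\overline{\rho}_i^{\otimes n}]$, which is linear in the component's type-I error.

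One small correction: the inequality $(X+Y)^\dagger(X+Y)\le 2X^\dagger X+2Y^\dagger Y$ is not the paper's Lemma~\ref{lem:Cauchy_Schwarz}. That lemma is a trace bound under the hypothesis $\Pi T\Pi\le c\Pi$. Your inequality is the standard consequence of $(X-Y)^\dagger(X-Y)\ge 0$, and that is all your argument needs.
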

\begin{proof}
We divide the proof of this proposition into the two inequalities.
Proposition~\ref{main proposition: direct} shows the direct part
\begin{align}
    \underline{D}\left( \vec{\rho} \| \vec{\sigma} \right)
    \ge\label{main proposition: direct eq}
    \min_{i\in J}
    \left\{
        \underline{D}\left( \vec{\rho}_i \| \vec{\sigma} \right)
    \right\}.
\end{align}
Proposition~\ref{main proposition: converse} shows the converse part
\begin{align}
    \underline{D}\left( \vec{\rho} \| \vec{\sigma} \right)
    \le\label{main proposition: converse eq}
    \min_{i\in J}
    \left\{
        \underline{D}\left( \vec{\rho}_i \| \vec{\sigma} \right)
    \right\}.
\end{align}
As a whole,~ \eqref{main proposition: direct eq} and~\eqref{main proposition: converse eq} imply \eqref{main proposition eq}.
\end{proof}

The proof of this proposition necessitates the following lemma.
It is generally difficult to establish a tight operator inequality between projections, but this lemma shows a useful relation in terms of their traces.

\begin{lemma}[Operator inequality from Cauchy-Schwarz]
\label{lem:Cauchy_Schwarz}
Let $T$ be any operator satisfying $0\le T\le I$, $\Pi$ be any projection, and $c\in[0,1]$ be any real number satisfying
\begin{align}
\Pi T\Pi\le c\Pi.
\end{align}
Then, for any state $\rho$, we have
\begin{align}
\Tr[T\rho]
\le
\left(\sqrt{c\,\Tr[\Pi \rho]}+\sqrt{1-\Tr[\Pi \rho]}\right)^2.
\end{align}
\end{lemma}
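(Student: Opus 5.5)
Write $P \coloneqq \Tr[\Pi\rho]\in[0,1]$ and split $T$ along $\Pi$ and $\Pi^\perp \coloneqq I-\Pi$. The plan is to read $\Tr[T\rho]$ as a squared norm in a Hilbert space and then apply the triangle and Cauchy--Schwarz inequalities. First purify $\rho$. Equivalently, work in the GNS-type inner product $\langle X,Y\rangle \coloneqq \Tr[X^\dagger Y\rho]$. With $\sqrt{T}$ the positive square root,
\begin{align}
\Tr[T\rho] = \Tr[\sqrt{T}\sqrt{T}\rho] = \bigl\|\sqrt{T}\,|\psi\rangle\bigr\|^2,
\end{align}
where $|\psi\rangle$ is a purification of $\rho$ and $\sqrt{T}$ acts on the system factor. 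Insert $I=\Pi+\Pi^\perp$ to get $\sqrt{T}|\psi\rangle = \sqrt{T}\Pi|\psi\rangle + \sqrt{T}\Pi^\perp|\psi\rangle$. The triangle inequality then gives
\begin{align}
\sqrt{\Tr[T\rho]} \le \bigl\|\sqrt{T}\Pi|\psi\rangle\bigr\| + \bigl\|\sqrt{T}\Pi^\perp|\psi\rangle\bigr\|.
\end{align}

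Next, bound the two terms separately. For the first term,
\begin{align}
\bigl\|\sqrt{T}\Pi|\psi\rangle\bigr\|^2 = \langle\psi|\Pi T\Pi|\psi\rangle \le c\,\langle\psi|\Pi|\psi\rangle = c\,\Tr[\Pi\rho],
\end{align}
which uses the hypothesis $\Pi T\Pi\le c\Pi$. For the second term, $T\le I$ gives
\begin{align}
\bigl\|\sqrt{T}\Pi^\perp|\psi\rangle\bigr\|^2 = \langle\psi|\Pi^\perp T\Pi^\perp|\psi\rangle \le \langle\psi|\Pi^\perp|\psi\rangle = 1-\Tr[\Pi\rho].
\end{align}
Squaring the triangle inequality then yields
\begin{align}
\Tr[T\rho]\le\left(\sqrt{c\,\Tr[\Pi\rho]}+\sqrt{1-\Tr[\Pi\rho]}\right)^2.
\end{align}

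Expanded, this argument is the same as writing $\Tr[T\rho]$ as
\begin{align}
\Tr[\Pi T\Pi\rho]+\Tr[\Pi^\perp T\Pi^\perp\rho]+2\,\mathrm{Re}\,\Tr[\Pi T\Pi^\perp\rho]
\end{align}
and bounding the cross term by Cauchy--Schwarz for the positive semidefinite form $(X,Y)\mapsto\Tr[X^\dagger T Y\rho]$. If the purification is to be avoided, that is the route I would write out. The only point needing care, and so the main obstacle, is justifying this Cauchy--Schwarz step: the sesquilinear form $(X,Y)\mapsto \Tr[X^\dagger T Y\rho]$ must be positive semidefinite. This holds because $T\ge 0$ and $\rho\ge 0$, so $\Tr[X^\dagger T X\rho]=\Tr[\rho^{1/2}X^\dagger T X\rho^{1/2}]\ge 0$. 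Everything else is routine, and the assumption $c\in[0,1]$ is only needed to keep the right-hand side meaningful.
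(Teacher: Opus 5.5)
Your proof is correct and matches the paper's argument: the paper splits $\Tr[T\rho]$ into the four blocks along $\Pi$ and $I-\Pi$, bounds the diagonal blocks by $c\,\Tr[\Pi\rho]$ and $1-\Tr[\Pi\rho]$, and controls the off-diagonal blocks by Cauchy--Schwarz in the form $|\Tr[\sqrt{\rho}\Pi\sqrt{T}\sqrt{T}\Pi'\sqrt{\rho}]|\le\sqrt{\Tr[\Pi T\Pi\rho]\Tr[\Pi'T\Pi'\rho]}$, which is exactly the expanded version you describe. Your purification-plus-triangle-inequality phrasing is a compact repackaging of the same computation, and you are right that only $c\ge 0$ is actually needed.
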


\begin{proof}
With $\Pi'\coloneqq I-\Pi$, we decompose
\begin{align}
\Tr[T\rho]
&=
\Tr[\Pi T\Pi\rho]
+
\Tr[\Pi' T\Pi'\rho]
+
\Tr[\Pi T\Pi'\rho]
+
\Tr[\Pi' T\Pi\rho].
\end{align}
The first term is bounded as
\begin{align}
\Tr[\Pi T\Pi\rho]\le c\,\Tr[\Pi\rho],
\end{align}
because $\Pi T\Pi\le c\Pi$.
The second term is bounded as
\begin{align}
\Tr[\Pi' T\Pi'\rho]\le \Tr[\Pi' \rho]=1-\Tr[\Pi\rho],
\end{align}
because $0\le T\le I$.
For the off-diagonal terms, the Cauchy-Schwarz inequality gives
\begin{align}
\left|\Tr[\Pi T\Pi'\rho]\right|=\left|\Tr[\sqrt{\rho}\Pi\sqrt{T}\,\sqrt{T}\Pi'\sqrt{\rho}]\right|
\le
\sqrt{
\Tr[\Pi T\Pi\rho]\,
\Tr[\Pi' T\Pi'\rho]
}.
\end{align}
The same bound holds for $\Tr[\Pi' T\Pi\rho]$.
Therefore, we obtain
\begin{align}
\Tr[T\rho]
&\le
c\,\Tr[\Pi\rho]
+
\left(1-\Tr[\Pi\rho]\right)
+
2\sqrt{
c\,\Tr[\Pi\rho]
\left(1-\Tr[\Pi\rho]\right)
} \\
&=
\left(\sqrt{c\,\Tr[\Pi\rho]}+\sqrt{1-\Tr[\Pi\rho]}\right)^2.
\end{align}
\end{proof}

Using this lemma, the direct part of Proposition~\ref{key proposition to the main theorem} is shown as follows.

\begin{proposition}[Direct part]\label{main proposition: direct}
    Suppose that $\vec{\rho}$ is a sequence of mixed sources of $\vec{\rho}_i=\left\{\overline{\rho}_i^{\otimes n}\right\}_{n=1}^{\infty}$ defined on $\vec{\mathcal{H}}$ such that
    \begin{align}
        \rho_n = \sum_{i\in J} p_i \overline{\rho}_i^{\otimes n},
    \end{align}
    where $J$ is a finite set, $p_i>0$ for all $i\in J$, and $\sum_{i\in J} p_i = 1$.
    Let $\vec{\sigma}=\left\{\sigma_n\right\}_{n=1}^{\infty}$ be a sequence of density operators defined on $\vec{\mathcal{H}}$.
    Then, it holds that
    \begin{align}
        \underline{D}\left( \vec{\rho} \| \vec{\sigma} \right)
        \ge
        \min_{i\in J}
        \left\{
            \underline{D}\left( \vec{\rho}_i \| \vec{\sigma} \right)
        \right\}.
    \end{align}
\end{proposition}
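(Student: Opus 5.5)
The plan is to pass through the operational quantity using Lemma~\ref{lem:B=D}. Since $\underline{D}(\vec{\rho}\|\vec{\sigma})=B(\vec{\rho}\|\vec{\sigma})$, it suffices to show $B(\vec{\rho}\|\vec{\sigma})\ge a$ for every real $a<\min_{i\in J}\underline{D}(\vec{\rho}_i\|\vec{\sigma})$. I will do this by exhibiting an explicit test sequence with asymptotically vanishing type-I error and type-II exponent at least $a$.

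Fix such an $a$ and define the componentwise spectral projections $P_{i,n}\coloneqq\{\overline{\rho}_i^{\otimes n}-e^{na}\sigma_n\ge 0\}$. Because $a<\underline{D}(\vec{\rho}_i\|\vec{\sigma})$, monotonicity in $a$ in Definition~\ref{def D} gives $\Tr[(I_n-P_{i,n})\overline{\rho}_i^{\otimes n}]\to 0$ for every $i\in J$. The standard bound $P_{i,n}(\overline{\rho}_i^{\otimes n}-e^{na}\sigma_n)P_{i,n}\ge 0$ gives $\Tr[P_{i,n}\sigma_n]\le e^{-na}$.

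Next I combine these projections by thresholding their sum. Let $Q_n\coloneqq\sum_{i\in J}P_{i,n}$, and choose a sequence $\delta_n\in(0,1]$ with $\delta_n\to 0$ subexponentially, e.g.\ $\delta_n=1/n$. Set $T_n\coloneqq\{Q_n-\delta_n I_n\ge 0\}$.

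For the type-II error, $T_n\le Q_n/\delta_n$ implies
\begin{align}
\Tr[T_n\sigma_n]\le \frac{|J|\,e^{-na}}{\delta_n},
\end{align}
so $\liminf_{n\to\infty}-\frac1n\log\Tr[T_n\sigma_n]\ge a$.

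For the type-I error, linearity gives $\Tr[(I_n-T_n)\rho_n]=\sum_i p_i\Tr[(I_n-T_n)\overline{\rho}_i^{\otimes n}]$, so it suffices to show each summand tends to $0$. Here I apply Lemma~\ref{lem:Cauchy_Schwarz} with $T=I_n-T_n$ (a projection), $\Pi=P_{i,n}$, and $c=\delta_n$. The key operator step is to verify $P_{i,n}(I_n-T_n)P_{i,n}\le\delta_n P_{i,n}$:
\begin{itemize}
\item $I_n-T_n$ is the spectral projection of $Q_n$ onto eigenvalues $<\delta_n$, so $(I_n-T_n)Q_n(I_n-T_n)\le\delta_n(I_n-T_n)$.
\item Since $0\le P_{i,n}\le Q_n$, this yields $\|P_{i,n}(I_n-T_n)\|^2=\|(I_n-T_n)P_{i,n}(I_n-T_n)\|\le\delta_n$.
\item Therefore $\|P_{i,n}(I_n-T_n)P_{i,n}\|\le\delta_n$, and because this operator is supported in the range of $P_{i,n}$, we get $P_{i,n}(I_n-T_n)P_{i,n}\le\delta_n P_{i,n}$.
\end{itemize}
Lemma~\ref{lem:Cauchy_Schwarz} then gives
\begin{align}
\Tr[(I_n-T_n)\overline{\rho}_i^{\otimes n}]\le\left(\sqrt{\delta_n}+\sqrt{\Tr[(I_n-P_{i,n})\overline{\rho}_i^{\otimes n}]}\right)^2\longrightarrow 0 .
\end{align}
Hence $\limsup_{n\to\infty}\alpha_n(T_n)=0$, so $B(\vec{\rho}\|\vec{\sigma})\ge a$, and letting $a\uparrow\min_i\underline{D}(\vec{\rho}_i\|\vec{\sigma})$ finishes the proof.

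I expect the main obstacle to be this non-commutative step: the $P_{i,n}$ do not commute, so $T_n$ need not dominate any individual $P_{i,n}$. That is exactly why the trace-level Lemma~\ref{lem:Cauchy_Schwarz} is needed instead of an operator ordering. The norm identity $\|AB\|^2=\|BA^\dagger AB\|$ is what I would rely on to turn the spectral cutoff on $Q_n$ into the compression bound for each $P_{i,n}$.
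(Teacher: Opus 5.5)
Your proposal is correct and follows essentially the paper's proof: threshold the sum of the componentwise spectral projections, bound the type-II error via $T_n\le Q_n/\delta_n$, control each component's type-I error with Lemma~\ref{lem:Cauchy_Schwarz}, and conclude through Lemma~\ref{lem:B=D}. The differences are minor. The paper uses an exponential threshold $e^{-n\delta}$ with components at rate $a+2\delta$ instead of your $\delta_n=1/n$. It also applies the lemma with $T=T_{i,n}$ and $\Pi=I_n-T_n$, where the compression bound follows directly from $T_{i,n}\le A_n$, whereas your swapped roles need the extra identity $\|P_{i,n}(I_n-T_n)P_{i,n}\|=\|(I_n-T_n)P_{i,n}(I_n-T_n)\|$.
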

\begin{proof}
Fix arbitrary $a$ satisfying
\begin{align}
\label{eq:a_condition}
    a<\min_i\left\{\underline{D}(\vec{\rho}_i\|\vec{\sigma})\right\}.
\end{align}
It suffices to show that
\begin{align}
\underline{D}(\vec{\rho}\|\vec{\sigma})\geq a.
\end{align}

Choose $\delta>0$ such that
\begin{align}
a+2\delta<\min_i\left\{\underline{D}(\vec{\rho}_i\|\vec{\sigma})\right\}.
\end{align}
Since $a+2\delta<\underline{D}(\vec{\rho}_i\|\vec{\sigma})$ for every $i$, by the definition of
$\underline{D}(\vec{\rho}_i\|\vec{\sigma})$, the test operator
\begin{align}
T_{i,n}\coloneqq
\{\overline{\rho}_i^{\otimes n}-e^{n(a+2\delta)}\sigma_n\geq 0\}
\end{align}
satisfies
\begin{align}
    \label{eq:T_n_convergence}
\lim_{n\to\infty}\Tr\left[(I-T_{i,n})\overline{\rho}_i^{\otimes n}\right]= 0.
\end{align}
Moreover, due to
\begin{align}
\Tr[T_{i,n}(\overline{\rho}_i^{\otimes n}-e^{n(a+2\delta)}\sigma_n)]\geq 0,
\end{align}
we have
\begin{align}
\Tr\!\left[T_{i,n}\sigma_n\right]
\le
e^{-n(a+2\delta)}
\Tr\!\left[T_{i,n}\overline{\rho}_i^{\otimes n}\right]
\le
e^{-n(a+2\delta)}.
\end{align}

We now combine all the tests.
Define
\begin{align}
A_n\coloneqq \sum_{i\in J} T_{i,n},
\qquad
T_n\coloneqq \{A_n\ge e^{-n\delta}I\}.
\end{align}
Then, it holds that $0\le T_n\le I$, so $T_n$ is a valid test operator.
For the spectral decomposition
\begin{align}
    A_n=\sum_j\lambda_j\Pi_j,
\end{align}
we have
\begin{align}
T_n=\sum_{j:\lambda_j\geq e^{-n\delta}}\Pi_j
\le e^{n\delta}A_n.
\end{align}
Thus, we obtain
\begin{align}
\Tr\!\left[T_n\sigma_n\right]
&\le
e^{n\delta}\Tr\!\left[A_n\sigma_n\right] \\
&=
e^{n\delta}
\left(
\sum_{i\in J}
\Tr\!\left[T_{i,n}\sigma_n\right]
\right) \\
&\le
e^{n\delta}
\left(
\sum_{i\in J}
e^{-n(a+2\delta)}
\right)\\
&=
|J|e^{-n(a+\delta)}.
\end{align}
Thus
\begin{align}
        \label{eq:T_n_II}
\liminf_{n\to\infty}
-\frac1n\log \Tr\!\left[T_n\sigma_n\right]
\ge (a+\delta)-\liminf_{n\to\infty}\frac{\log|J|}{n}=a+\delta.
\end{align}

It remains to show that $T_n$ accepts each component state with probability tending to one.
Let
\begin{align}
Q_n:=I_n-T_n=\{A_n<e^{-n\delta}I\}.
\end{align}
Since $Q_n$ is the spectral projection of $A_n$ corresponding to eigenvalues smaller than
$e^{-n\delta}$, we have
\begin{align}
Q_n A_n Q_n\le e^{-n\delta}Q_n.
\end{align}
Moreover, since $0\le T_{i,n}\le A_n$, it follows that
\begin{align}
Q_nT_{i,n}Q_n\le Q_nA_nQ_n\le e^{-n\delta}Q_n.
\end{align}
Then, using Lemma~\ref{lem:Cauchy_Schwarz} with
\begin{align}
T=T_{i,n},
\qquad
\Pi=Q_n,
\qquad
c=e^{-n\delta},
\qquad
\rho=\overline{\rho}_i^{\otimes n},
\end{align}
we obtain
\begin{align}
\Tr[T_{i,n}\overline{\rho}_i^{\otimes n}]
\le
\left(
\sqrt{
e^{-n\delta}\Tr[\overline{\rho}_i^{\otimes n}Q_n]
}
+
\sqrt{
1-\Tr[\overline{\rho}_i^{\otimes n}Q_n]
}
\right)^2.
\end{align}
On the other hand, due to~\eqref{eq:T_n_convergence}, we have
\begin{align}
\lim_{n\to\infty}\Tr[T_{i,n}\overline{\rho}_i^{\otimes n}]= 1,
\end{align}
and hence,
\begin{align}
\lim_{n\to\infty}\left(
\sqrt{
e^{-n\delta}\Tr[Q_n\overline{\rho}_i^{\otimes n}]
}
+
\sqrt{
1-\Tr[Q_n\overline{\rho}_i^{\otimes n}]
}
\right)^2=1.
\end{align}
Therefore, we obtain
\begin{align}
    \lim_{n\to\infty}\Tr[(I_n-T_n)\overline{\rho}_i^{\otimes n}]=\lim_{n\to\infty}\Tr[Q_n\overline{\rho}_i^{\otimes n}]=0.
\end{align}
Consequently,
\begin{align}
    \label{eq:T_n_I}
\lim_{n\to\infty}\Tr\left[(I_n-T_n)\rho_n\right]
=
\sum_{i\in J} p_i \lim_{n\to\infty}\Tr\!\left[(I_n-T_n)\overline{\rho}_i^{\otimes n}\right]
=0.
\end{align}

Due to~\eqref{eq:T_n_II} and~\eqref{eq:T_n_I}, the test operator $T_n$ satisfies
\begin{align}
\lim_{n\to\infty}\Tr\!\left[(I_n-T_n)\rho_n\right]= 0,
\qquad
\liminf_{n\to\infty}
-\frac1n\log \Tr\!\left[T_n\sigma_n\right]\ge a+\delta.
\end{align}
By the hypothesis-testing characterization of
$\underline{D}(\vec{\rho}\|\vec{\sigma})$, equivalently by the
Neyman-Pearson optimality of the spectral tests, we have
\begin{align}
a\le \underline{D}(\vec{\rho}\|\vec {\sigma}).
\end{align}
Since $a$ in~\eqref{eq:a_condition} was arbitrary, we conclude that
\begin{align}
\underline{D}(\vec{\rho}\|\vec{\sigma})
\ge 
\min_i\left\{
\underline{D}(\vec{\rho}_i\|\vec{\sigma})
\right\}.
\end{align}
\end{proof}

The converse part of Proposition~\ref{key proposition to the main theorem} is shown as follows.

\begin{proposition}[Converse part]\label{main proposition: converse}
    Suppose that $\vec{\rho}$ is a sequence of mixed sources of $\vec{\rho}_i=\left\{\overline{\rho}_i^{\otimes n}\right\}_{n=1}^{\infty}$ defined on $\vec{\mathcal{H}}$ such that
    \begin{align}
        \rho_n = \sum_{i\in J} p_i \overline{\rho}_i^{\otimes n},
    \end{align}
    where $J$ is a finite set, $p_i>0$ for all $i\in J$, and $\sum_{i\in J} p_i = 1$.
    Let $\vec{\sigma}=\left\{\sigma_n\right\}_{n=1}^{\infty}$ be a sequence of density operators defined on $\vec{\mathcal{H}}$.
    Then, it holds that
    \begin{align}
        \underline{D}\left( \vec{\rho} \| \vec{\sigma} \right)
        \le
        \min_{i\in J}
        \left\{
            \underline{D}\left( \vec{\rho}_i \| \vec{\sigma} \right)
        \right\}.
    \end{align}
\end{proposition}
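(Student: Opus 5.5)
We prove the converse through the operational characterization in Lemma~\ref{lem:B=D}. Since $\underline{D}(\vec{\rho}\|\vec{\sigma})=B(\vec{\rho}\|\vec{\sigma})$ and $\underline{D}(\vec{\rho}_i\|\vec{\sigma})=B(\vec{\rho}_i\|\vec{\sigma})$ for every $i$, it suffices to show $B(\vec{\rho}\|\vec{\sigma})\le B(\vec{\rho}_i\|\vec{\sigma})$ for each fixed $i\in J$. The idea is that a good test for the mixture is automatically a good test for each of its components.

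Fix $i\in J$ and any sequence of tests $\vec{T}=\{T_n\}$ with $\limsup_{n\to\infty}\Tr[(I_n-T_n)\rho_n]=0$. Every summand in $\rho_n=\sum_{j}p_j\overline{\rho}_j^{\otimes n}$ is positive, so
\begin{align}
p_i\Tr[(I_n-T_n)\overline{\rho}_i^{\otimes n}]\le \Tr[(I_n-T_n)\rho_n].
\end{align}
Because $p_i>0$ is a fixed constant independent of $n$, this gives $\limsup_{n\to\infty}\Tr[(I_n-T_n)\overline{\rho}_i^{\otimes n}]\le p_i^{-1}\cdot 0=0$. Hence $\vec{T}$ is also an admissible test for the pair $(\vec{\rho}_i,\vec{\sigma})$ in Definition~\ref{def:B} with $\varepsilon=0$. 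Its type-II error $\Tr[T_n\sigma_n]$ is the same in both problems, so its exponent $\liminf_{n\to\infty}-\frac1n\log\Tr[T_n\sigma_n]$ is bounded above by $B(\vec{\rho}_i\|\vec{\sigma})$.

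Taking the supremum over all admissible $\vec{T}$ gives $B(\vec{\rho}\|\vec{\sigma})\le B(\vec{\rho}_i\|\vec{\sigma})$. Since $i$ was arbitrary and $J$ is finite, minimizing over $i$ yields the claim. An equivalent route works directly on Definition~\ref{def D}: take the spectral projection $\{\rho_n-e^{na}\sigma_n\ge0\}$ as a test and invoke Neyman--Pearson optimality, but going through Lemma~\ref{lem:B=D} is cleaner.

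The only point needing care is that the weights must stay bounded below uniformly in $n$, so that $p_i^{-1}$ does not destroy the vanishing of the type-I error. This holds because $J$ and the $p_i$ are fixed. Under the growing-$J_n$ extension mentioned in the remark, the same inequality would require $p_i$ not to vanish too fast, and the argument would be adapted componentwise. There is no genuine obstacle here; all the difficulty lies in the direct part.
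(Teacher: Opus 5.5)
Your proof is correct and rests on the same observation as the paper's: since $p_i>0$ is fixed, any test with vanishing type-I error on $\rho_n$ also has vanishing type-I error on each $\overline{\rho}_i^{\otimes n}$, with an unchanged type-II error. The only difference is that you invoke Lemma~\ref{lem:B=D} as a black box for both the mixture and the component. The paper instead inlines exactly the two pieces you need: it uses the spectral test $\{\rho_n-e^{na}\sigma_n\ge 0\}$ as the admissible test, and the Neyman--Pearson inequality to pass back to $\underline{D}(\vec{\rho}_i\|\vec{\sigma})$ at rate $a-\delta$, which is the alternative route you mention.
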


\begin{proof}
    Let $a$ be any real number satisfying
\begin{align}
    a<\underline{D}(\vec{\rho}\|\vec{\sigma}).
\end{align}
We write
\begin{align}
T_n\coloneqq \{\rho_n-e^{na}\sigma_n\geq 0\}.
\end{align}
By the definition of $\underline{D}(\vec{\rho}\|\vec{\sigma})$, i.e.,
\begin{align}
    \lim_{n\to\infty}\Tr[(I_n-T_n)\rho_n]=\sum_i p_i\lim_{n\to\infty}\Tr[(I_n-T_n)\overline{\rho}_i^{\otimes n}]=0,
\end{align}
the type-I error satisfies, for every $i$ such that $p(i)>0$,
\begin{align}
    \label{eq:assumption_T_n}
    \lim_{n\to\infty}\Tr[(I_n-T_n)\overline{\rho}_i^{\otimes n}]=0.
\end{align}
On the other hand, due to
\begin{align}
\Tr\!\left[T_n(\rho_n-e^{na}\sigma_n)\right]\ge 0,
\end{align}
the type-II error satisfies
\begin{align}
    \label{eq:T_n_sigma_n_bound}
\Tr\!\left[T_n \sigma_n\right]
\le e^{-na}\Tr\!\left[T_n \rho_n\right]
\le e^{-na}.
\end{align}

Fix $i\in J$ and $\delta>0$. Define
\begin{align}
T_{i,n}^{(\delta)}
\coloneqq\{\overline{\rho}_i^{\otimes n}-e^{n(a-\delta)}\sigma_n\geq 0\}.
\end{align}
For every operator $T$ with $0\le T\le I$, it holds that
\begin{align}
    \Tr[T_{i,n}^{(\delta)}(\overline{\rho}_i^{\otimes n}-e^{n(a-\delta)}\sigma_n)]\geq
    \Tr[T(\overline{\rho}_i^{\otimes n}-e^{n(a-\delta)}\sigma_n)],
\end{align}
and hence, we obtain the quantum Neyman--Pearson inequality for the pair
$(\overline{\rho}_i^{\otimes n},\sigma_n)$:
\begin{align}
\Tr\!\left[\left(I-T_{i,n}^{(\delta)}\right)\overline{\rho}_i^{\otimes n}\right]
+
e^{n(a-\delta)}
\Tr\!\left[T_{i,n}^{(\delta)}\sigma_n\right]
\le
\Tr\!\left[(I-T)\overline{\rho}_i^{\otimes n}\right]
+
e^{n(a-\delta)}
\Tr\!\left[T\sigma_n\right].
\end{align}
Applying this inequality to $T=T_n$, we have
\begin{align}
\Tr\!\left[\left(I-T_{i,n}^{(\delta)}\right)\overline{\rho}_i^{\otimes n}\right]
+
e^{n(a-\delta)}
\Tr\!\left[T_{i,n}^{(\delta)}\sigma_n\right]
\le
\Tr\!\left[\left(I_n-T_n\right)\overline{\rho}_i^{\otimes n}\right]
+
e^{n(a-\delta)}
\Tr\!\left[T_n\sigma_n\right].
\end{align}
Dropping the nonnegative second term on the left-hand side and using~\eqref{eq:T_n_sigma_n_bound},
we get
\begin{align}
\Tr\!\left[\left(I-T_{i,n}^{(\delta)}\right)\overline{\rho}_i^{\otimes n}\right]
\le
\Tr\!\left[\left(I_n-T_n\right)\overline{\rho}_i^{\otimes n}\right]
+
e^{-n\delta}.
\end{align}
Taking the limit and using~\eqref{eq:assumption_T_n} gives
\begin{align}
\lim_{n\to\infty}\Tr\!\left[\left(I-T_{i,n}^{(\delta)}\right)\overline{\rho}_i^{\otimes n}\right]
=0.
\end{align}

By the definition of $\underline{D}$, this implies
\begin{align}
a-\delta
\le
\underline{D}(\vec{\rho}_i\|\vec{\sigma}).
\end{align}
Since $\delta>0$ is arbitrary, we conclude that, for every $i$,
\begin{align}
a
\le
\underline{D}(\vec{\rho}_i\|\vec{\sigma})
.
\end{align}
Since $a$ can be taken arbitrarily close to $\underline{D}(\vec{\rho}\|\vec{\sigma})$, we have
\begin{align}
\underline{D}(\vec{\rho}\|\vec{\sigma})\le
\min_i\left\{
\underline{D}(\vec{\rho}_i\|\vec{\sigma})
\right\}.
\end{align}
\end{proof}

\subsection{The spectral inf-divergence rate in composite hypothesis testing}
\label{subsec: the spectral inf-divergence rate in composite hypothesis testing}

In the previous section, we have studied the spectral inf-divergence rate for mixed sources of quantum states in the first argument while maintaining the second argument as a general sequence of quantum states; as a corollary, when the second argument is a sequence of sets of states in composite hypothesis testing, we have the following characterization.
\begin{corollary}[Corollary of Proposition \ref{key proposition to the main theorem}]\label{characterization of underline D when the second argument is for sets}
    Let $S_n \subset \mathcal{D}(\mathcal{H}_n)$ be a subset of density operators on a finite-dimensional Hilbert space $\mathcal{H}_n$ for each $n\ge 1$,
    and let $\vec{S}$ be the sequence $\{S_n\}_{n=1}^\infty$.
    Also, suppose that $\vec{\rho}$ is a sequence of mixed sources defined on $\vec{\mathcal{H}}$ such that
    \begin{align}
        \rho_n = \sum_{i\in J} p_i \overline{\rho}_i^{\otimes n},
        \quad
        \vec{\rho}_i = \{\overline{\rho}_i^{\otimes n}\}_{n=1}^\infty,
    \end{align} 
    where $J$ is a finite set, $p_i>0$ for every $i\in J$, and $\sum_{i\in J} p_i=1$.
    Then, it holds that
    \begin{align}
        \inf_{\vec{\sigma}\in \vec{S}} \underline{D}(\vec{\rho}\|\vec{\sigma})
        = \min_i \inf_{\vec{\sigma}\in \vec{S}} \underline{D}(\vec{\rho}_i\|\vec{\sigma}).
    \end{align}
\end{corollary}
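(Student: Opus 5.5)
The corollary follows by applying Proposition~\ref{key proposition to the main theorem} pointwise in $\vec{\sigma}$ and then exchanging an infimum with a finite minimum. Here $\vec{\sigma}\in\vec{S}$ means a sequence $\{\sigma_n\}_{n=1}^\infty$ with $\sigma_n\in S_n$ for every $n$. No structural assumption on $S_n$ is needed, since Proposition~\ref{key proposition to the main theorem} holds for an arbitrary sequence of density operators in the second argument.

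First, fix any $\vec{\sigma}\in\vec{S}$. Proposition~\ref{key proposition to the main theorem} applies directly and gives
\begin{align}
\underline{D}(\vec{\rho}\|\vec{\sigma})=\min_{i\in J}\underline{D}(\vec{\rho}_i\|\vec{\sigma}).
\end{align}
Taking the infimum over $\vec{\sigma}\in\vec{S}$ on both sides turns the claim into the identity
\begin{align}
\inf_{\vec{\sigma}}\min_{i} f_i(\vec{\sigma})=\min_i\inf_{\vec{\sigma}} f_i(\vec{\sigma}),
\qquad f_i(\vec{\sigma})\coloneqq\underline{D}(\vec{\rho}_i\|\vec{\sigma}),
\end{align}
with values in $[-\infty,+\infty]$. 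This exchange holds for any finite index set $J$, and I would verify it in two directions.

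For ``$\le$'': for each fixed $i$ and each $\vec{\sigma}$ we have $\min_{j}f_j(\vec{\sigma})\le f_i(\vec{\sigma})$. Taking the infimum over $\vec{\sigma}$ gives $\inf_{\vec\sigma}\min_j f_j\le\inf_{\vec\sigma}f_i$ for every $i$, and then we take the minimum over $i$.

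For ``$\ge$'': for each $\vec{\sigma}$, let $i(\vec{\sigma})$ be an index attaining $\min_j f_j(\vec{\sigma})$, which exists because $J$ is finite. Then
\begin{align}
\min_j f_j(\vec{\sigma})=f_{i(\vec\sigma)}(\vec\sigma)\ge\inf_{\vec\sigma'}f_{i(\vec\sigma)}(\vec\sigma')\ge\min_i\inf_{\vec\sigma'}f_i(\vec\sigma').
\end{align}
Taking the infimum over $\vec\sigma$ on the left yields the inequality.

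There is no real obstacle here; the substantive content lies entirely in Proposition~\ref{key proposition to the main theorem}. The only points needing care are finiteness of $J$, which guarantees that the pointwise minimum is attained, and the convention that all quantities live in the extended reals, so that infinite values cause no issue.
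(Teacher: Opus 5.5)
Your proposal is correct and takes the same approach as the paper: apply Proposition~\ref{key proposition to the main theorem} for each fixed $\vec{\sigma}\in\vec{S}$, then exchange the infimum over $\vec{\sigma}$ with the finite minimum over $i\in J$. The only difference is that you verify both directions of this exchange explicitly, whereas the paper treats the exchange as immediate.
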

\begin{proof}
    The proof immediately follows from Proposition \ref{key proposition to the main theorem} since
    \begin{align}
        \inf_{\vec{\sigma}\in \vec{S}} \underline{D}(\vec{\rho}\|\vec{\sigma})
        &=
        \inf_{\vec{\sigma}\in \vec{S}} \min_i \underline{D}(\vec{\rho}_i\|\vec{\sigma})\\
        &=
        \min_i \inf_{\vec{\sigma}\in \vec{S}} \underline{D}(\vec{\rho}_i\|\vec{\sigma}).
    \end{align}
\end{proof}

\subsection{The relationship between two definitions of type-II error exponent}
\label{subsec: the relationship between two definitions of type-II error exponent}

In this subsection, we relate two ways of formulating the optimal type-II error exponent for composite hypothesis testing. The first formulation is the supremum exponent $B(\varepsilon|\vec{\rho}\|\vec{\sigma})$ in Definition~\ref{def:B} for a fixed sequence of alternative states $\vec{\sigma}$. The second formulation is the exponent obtained from the optimal type-II error $\beta_\varepsilon(\rho_n\|S_n)$ in Definition~\ref{def beta} against a set $S_n$ of alternative states. 

The following proposition shows that these two formulations agree up to an arbitrarily small relaxation of the type-I error threshold.
The proof is divided into three steps. First, Lemma~\ref{exchange} rewrites the composite type-II exponent in terms of fixed alternative sequences $\vec{\sigma}\in\vec{S}$. Then Proposition~\ref{prop:right inequality comparison} proves the right inequality in~\eqref{beta and B ineq}, which follows from relaxing the type-I constraint from $\varepsilon$ to $\varepsilon+\varepsilon_0$. Finally, Proposition~\ref{prop:left inequality comparison} proves the left inequality, which follows because any sequence of tests satisfying the blockwise constraint $\alpha_n\le\varepsilon$ is also admissible for the asymptotic constraint $\limsup_n\alpha_n\le\varepsilon$.
This comparison will be used in the proof of Theorem~\ref{Main theorem} to connect the information-spectrum characterization to the known generalized quantum Stein's lemma for IID states.

\begin{proposition}[Relationship between the two definitions of the optimal type-II exponent]\label{relationship between two definitions}
    For a sequence $\vec{\rho}=\left\{\rho_n\right\}_{n=1}^{\infty}$ of density operators defined on $\vec{\mathcal{H}}=\left\{\mathcal{H}_n\right\}_{n=1}^{\infty}$ and a sequence $\vec{S}=\left\{S_n\subset \mathcal{D}(\mathcal{H}_n)\right\}_{n=1}^{\infty}$ of sets of states satisfying
    \textnormal{(\ref{set1})}, \textnormal{(\ref{set2})}, \textnormal{(\ref{set3})}, and \textnormal{(\ref{set4})},
    it holds that for
    any $\varepsilon\in [0,1)$ and $\varepsilon_0 \in (0, 1-\varepsilon]$, 
    \begin{align}\label{beta and B ineq}
        \liminf_{n\to\infty} -\frac{1}{n}\log \beta_{\varepsilon}\left(\rho_n \middle\| S_n\right)
        \le \inf_{\vec{\sigma}\in \vec{S}} B(\varepsilon| \vec{\rho}\|\vec{\sigma})
        \le \liminf_{n\to\infty} -\frac{1}{n}\log \beta_{\varepsilon+\varepsilon_0}\left(\rho_n \middle\| S_n\right).
    \end{align}
    Moreover, if $\varepsilon \in (0, 1)$ and $\vec{\rho}$ is a sequence of IID states $\left\{\rho^{\otimes n}\right\}_{n=1}^{\infty}$, then
    \begin{align}
        \inf_{\vec{\sigma}\in \vec{S}} B(\varepsilon| \vec{\rho}\|\vec{\sigma})
        =
        \liminf_{n\to\infty} -\frac{1}{n}\log \beta_{\varepsilon'}\left(\rho^{\otimes n} \middle\| S_n\right)
    \end{align}
    for any $\varepsilon' \in (0, 1)$.
\end{proposition}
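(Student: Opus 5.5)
The plan is to prove the two inequalities in \eqref{beta and B ineq} separately, and then obtain the IID statement by sandwiching with Theorem~\ref{generalized quantum Stein's lemma}. Throughout, $\vec{\sigma}\in\vec{S}$ means $\sigma_n\in S_n$ for every $n$.

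\textbf{Minimax exchange (the analogue of Lemma~\ref{exchange}).} The first step is to show that, for every $n$ and every $\varepsilon'\in[0,1]$,
\begin{align}
\beta_{\varepsilon'}(\rho_n\|S_n)=\max_{\sigma_n\in S_n}\beta_{\varepsilon'}(\rho_n\|\sigma_n).
\end{align}
The function $(T,\sigma)\mapsto\Tr[T\sigma]$ is bilinear and continuous. The set of admissible tests $\{T: 0\le T\le I,\ \Tr[(I-T)\rho_n]\le\varepsilon'\}$ is convex and compact, since $\mathcal{H}_n$ is finite-dimensional. The set $S_n$ is convex and compact by (\ref{set2}) and (\ref{set4}). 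Sion's minimax theorem therefore allows us to exchange $\min_T$ and $\max_{\sigma}$. The maxima are attained by compactness and continuity; for the outer maximum, the map $\sigma\mapsto\min_T\Tr[T\sigma]$ is upper semicontinuous because it is a minimum of continuous functions. I expect this exchange to be the only step needing care, mainly in verifying compactness and convexity of the constraint set for each fixed $n$. No axiom beyond (\ref{set2}) and (\ref{set4}) should be needed here.

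\textbf{Left inequality.} For each $n$, let $T_n$ attain $\beta_\varepsilon(\rho_n\|S_n)$. Then $\alpha_n(T_n)\le\varepsilon$ for all $n$, so $\limsup_n\alpha_n(T_n)\le\varepsilon$ and $\vec{T}$ is admissible in Definition~\ref{def:B} for every $\vec{\sigma}\in\vec{S}$. Moreover, $\Tr[T_n\sigma_n]\le\max_{\sigma\in S_n}\Tr[T_n\sigma]=\beta_\varepsilon(\rho_n\|S_n)$. Hence
\begin{align}
B(\varepsilon|\vec{\rho}\|\vec{\sigma})\ge\liminf_{n\to\infty}-\tfrac1n\log\beta_\varepsilon(\rho_n\|S_n)
\end{align}
for every $\vec{\sigma}\in\vec{S}$, and taking the infimum over $\vec{\sigma}$ gives the claim. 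This part needs no minimax.

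\textbf{Right inequality.} For each $n$, use the exchange to pick $\sigma_n^*\in S_n$ with $\beta_{\varepsilon+\varepsilon_0}(\rho_n\|\sigma_n^*)=\beta_{\varepsilon+\varepsilon_0}(\rho_n\|S_n)$, and set $\vec{\sigma}^*=\{\sigma_n^*\}_{n=1}^\infty\in\vec{S}$. Let $\vec{T}$ be any test sequence with $\limsup_n\alpha_n(T_n)\le\varepsilon$. Since $\varepsilon_0>0$, we have $\alpha_n(T_n)\le\varepsilon+\varepsilon_0$ for all sufficiently large $n$. For such $n$, $T_n$ is feasible for $\beta_{\varepsilon+\varepsilon_0}(\rho_n\|\sigma_n^*)$, so
\begin{align}
\Tr[T_n\sigma_n^*]\ge\beta_{\varepsilon+\varepsilon_0}(\rho_n\|S_n).
\end{align}
Taking $\liminf_n-\frac1n\log$ and then the supremum over $\vec{T}$ gives
\begin{align}
B(\varepsilon|\vec{\rho}\|\vec{\sigma}^*)\le\liminf_{n\to\infty}-\tfrac1n\log\beta_{\varepsilon+\varepsilon_0}(\rho_n\|S_n).
\end{align}
The infimum over $\vec{\sigma}\in\vec{S}$ is at most the value at $\vec{\sigma}^*$, which gives the right inequality. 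If $\beta=0$ for some $n$, we use the convention $-\log 0=+\infty$, and the argument is unchanged.

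\textbf{IID case.} Let $\varepsilon\in(0,1)$ and choose $\varepsilon_0\in(0,1-\varepsilon)$, so that $\varepsilon+\varepsilon_0\in(0,1)$. By Theorem~\ref{generalized quantum Stein's lemma}, the limit $\lim_n-\frac1n\log\beta_{\varepsilon''}(\rho^{\otimes n}\|S_n)$ exists and equals the regularized relative entropy of resources for every $\varepsilon''\in(0,1)$. In particular, the two outer sides of \eqref{beta and B ineq}, taken with thresholds $\varepsilon$ and $\varepsilon+\varepsilon_0$, coincide. The middle term $\inf_{\vec{\sigma}\in\vec{S}}B(\varepsilon|\vec{\rho}\|\vec{\sigma})$ therefore equals this common value, which is the same as $\lim_n-\frac1n\log\beta_{\varepsilon'}(\rho^{\otimes n}\|S_n)$ for any $\varepsilon'\in(0,1)$.
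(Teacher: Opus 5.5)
Your proof is correct and follows essentially the same route as the paper. Both use a per-$n$ minimax exchange (the paper's Lemma~\ref{exchange}), admissibility of blockwise-feasible tests for the left inequality, the eventual relaxation $\alpha_n(T_n)\le\varepsilon+\varepsilon_0$ for the right inequality, and sandwiching with Theorem~\ref{generalized quantum Stein's lemma} in the IID case. The differences are minor: you unpack the paper's $\liminf$--$\inf$ interchange by using the universal minimax-optimal test, so the left inequality needs no minimax at all, and an explicit worst-case sequence $\vec{\sigma}^*$; and your choice $\varepsilon_0<1-\varepsilon$ in the IID step avoids the degenerate threshold $\varepsilon+\varepsilon_0=1$.
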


\begin{proof}

The first inequality in~\eqref{beta and B ineq} follows from Proposition~\ref{prop:left inequality comparison} and Lemma~\ref{exchange}. The second inequality in~\eqref{beta and B ineq} follows from Proposition~\ref{prop:right inequality comparison} and Lemma~\ref{exchange}. If $\varepsilon\in(0,1)$ and $\vec{\rho}=\{\rho^{\otimes n}\}_{n=1}^{\infty}$ is IID, then Theorem~\ref{generalized quantum Stein's lemma} implies that the leftmost and rightmost terms in~\eqref{beta and B ineq} are equal for any type-I error threshold in $(0,1)$. Hence the middle term is also equal to the same value, proving the last assertion.
\end{proof}

We first record the minimax and $\liminf$ exchange step used in both directions of the comparison.

\begin{lemma}[exchange of $\liminf$, $\min$, $\max$]\label{exchange}
    For a sequence $\vec{\rho}=\left\{\rho_n\right\}_{n=1}^{\infty}$ of density operators defined on $\vec{\mathcal{H}}=\left\{\mathcal{H}_n\right\}_{n=1}^{\infty}$ and a sequence $\vec{S}=\left\{S_n\subset \mathcal{D}(\mathcal{H}_n)\right\}_{n=1}^{\infty}$ of 
    sets of states satisfying
    \textnormal{(\ref{set1})}, \textnormal{(\ref{set2})}, \textnormal{(\ref{set3})}, and \textnormal{(\ref{set4})},
    it holds that
    \begin{align}
        \liminf_{n\to\infty} -\frac{1}{n} \log \beta_{\varepsilon}\left(\rho_n \middle\| S_n\right)
        =
        \inf_{\vec{\sigma}\in\vec{S}} \liminf_{n\to\infty} \max_{T_n}\left\{
            -\frac{1}{n}\log \Tr[T_n \sigma_n] \middle| 0 \le T_n \le I, \Tr[(I_n-T_n)\rho_n] \le \varepsilon
        \right\}
    \end{align}
\end{lemma}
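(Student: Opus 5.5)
The identity follows from a minimax exchange at each fixed $n$, followed by an elementary exchange of $\liminf$ with a blockwise minimum. Only \textnormal{(\ref{set2})} and \textnormal{(\ref{set4})} (compactness and convexity of each $S_n$) are needed. Throughout, $\vec{\sigma}\in\vec{S}$ means $\sigma_n\in S_n$ for every $n$, and values of $-\log 0=+\infty$ are interpreted in the extended reals.

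\emph{Step 1 (minimax at fixed $n$).} Let $\mathcal{T}_n\coloneqq\{T_n \mid 0\le T_n\le I_n,\ \Tr[(I_n-T_n)\rho_n]\le\varepsilon\}$. This set is nonempty (it contains $T_n=I_n$), convex and compact. The map $(T_n,\sigma_n)\mapsto\Tr[T_n\sigma_n]$ is bilinear and continuous, and $S_n$ is convex and compact. By Sion's minimax theorem,
\begin{align}
\beta_\varepsilon(\rho_n\|S_n)=\min_{T_n\in\mathcal{T}_n}\max_{\sigma_n\in S_n}\Tr[T_n\sigma_n]=\max_{\sigma_n\in S_n}\min_{T_n\in\mathcal{T}_n}\Tr[T_n\sigma_n].
\end{align}
Set $h_n(\sigma_n)\coloneqq\min_{T_n\in\mathcal{T}_n}\Tr[T_n\sigma_n]$ and $g_n(\sigma_n)\coloneqq-\frac1n\log h_n(\sigma_n)$. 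Since $-\log$ is decreasing,
\begin{align}
g_n(\sigma_n)=\max_{T_n\in\mathcal{T}_n}\left\{-\tfrac1n\log\Tr[T_n\sigma_n]\right\},
\end{align}
which is exactly the quantity inside the $\liminf$ on the right-hand side of the lemma. Combining the two displays gives $-\frac1n\log\beta_\varepsilon(\rho_n\|S_n)=\min_{\sigma_n\in S_n}g_n(\sigma_n)$.

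\emph{Step 2 (the minimum is attained).} The function $h_n$ is a minimum over the compact set $\mathcal{T}_n$ of functions jointly continuous in $(T_n,\sigma_n)$, so $h_n$ is continuous on $S_n$. By compactness of $S_n$ its maximum is attained at some $\sigma_n^\ast\in S_n$. Continuity of $-\log$ into $(-\infty,+\infty]$ then shows that $\sigma_n^\ast$ attains $\min_{\sigma_n\in S_n}g_n(\sigma_n)$.

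\emph{Step 3 (exchanging $\liminf$ and $\min$).}
\begin{itemize}
\item[($\le$)] For any $\vec{\sigma}\in\vec{S}$ we have $\min_{\sigma'\in S_n}g_n(\sigma')\le g_n(\sigma_n)$ for every $n$. Taking $\liminf_n$ and then the infimum over $\vec{\sigma}$ shows that the left-hand side is at most the right-hand side.
\item[($\ge$)] The minimizers from Step~2 form a sequence $\vec{\sigma}^\ast=\{\sigma_n^\ast\}_{n=1}^\infty\in\vec{S}$ with $g_n(\sigma_n^\ast)=-\frac1n\log\beta_\varepsilon(\rho_n\|S_n)$ for every $n$. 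Hence the right-hand side is at most $\liminf_n g_n(\sigma^\ast_n)$, which equals the left-hand side.
\end{itemize}

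The only point requiring care is the applicability of the minimax theorem together with attainment, handled in Steps 1 and 2. Attainment is what produces an actual sequence $\vec{\sigma}^\ast$ rather than only an infimum at each $n$. If one preferred to avoid attainment, choosing $\sigma_n$ with $g_n(\sigma_n)\le\min_{\sigma'\in S_n} g_n(\sigma')+1/n$ would give the same conclusion.
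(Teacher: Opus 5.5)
Your proof is correct and follows the same route as the paper. The paper also applies a per-$n$ minimax exchange (justified by compactness and convexity with a bilinear objective), moves $-\frac{1}{n}\log$ through the optimization, and then exchanges $\liminf_{n}$ with the blockwise minimum over $\sigma_n\in S_n$; the only difference is that you prove this last exchange explicitly (via attainment or $1/n$-near-minimizers), whereas the paper simply invokes the general identity $\liminf_{n}\inf_{x_n\in X_n}f_n(x_n)=\inf_{x\in X}\liminf_{n}f_n(x_n)$.
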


\begin{proof}
    The proof follows from the following chain of equalities:
    \begin{align}
        &\liminf_{n\to\infty} -\frac{1}{n} \log \beta_{\varepsilon}\left(\rho_n \middle\| S_n\right)\\
        &=
        \liminf_{n\to\infty} -\frac{1}{n} \log \min \left\{
            \max_{\sigma_n\in S_n} \Tr[T_n\sigma_n] \middle| 0 \le T_n \le I, \Tr[(I_n-T_n)\rho_n] \le \varepsilon
        \right\}\\
        &=\label{2nd eq}
        \liminf_{n\to\infty} -\frac{1}{n}\log \max_{\sigma_n\in S_n} \min \left\{
            \Tr[T_n\sigma_n] \middle| 0 \le T_n \le I, \Tr[(I_n-T_n)\rho_n] \le \varepsilon
        \right\}\\
        &=
        \liminf_{n\to\infty} \min_{\sigma_n\in S_n} -\frac{1}{n}\log \min \left\{
            \Tr[T_n\sigma_n] \middle| 0 \le T_n \le I, \Tr[(I_n-T_n)\rho_n] \le \varepsilon
        \right\}\\
        &=\label{last eq}
        \inf_{\vec{\sigma}\in\vec{S}} \liminf_{n\to\infty} \max_{T_n}\left\{
            -\frac{1}{n}\log \Tr[T_n \sigma_n] \middle| 0 \le T_n \le I, \Tr[(I_n-T_n)\rho_n] \le \varepsilon
        \right\},
    \end{align}
    where~\eqref{2nd eq} follows from the minimax theorem~\cite{hayashi2025generalizedquantumsteinslemma,vonNeumann1928,sion1958,komiya1988} due to compactness and convexity of relevant sets and bilinearity of the objective function, and~\eqref{last eq} follows from
\begin{align}
    \liminf_{n\to\infty}\inf_{x_n\in X_n} f_n(x_n)
    =
    \inf_{x\in X}\liminf_{n\to\infty} f_n(x_n)
\end{align}
for all sequences $X=\{X_n\}_{n=1}^\infty$ of sets.
\end{proof}

We next prove the right inequality in~\eqref{beta and B ineq}. The only point is that a sequence of tests satisfying an asymptotic type-I constraint $\limsup_n\alpha_n\le\varepsilon$ satisfies the slightly relaxed blockwise constraint $\alpha_n\le\varepsilon+\varepsilon_0$ for all sufficiently large $n$.

\begin{proposition}[Right inequality in Proposition~\ref{relationship between two definitions}]

\label{prop:right inequality comparison}

Under the assumptions of Proposition~\ref{relationship between two definitions}, for any $\varepsilon\in[0,1)$ and $\varepsilon_0\in(0,1-\varepsilon]$, it holds that

\begin{align}
\inf_{\vec{\sigma}\in \vec{S}} B(\varepsilon| \vec{\rho}\|\vec{\sigma})
\le
\liminf_{n\to\infty} -\frac{1}{n}\log \beta_{\varepsilon+\varepsilon_0}\left(\rho_n \middle\| S_n\right).
\end{align}
\end{proposition}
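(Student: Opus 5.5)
By Lemma~\ref{exchange} applied with threshold $\varepsilon+\varepsilon_0$, the right-hand side equals
\begin{align}
\inf_{\vec{\sigma}\in\vec{S}} \liminf_{n\to\infty} \max_{T_n}\left\{ -\frac{1}{n}\log \Tr[T_n\sigma_n] \middle| 0\le T_n\le I,\ \Tr[(I_n-T_n)\rho_n]\le \varepsilon+\varepsilon_0\right\}.
\end{align}
It therefore suffices to show the inequality pointwise in $\vec{\sigma}$. That is, for every fixed $\vec{\sigma}\in\vec{S}$,
\begin{align}
B(\varepsilon|\vec{\rho}\|\vec{\sigma}) \le \liminf_{n\to\infty} F_n(\vec{\sigma}),
\end{align}
where $F_n(\vec{\sigma})$ denotes the inner maximum above. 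Taking the infimum over $\vec{\sigma}$ on both sides then gives the claim, because $\inf_{\vec{\sigma}}B \le B(\varepsilon|\vec{\rho}\|\vec{\sigma}) \le \liminf_n F_n(\vec{\sigma})$ for each $\vec{\sigma}$.

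Fix $\vec{\sigma}$ and let $\vec{T}$ be any test sequence admissible in Definition~\ref{def:B}, i.e., with $\limsup_n \alpha_n(T_n)\le\varepsilon$. By the definition of $\limsup$, there is $N$ such that $\Tr[(I_n-T_n)\rho_n]\le \varepsilon+\varepsilon_0$ for all $n\ge N$; here we use $\varepsilon_0>0$ strictly. For such $n$, $T_n$ is feasible for the inner maximization, so $-\frac1n\log\Tr[T_n\sigma_n]\le F_n(\vec{\sigma})$. Since finitely many $n<N$ do not affect a $\liminf$, we get
\begin{align}
\liminf_n -\frac1n\log\Tr[T_n\sigma_n]\le\liminf_n F_n(\vec{\sigma}).
\end{align}
Taking the supremum over admissible $\vec{T}$ yields $B(\varepsilon|\vec{\rho}\|\vec{\sigma})\le\liminf_n F_n(\vec{\sigma})$.

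The only delicate points are bookkeeping. First, $F_n$ is a maximum, not merely a supremum: the feasible set of $T_n$ is compact and $-\log\Tr[T_n\sigma_n]$ is upper semicontinuous, possibly taking the value $+\infty$ (e.g.\ if $T_n\sigma_n=0$). Both sides of the inequality may be $+\infty$, and the argument above is insensitive to this. Second, the equality from Lemma~\ref{exchange} requires the minimax theorem to hold at threshold $\varepsilon+\varepsilon_0\le 1$. The axioms give the needed compactness and convexity, and the feasible set of tests is convex, compact, and nonempty for any threshold in $[0,1]$. I do not expect a genuine obstacle: the essential step is just the passage from the asymptotic constraint to an eventually blockwise constraint, which costs exactly the slack $\varepsilon_0$.
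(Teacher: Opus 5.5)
Your proposal is correct and takes essentially the same route as the paper: you reduce via Lemma~\ref{exchange} to an inequality for each fixed $\vec{\sigma}\in\vec{S}$, then use $\varepsilon_0>0$ to convert $\limsup_n\alpha_n(T_n)\le\varepsilon$ into the blockwise constraint $\alpha_n(T_n)\le\varepsilon+\varepsilon_0$ for all large $n$. The only cosmetic difference is that the paper replaces $T_n$ by $I$ for $n<N$ to obtain a test feasible at every blocklength, whereas you simply note that finitely many terms do not affect the $\liminf$.
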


\begin{proof}
    Due to lemma \ref{exchange}, it suffices to show that for any $\vec{\sigma}\in\vec{S}$,
    \begin{align}\label{right ineq: main}
        B(\varepsilon| \vec{\rho} \| \vec{\sigma})
        \le
        \liminf_{n\to\infty} \max_{T_n}\left\{
            -\frac{1}{n}\log \Tr[T_n \sigma_n] \middle| 0 \le T_n \le I, \Tr[(I_n-T_n)\rho_n] \le \varepsilon+\varepsilon_0
        \right\}.
    \end{align}
    Choose an arbitrary sequence of test operators $\vec{T}=\{T_n\}_{n=1}^{\infty}$ satisfying
    $\limsup_{n\to\infty} \alpha_n(T_n) \le \varepsilon$.
    Then, there exists $N\in\mathbb{N}$ such that for every $n\ge N$,
    \begin{align}
        \alpha_n(T_n) \le \varepsilon+\varepsilon_0.
    \end{align}
    Next, prepare a new sequence $\vec{T}'=\{T_n'\}_{n=1}^{\infty}$ defined as
    \begin{align}
        T_n' = \begin{cases}
            T_n & \text{if } n\ge N, \\
            1 & \text{if } n < N.
        \end{cases}
    \end{align}
    For this $\vec{T}'$, we have $\alpha_n(T_n') \le \varepsilon+\varepsilon_0$ for every $n$, and
    \begin{align}
        \liminf_{n\to\infty} -\frac{1}{n}\log \Tr[T_n' \sigma_n]
        =
        \liminf_{n\to\infty} -\frac{1}{n}\log \Tr[T_n \sigma_n].
    \end{align}
    Thus, it follows that
    \begin{align}\label{right ineq: part 1}
        &\sup_{\vec{T}=\{T_n\}_{n=1}^{\infty}}\left\{
            \liminf_{n\to\infty} -\frac{1}{n}\log \Tr[T_n \sigma_n] \middle| 0 \le T_n \le I, \limsup_{n\to\infty} \alpha_n(T_n) \le \varepsilon
        \right\}
        \\&\le
        \sup_{\vec{T}=\{T_n\}_{n=1}^{\infty}}\left\{
            \liminf_{n\to\infty} -\frac{1}{n}\log \Tr[T_n \sigma_n] \middle| 0 \le T_n \le I, \forall n. \left[\Tr[(I_n-T_n)\rho_n] \le \varepsilon+\varepsilon_0\right]
        \right\}.
    \end{align}
    Choose an arbitrary sequence $\vec{T}=\{T_n\}_{n=1}^{\infty}$ satisfying $0 \le T_n \le I$ and $\Tr[(I_n-T_n)\rho_n] \le \varepsilon+\varepsilon_0$ for every $n$.
    Then for this sequence, it follows that
    \begin{align}
        -\frac{1}{n}\log \Tr[T_n \sigma_n]
        \le
        \max_{T_n}\left\{
            -\frac{1}{n}\log \Tr[T_n \sigma_n] \middle| 0 \le T_n \le I, \Tr[(I_n-T_n)\rho_n] \le \varepsilon+\varepsilon_0
        \right\}.
    \end{align}
    Taking $\liminf$ on both sides gives
    \begin{align}
        \liminf_{n\to\infty} -\frac{1}{n}\log \Tr[T_n \sigma_n]
        \le
        \liminf_{n\to\infty} \max_{T_n}\left\{
            -\frac{1}{n}\log \Tr[T_n \sigma_n] \middle| 0 \le T_n \le I, \Tr[(I_n-T_n)\rho_n] \le \varepsilon+\varepsilon_0
        \right\}.
    \end{align}
    Since this holds for arbitrary $\vec{T}$, we have
    \begin{align}\label{right ineq: part 2}
        &\sup_{\vec{T}=\{T_n\}_{n=1}^{\infty}}\left\{
            \liminf_{n\to\infty} -\frac{1}{n}\log \Tr[T_n \sigma_n] \middle| 0 \le T_n \le I, \forall n. \left[\Tr[(I_n-T_n)\rho_n] \le \varepsilon+\varepsilon_0\right]
        \right\}
        \\&\le
        \liminf_{n\to\infty} \max_{T_n}\left\{
            -\frac{1}{n}\log \Tr[T_n \sigma_n] \middle| 0 \le T_n \le I, \Tr[(I_n-T_n)\rho_n] \le \varepsilon+\varepsilon_0
        \right\}.
    \end{align}
    Combining~\eqref{right ineq: part 1} and~\eqref{right ineq: part 2}, ~\eqref{right ineq: main} is proved.
\end{proof}

We finally prove the left inequality in~\eqref{beta and B ineq}. Here no relaxation of the type-I error threshold is needed, because a test satisfying the constraint at every blocklength automatically satisfies the asymptotic constraint in the definition of $B(\varepsilon|\vec{\rho}\|\vec{\sigma})$.

\begin{proposition}[Left inequality in Proposition~\ref{relationship between two definitions}]
\label{prop:left inequality comparison}
Under the assumptions of Proposition~\ref{relationship between two definitions}, for any $\varepsilon\in[0,1)$, it holds that
\begin{align}
\liminf_{n\to\infty} -\frac{1}{n}\log \beta_{\varepsilon}\left(\rho_n \middle\| S_n\right)
\le
\inf_{\vec{\sigma}\in \vec{S}} B(\varepsilon| \vec{\rho}\|\vec{\sigma}).
\end{align}
\end{proposition}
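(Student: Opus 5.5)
By Lemma~\ref{exchange}, the left-hand side equals
\begin{align}
\inf_{\vec{\sigma}\in\vec{S}} \liminf_{n\to\infty} \max_{T_n}\left\{
-\frac{1}{n}\log \Tr[T_n \sigma_n] \middle| 0 \le T_n \le I, \Tr[(I_n-T_n)\rho_n] \le \varepsilon
\right\}.
\end{align}
It therefore suffices to prove, for each fixed $\vec{\sigma}\in\vec{S}$, that the inner $\liminf$ is at most $B(\varepsilon|\vec{\rho}\|\vec{\sigma})$. Taking the infimum over $\vec{\sigma}$ on both sides then gives the claim. This mirrors the proof of Proposition~\ref{prop:right inequality comparison}, but the argument is simpler because no relaxation of the threshold is needed.

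Fix $\vec{\sigma}\in\vec{S}$. For each $n$, the feasible set $\{T_n: 0\le T_n\le I_n,\ \Tr[(I_n-T_n)\rho_n]\le\varepsilon\}$ is nonempty, since it contains $T_n=I_n$. It is also compact, being a closed bounded subset of a finite-dimensional space. The map $T_n\mapsto \Tr[T_n\sigma_n]$ is continuous, so a minimizer $T_n^*$ of $\Tr[T_n\sigma_n]$ exists. This minimizer attains the maximum of $-\frac1n\log\Tr[T_n\sigma_n]$, with the value $+\infty$ allowed when $\Tr[T_n^*\sigma_n]=0$.

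Now form the sequence $\vec{T}^*=\{T_n^*\}_{n=1}^\infty$. By construction $\alpha_n(T_n^*)\le\varepsilon$ for every $n$, so $\limsup_{n\to\infty}\alpha_n(T_n^*)\le\varepsilon$, and $\vec{T}^*$ is admissible in Definition~\ref{def:B}. Hence
\begin{align}
\liminf_{n\to\infty}\max_{T_n}\{\cdots\}
=\liminf_{n\to\infty}-\frac1n\log\Tr[T_n^*\sigma_n]
\le B(\varepsilon|\vec{\rho}\|\vec{\sigma}).
\end{align}
The main obstacle is only bookkeeping. One must handle possibly infinite values, which cause no issue because both sides live in the extended reals. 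One must also confirm that the maximum in Lemma~\ref{exchange} is attained; this follows from the compactness argument above, and a near-optimal choice within $1/n$ would suffice otherwise. The axioms on $\vec{S}$ enter only through Lemma~\ref{exchange}.
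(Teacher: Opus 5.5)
Your proposal is correct and follows the paper's proof essentially verbatim. You reduce via Lemma~\ref{exchange} to a fixed $\vec{\sigma}\in\vec{S}$, take the blockwise optimal tests $T_n^*$, and observe that $\alpha_n(T_n^*)\le\varepsilon$ for all $n$ makes $\vec{T}^*$ admissible in the definition of $B(\varepsilon|\vec{\rho}\|\vec{\sigma})$; your compactness argument for the existence of $T_n^*$ just makes explicit a step the paper takes for granted.
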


\begin{proof}
    Due to lemma \ref{exchange}, it suffices to show that for any $\vec{\sigma}\in\vec{S}$,
    \begin{align}\label{left ineq: main}
        \liminf_{n\to\infty} \max_{T_n}\left\{
            -\frac{1}{n}\log \Tr[T_n \sigma_n] \middle| 0 \le T_n \le I, \Tr[(I_n-T_n)\rho_n] \le \varepsilon
        \right\}
        \le
        B(\varepsilon| \vec{\rho} \| \vec{\sigma}).
    \end{align}
    Let $\vec{T}^*=\{T_n^*\}_{n=1}^{\infty}$ be the optimal sequence for the left-hand side.
    Then the left-hand side of~\eqref{left ineq: main} is equal to
    \begin{align}
        \liminf_{n\to\infty} -\frac{1}{n}\log \Tr[T_n^* \sigma_n],\label{optimal sequence}
    \end{align}
    satisfying $\alpha_n(T_n^*)\le \varepsilon$.
    For this sequence, it holds that
    \begin{align}
        \limsup_{n\to\infty} \alpha_n(T_n^*) \le \varepsilon,
    \end{align}
    thus (\ref{optimal sequence}) is an element of the set
    \begin{align}
        \left\{
            \liminf_{n\to\infty} -\frac{1}{n}\log \Tr[T_n \sigma_n] \middle| 0 \le T_n \le I, \limsup_{n\to\infty} \alpha_n(T_n) \le \varepsilon
        \right\}.
    \end{align}
    By the definition of $B(\varepsilon| \vec{\rho} \| \vec{\sigma})$, we have
    \begin{align}
        \liminf_{n\to\infty} -\frac{1}{n}\log \Tr[T_n^* \sigma_n]
        \le
        B(\varepsilon| \vec{\rho} \| \vec{\sigma}).
    \end{align}
\end{proof}

\subsection{Proof of Theorem~\ref{Main theorem}}
\label{subsec: proof of the main theorem}

We now prove Theorem~\ref{Main theorem} by combining the results established in the previous subsections. The key input is Corollary~\ref{characterization of underline D when the second argument is for sets}, which reduces the spectral inf-divergence rate of the mixed source to the worst IID component. We then use the information-spectrum identity between $B$ and $\underline{D}$, and finally apply Proposition~\ref{relationship between two definitions} together with the generalized quantum Stein's lemma for IID states, Theorem~\ref{generalized quantum Stein's lemma}, to identify the resulting exponent with the regularized relative entropy.

\begin{proof}[Proof of Theorem~\ref{Main theorem}]
Using the information-spectrum characterization
$B(\vec{\rho}\|\vec{\sigma})
=
\underline{D}(\vec{\rho}\|\vec{\sigma})
$
in Lemma~\ref{lem:B=D}, together with Corollary~\ref{characterization of underline D when the second argument is for sets}, we obtain
\begin{align}
\inf_{\vec{\sigma}\in\vec{S}} B(\vec{\rho}\|\vec{\sigma})
&=
\inf_{\vec{\sigma}\in\vec{S}} \underline{D}(\vec{\rho}\|\vec{\sigma})\\
&=
\min_{i\in J}
\inf_{\vec{\sigma}\in\vec{S}}
\underline{D}(\vec{\rho}_i\|\vec{\sigma})\\
&=
\min_{i\in J}
\inf_{\vec{\sigma}\in\vec{S}}
B(\vec{\rho}_i\|\vec{\sigma}).
\label{eq:main_proof_reduction_to_components}
\end{align}
Since $B(\varepsilon|\vec{\rho}_i\|\vec{\sigma})$ is right-continuous at $\varepsilon=0$~\cite[Eq.~(20)]{4069150}, by the definition of $B(\vec{\rho}_i\|\vec{\sigma})$ in Definition~\ref{def:B}, we have
\begin{align}
B(\vec{\rho}_i\|\vec{\sigma})
=
\inf_{\varepsilon>0}
B(\varepsilon|\vec{\rho}_i\|\vec{\sigma}).
\end{align}
Therefore,
\begin{align}
\inf_{\vec{\sigma}\in\vec{S}} B(\vec{\rho}\|\vec{\sigma})
&=
\min_{i\in J}
\inf_{\vec{\sigma}\in\vec{S}}
\inf_{\varepsilon>0}
B(\varepsilon|\vec{\rho}_i\|\vec{\sigma})\\
&=
\min_{i\in J}
\inf_{\varepsilon>0}
\inf_{\vec{\sigma}\in\vec{S}}
B(\varepsilon|\vec{\rho}_i\|\vec{\sigma}).
\label{eq:main_proof_epsilon_reduction}
\end{align}
For each fixed IID component $\vec{\rho}_i=\{\overline{\rho}_i^{\otimes n}\}_{n=1}^{\infty}$ and each $\varepsilon\in(0,1)$, Proposition~\ref{relationship between two definitions} shows that
\begin{align}
\inf_{\vec{\sigma}\in\vec{S}}
B(\varepsilon|\vec{\rho}_i\|\vec{\sigma})=
        \liminf_{n\to\infty} -\frac{1}{n}\log \beta_{\varepsilon'}\left(\overline{\rho}_i^{\otimes n} \middle\| S_n\right)
\end{align}
for any $\varepsilon'\in(0,1)$.
By Theorem~\ref{generalized quantum Stein's lemma}, this exponent is given by the regularized relative entropy:
\begin{align}
\liminf_{n\to\infty} -\frac{1}{n}\log \beta_{\varepsilon'}\left(\overline{\rho}_i^{\otimes n} \middle\| S_n\right)
=
\lim_{n\to\infty}
\frac{1}{n}
\min_{\sigma_n\in S_n}
D(\overline{\rho}_i^{\otimes n}\|\sigma_n).
\end{align}
The right-hand side is independent of $\varepsilon,\varepsilon'\in(0,1)$. Hence,
\begin{align}
\inf_{\vec{\sigma}\in\vec{S}} B(\vec{\rho}\|\vec{\sigma})
&=
\min_{i\in J}
\inf_{\varepsilon>0}
\lim_{n\to\infty}
\frac{1}{n}
\min_{\sigma_n\in S_n}
D(\overline{\rho}_i^{\otimes n}\|\sigma_n)\\
&=
\min_{i\in J}
\lim_{n\to\infty}
\frac{1}{n}
\min_{\sigma_n\in S_n}
D(\overline{\rho}_i^{\otimes n}\|\sigma_n).
\end{align}
\end{proof}

\section{Counterexample to analogy in $\varepsilon$-hypothesis testing for mixed sources}
\label{sec:counterexample_epsilon}

In Theorem~\ref{Main theorem} and Proposition~\ref{key proposition to the main theorem}, we have characterized the exponent in composite hypothesis testing for mixed sources in the case where the type-I error is required to vanish asymptotically. 
One may naturally expect that an analogous characterization continues to hold for any fixed $\varepsilon\in(0,1)$, as in the generalized quantum Stein's lemma for IID states in the first argument in Theorem~\ref{generalized quantum Stein's lemma}. 
In this section, we show that this expectation is not correct for mixed sources in the first argument.
In Sec.~\ref{subsec: counterexample}, we present the main theorem in this section, which is the characterization of the supremum type-II error exponent in $\varepsilon$-hypothesis testing of a mixed source against an IID source. In Sec.~\ref{subsec: Analysis of the spectral inf-divergence rate using pinching map}, we show the spectral inf-divergence rate does not change when we apply the pinching map~\cite{Hayashi_2002} in terms of $\vec{\sigma}$ to $\vec{\rho}$, and thus the classical results can be directly applied. In Sec.~\ref{proof of the sub main theorem}, we prove the main theorem of this section by using the classical results. 

\subsection{Counterexample}
\label{subsec: counterexample}

Our purpose here is not to establish a composite hypothesis-testing theorem for arbitrary alternatives under a nonzero type-I error constraint; rather, we exhibit the obstruction already in a counterexample, where the second argument is IID quantum states. 
This restriction is also natural from the viewpoint of the information-spectrum method: in the classical theory, a clean formula for $\varepsilon$-hypothesis testing of mixed sources is obtained when the alternative is IID sources. 
We prove the quantum analogue of this formula and use it to show that, for $\varepsilon\in(0,1)$, the exponent is no longer determined solely by the worst component in the mixture. 
Instead, it depends on how the allowed type-I error $\varepsilon$ can discard components of the mixed source with small relative entropy.

We first state a characterization of the type-II error exponent in $\varepsilon$-hypothesis testing for mixed sources against an IID source, and so that we can use it to present a counterexample to a direct extension of Theorem~\ref{Main theorem} to $\varepsilon\in(0,1)$.
The theorem is the quantum extension of the classical characterization of $\varepsilon$-hypothesis testing for mixed sources in the information-spectrum method~\cite{han2003}.

\begin{theorem}[Characterization of the type-II error exponent in $\varepsilon$-hypothesis testing for a mixed source against an IID source]\label{Sub main theorem}
    Let $\vec{\rho}=\left\{\rho_n\right\}_{n=1}^{\infty}$ be a sequence of mixed sources defined on $\vec{\mathcal{H}}$ such that
    \begin{align}
        \rho_n = \sum_{i\in J} p_i \overline{\rho}_i^{\otimes n},
    \end{align}
    where $J$ is a finite set, $p_i>0$ for all $i\in J$, and $\sum_{i\in J} p_i = 1$, and
    let $\vec{\sigma}=\left\{\sigma^{\otimes n}\right\}_{n=1}^{\infty}$ be a sequence of IID states defined on $\vec{\mathcal{H}}$.
    We denote $d_i=D(\overline{\rho}_i\|\sigma)$.
    Then, it holds that for any $\varepsilon\in[0, 1)$,
    \begin{align}
        B(\varepsilon| \vec{\rho} \| \vec{\sigma}) = \sup\left\{ R \middle| \sum_{i: d_i \le R} p_i \le \varepsilon \right\}.
    \end{align}
\end{theorem}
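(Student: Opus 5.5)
The plan follows the route announced at the start of this section: first reduce to a commuting problem by pinching with respect to $\sigma^{\otimes n}$, then invoke the classical information-spectrum formula for mixed sources~\cite{han2003}. By Lemma~\ref{lem:B=D} it suffices to compute $\underline{D}(\varepsilon|\vec{\rho}\|\vec{\sigma})$.

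Let $\mathcal{E}_n$ be the pinching map with respect to the spectral decomposition of $\sigma^{\otimes n}$. The number $v_n$ of distinct eigenvalues of $\sigma^{\otimes n}$ is at most $(n+1)^{d}$, which is polynomial in $n$. The first key step is to show that
\[
\underline{D}(\varepsilon|\vec{\rho}\|\vec{\sigma})=\underline{D}(\varepsilon|\{\mathcal{E}_n(\rho_n)\}\|\vec{\sigma}).
\]
Because pinching is linear, $\mathcal{E}_n(\rho_n)=\sum_i p_i\,\mathcal{E}_n(\overline{\rho}_i^{\otimes n})$, so the right-hand side is again a mixed source.

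I would prove the two inequalities separately.
\begin{itemize}
\item For ``$\le$'': use monotonicity of $B(\varepsilon|\cdot\|\cdot)$ under the CPTP map $\mathcal{E}_n$. A test $T$ on the pinched pair pulls back to the test $\mathcal{E}_n(T)$ on the original pair, and since $\mathcal{E}_n(\sigma^{\otimes n})=\sigma^{\otimes n}$ this gives $B(\varepsilon|\mathcal{E}(\vec{\rho})\|\vec{\sigma})\le B(\varepsilon|\vec{\rho}\|\vec{\sigma})$.
\item For ``$\ge$'': use the pinching inequality $\rho_n\le v_n\,\mathcal{E}_n(\rho_n)$ together with the Hayashi--Nagaoka type argument. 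For a test $T_n=\{\rho_n-e^{na}\sigma^{\otimes n}\ge0\}$, the operator $\mathcal{E}_n(T_n)$ commutes with $\sigma^{\otimes n}$, and one should be able to compare the spectral projection $\{\mathcal{E}_n(\rho_n)\ge e^{n(a-\delta)}\sigma^{\otimes n}\}$ with it. The $v_n=e^{o(n)}$ factor is absorbed into $\delta$, so the type-I error of the pinched spectral test at rate $a-\delta$ is at most that of the original one plus $o(1)$.
\end{itemize}
I expect the ``$\ge$'' direction to be the main obstacle. It requires a careful operator inequality of the form $\Tr[\rho_n\{\rho_n\ge e^{na}\sigma^{\otimes n}\}]\lesssim \Tr[\mathcal{E}_n(\rho_n)\{\mathcal{E}_n(\rho_n)\ge e^{n(a-\delta)}\sigma^{\otimes n}\}]+o(1)$. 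I would try to obtain it through the Neyman--Pearson inequality used in Proposition~\ref{main proposition: converse}, combined with $\Tr[T\rho_n]\le v_n\Tr[T\mathcal{E}_n(\rho_n)]$ for $T$ commuting with $\sigma^{\otimes n}$. Alternatively, one can avoid this inequality by going through the operational $B$, which is monotone.

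Once $\mathcal{E}_n(\rho_n)$ and $\sigma^{\otimes n}$ commute, the problem is classical: in a joint eigenbasis, $\underline{D}(\varepsilon|\cdot)$ becomes the classical $\varepsilon$-spectral inf-divergence rate of the distributions $P_n=\sum_i p_iP_{i,n}$ against $Q_n$. The classical mixed-source formula~\cite{han2003} then reduces it to the limiting distribution of $\frac1n\log\frac{P_n}{Q_n}$ under $P_n$. Under each component this is $\frac1n\log\frac{P_{i,n}}{Q_n}+o(1)$, because $\frac1n\log(P_n/P_{i,n})\to0$ in probability under $P_{i,n}$; this needs only that $J$ is finite and $p_i>0$.

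It then remains to show that $\frac1n\log\frac{P_{i,n}}{Q_n}$ concentrates at $d_i=D(\overline{\rho}_i\|\sigma)$ in $P_{i,n}$-probability. This follows from the quantum Stein's lemma with strong converse applied to the pinched IID component: $\underline{D}(\varepsilon|\vec{\rho}_i\|\vec{\sigma})=d_i$ for all $\varepsilon\in[0,1)$, and the corresponding sup-rate also equals $d_i$. The pinching step above transfers both facts to $\mathcal{E}_n(\overline{\rho}_i^{\otimes n})$.

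With concentration in hand, the limiting spectrum of the mixture is the discrete distribution placing mass $p_i$ at $d_i$. The $\varepsilon$-quantile formula $\sup\{R\mid\limsup_n P_n[\frac1n\log\frac{P_n}{Q_n}<R]\le\varepsilon\}$ then gives exactly $\sup\{R\mid\sum_{i:d_i\le R}p_i\le\varepsilon\}$. Boundary atoms at $R=d_i$ need care: the strict inequality $<R$ makes $\sum_{i:d_i<R}p_i\le\varepsilon$ the exact condition, and taking the supremum over $R$ yields the stated formula.
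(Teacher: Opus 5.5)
Your route is the paper's: pinch with respect to $\sigma^{\otimes n}$, use monotonicity for one inequality and the pinching inequality for the other, then apply the classical mixed-source formula. The easy parts are fine. Your treatment of the mixture spectrum via $P_n\ge p_iP_{i,n}$ and the Markov-type bound is correct. Your use of the sup-rate, so that each component spectrum converges along the full sequence, is if anything more careful than the paper.

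The genuine gap is the step you flag yourself, $\underline{D}(\varepsilon|\vec{\rho}\|\vec{\sigma})\le\underline{D}(\varepsilon|\mathcal{E}(\vec{\rho})\|\vec{\sigma})$ (your bullet labels are swapped relative to your displayed identity). Neither suggested route proves it:
\begin{itemize}
\item Monotonicity of $B$ under CPTP maps only ever yields the opposite inequality.
\item The Neyman--Pearson inequality for the pinched pair, evaluated at $S_n=\{\rho_n-e^{na}\sigma^{\otimes n}\ge0\}$ or at $\mathcal{E}_n(S_n)$, produces the term $\Tr[(I-S_n)\mathcal{E}_n(\rho_n)]=\Tr[\mathcal{E}_n(I-S_n)\rho_n]$. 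This is not controlled by $\Tr[(I-S_n)\rho_n]$: already for a qubit, $\rho=S=|+\rangle\langle+|$ with a non-degenerate diagonal $\sigma$ gives $1/2$ versus $0$. Since any test may be replaced by its pinched version, that inequality only restates the claim that some $\sigma^{\otimes n}$-commuting test is almost as good as $S_n$.
\item The bound $\Tr[T\rho_n]\le v_n\Tr[T\mathcal{E}_n(\rho_n)]$ is an equality without the factor $v_n$ when $T$ commutes with $\sigma^{\otimes n}$. For general $T$, the factor $v_n$ multiplies an acceptance probability that must stay close to $1$, so it cannot be absorbed into $\delta$. A factor $v_n=e^{o(n)}$ is harmless only in front of an exponentially small quantity.
\end{itemize}

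The missing idea is to split the acceptance probability of $S_n$ along the pinched spectral projection $T_n=\{\mathcal{E}_n(\rho_n)-e^{n(a-\delta)}\sigma^{\otimes n}\ge0\}$, and to control the cross terms by Cauchy--Schwarz as in Lemma~\ref{Cauchy-Schwarz 2}:
\begin{align*}
\Tr[S_n\rho_n]\le\left(\sqrt{\Tr[T_n\rho_n]}+\sqrt{\Tr[S_n(I-T_n)\rho_n(I-T_n)]}\right)^2 .
\end{align*}
The second term is where the pinching inequality pays off. Since $I-T_n$ commutes with $\sigma^{\otimes n}$ and $\Tr[S_n\sigma^{\otimes n}]\le e^{-na}$, one gets $\Tr[S_n(I-T_n)\rho_n(I-T_n)]\le v_n\Tr[S_n(I-T_n)\mathcal{E}_n(\rho_n)(I-T_n)]\le v_ne^{n(a-\delta)}\Tr[S_n\sigma^{\otimes n}]\le v_ne^{-n\delta}$. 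Also $\Tr[T_n\rho_n]=\Tr[T_n\mathcal{E}_n(\rho_n)]$. Hence $\Tr[S_n\rho_n]\le\Tr[T_n\mathcal{E}_n(\rho_n)]+o(1)$, which is exactly the inequality you wrote down.

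The same gap affects your claim that pinching transfers the lower concentration $\underline{D}(0|\mathcal{E}(\vec{\rho}_i)\|\vec{\sigma})\ge d_i$ to the IID components. The sup-rate half does transfer easily: the pinched spectral projections commute with $\sigma^{\otimes n}$, so the Ogawa--Nagaoka strong converse applies to them directly.

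For this particular theorem you could even avoid pinching. Let $R^*$ denote the right-hand side of the claimed formula.
\begin{itemize}
\item Converse: apply the strong converse to every component with $d_i\le R^*$; their total weight exceeds $\varepsilon$.
\item Achievability: for $R<R^*$, apply the combined test of Proposition~\ref{main proposition: direct} only to the components with $d_i>R$, whose total weight is at least $1-\varepsilon$.
\end{itemize}
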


The theorem shows that the achievable exponent is determined by the largest threshold $R$ for which the total weight of components whose relative entropy is at most $R$ does not exceed the allowed type-I error $\varepsilon$. In particular, the exponent can jump as $\varepsilon$ crosses the cumulative weights of the low-relative-entropy components, as shown below.

\begin{example}[Counterexample to the worst-component characterization for $\varepsilon>0$]\label{counterexample varepsilon}
    Suppose that $\vec{\rho}=\left\{\rho_n\right\}_{n=1}^{\infty}$ is a sequence of mixed sources of two IID states defined on $\vec{\mathcal{H}}$, i.e.,
    \begin{align}
        \rho_n = p \overline{\rho}_1^{\otimes n} + (1-p) \overline{\rho}_2^{\otimes n},
    \end{align}
    where $p\in(0,1)$, and $\vec{\sigma}=\left\{\sigma^{\otimes n}\right\}_{n=1}^{\infty}$ is a sequence of IID states defined on $\vec{\mathcal{H}}$.
    Also, suppose that $D(\overline{\rho}_1||\sigma) < D(\overline{\rho}_2||\sigma)$.
    Then, it holds that for any $\varepsilon\in[0, 1)$,
    \begin{align}
        B(\varepsilon| \vec{\rho} \| \vec{\sigma}) = \begin{cases}
            D(\overline{\rho}_1||\sigma) & \text{if } 0 \le \varepsilon < p, \\
            D(\overline{\rho}_2||\sigma) & \text{if } p \le \varepsilon < 1.
        \end{cases}
    \end{align}
    Theorem~\ref{Main theorem} shows that, when $\varepsilon=0$, the supremum type-II error exponent in composite hypothesis testing for mixed sources of quantum states is determined by the worst component in the mixture.
    However, the present example shows that this worst-component characterization does not generally extend to $\varepsilon$-hypothesis testing with $\varepsilon\in(0,1)$, even when the second argument is restricted to an IID source.
    Indeed, once the allowed type-I error satisfies $\varepsilon\ge p$, the component $\overline{\rho}_1^{\otimes n}$ of weight $p$ can be discarded, and the exponent jumps from $D(\overline{\rho}_1\|\sigma)$ to $D(\overline{\rho}_2\|\sigma)$.
\end{example}

\subsection{Analysis of the spectral inf-divergence rate using pinching map}
\label{subsec: Analysis of the spectral inf-divergence rate using pinching map}

We analyze whether an analogy of Proposition~\ref{key proposition to the main theorem} with $\varepsilon=0$ holds for $\varepsilon\in[0, 1]$.
For this purpose, we show a pinching reduction for the spectral inf-divergence rate.
One direction follows from the monotonicity of hypothesis testing under completely positive and trace-preserving (CPTP) maps, while the converse direction uses the pinching inequality and requires that the number of distinct eigenvalues of the second argument grows subexponentially.
The pinching is defined as follows.

\begin{definition}[Pinching maps with respect to a sequence of quantum states]
    Let $\vec{\rho}=\{\rho_n\}_{n=1}^{\infty}$ and $\vec{\sigma}=\{\sigma_n\}_{n=1}^{\infty}$ be sequences of density operators defined on $\vec{\mathcal{H}}=\{\mathcal{H}_n\}_{n=1}^{\infty}$.
    Let the spectral decomposition of $\sigma_n$ be
    \begin{align}
        \sigma_n = \sum_{i=1}^{d_n} \lambda_{n, i}\Pi_{n, i},
    \end{align}
    where $d_n$ is the number of distinct eigenvalues of $\sigma_n$, and $\Pi_{n, i}$ is the corresponding projection for every $i$.
    For each $n$, we define a pinching map $\mathcal{E}_n: \mathcal{D}(\mathcal{H}_n)\to \mathcal{D}(\mathcal{H}_n)$ in terms of $\sigma_n$ as
    \begin{align}
        \mathcal{E}_n(\rho) = \sum_{i=1}^{d_n} \Pi_{n, i}\rho\Pi_{n, i}.
    \end{align}
    We also define the pinched sequence
    \begin{align}
        \mathcal{E}(\vec{\rho}) = \{\mathcal{E}_n(\rho_n)\}_{n=1}^{\infty}.
    \end{align}
By definition, $\mathcal{E}_n(\rho_n)$ commutes with $\sigma_n$ for every $n$, and hence we can utilize any classical results for the pair $\mathcal{E}(\vec{\rho})$ and $\vec{\sigma}$.
\end{definition}

Using the pinching map, the main proposition in this section is stated as follows.
We may obtain Proposition \ref{key proposition to the main theorem} as a special case (with $\varepsilon=0$) but with the additional assumption that $d_n=e^{o(n)}$.

\begin{proposition}[Pinching reduction for the spectral inf-divergence rate]
\label{sub main proposition}
    Let $\vec{\rho}=\{\rho_n\}_{n=1}^{\infty}$ and $\vec{\sigma}=\{\sigma_n\}_{n=1}^{\infty}$ be sequences of density operators,
    and suppose that $d_n=e^{o(n)}$, i.e., the number of distinct eigenvalues of $\sigma_n$ grows subexponentially in $n$.
    Then, for any $\varepsilon\in[0, 1]$, it holds that 
    \begin{align}
        \underline{D}(\varepsilon|\vec{\rho}\|\vec{\sigma})
        =
        \underline{D}(\varepsilon|\mathcal{E}(\vec{\rho})\|\vec{\sigma}).
    \end{align}
\end{proposition}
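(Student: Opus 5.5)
The plan is to prove the two inequalities separately. The inequality $\underline{D}(\varepsilon|\mathcal{E}(\vec{\rho})\|\vec{\sigma})\le\underline{D}(\varepsilon|\vec{\rho}\|\vec{\sigma})$ should follow from a data-processing argument. The reverse inequality should follow from comparing spectral projections, combining the commutation of the pinched projections with $\sigma_n$, the pinching inequality $\rho_n\le d_n\mathcal{E}_n(\rho_n)$, and the Cauchy--Schwarz bound of Lemma~\ref{lem:Cauchy_Schwarz}.

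\emph{Monotonicity direction.} By Lemma~\ref{lem:B=D} it suffices to compare the operational quantities, i.e., to show $B(\varepsilon|\mathcal{E}(\vec{\rho})\|\vec{\sigma})\le B(\varepsilon|\vec{\rho}\|\vec{\sigma})$. Take any test sequence $\{T_n\}$ for the pair $(\mathcal{E}(\vec{\rho}),\vec{\sigma})$, and use $\mathcal{E}_n(T_n)$ as a test for $(\vec{\rho},\vec{\sigma})$. Since $\mathcal{E}_n$ is self-adjoint and unital, $0\le\mathcal{E}_n(T_n)\le I_n$ holds and $\Tr[\mathcal{E}_n(T_n)\rho_n]=\Tr[T_n\mathcal{E}_n(\rho_n)]$. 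Since $\mathcal{E}_n(\sigma_n)=\sigma_n$, we also have $\Tr[\mathcal{E}_n(T_n)\sigma_n]=\Tr[T_n\sigma_n]$. Both error probabilities are therefore preserved, which gives the inequality. This step does not need $d_n=e^{o(n)}$.

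\emph{Pinching direction.} Fix $a<\underline{D}(\varepsilon|\vec{\rho}\|\vec{\sigma})$ and $\delta>0$, and set
\[
P_n=\{\rho_n-e^{na}\sigma_n\ge0\},\qquad
Q_n=\{\mathcal{E}_n(\rho_n)-e^{n(a-\delta)}\sigma_n<0\}.
\]
By assumption $\limsup_n\Tr[(I_n-P_n)\rho_n]\le\varepsilon$. As in the proof of Proposition~\ref{main proposition: direct}, the definition of $P_n$ gives $\Tr[P_n\sigma_n]\le e^{-na}$. The goal is $\limsup_n\Tr[Q_n\mathcal{E}_n(\rho_n)]\le\varepsilon$, which yields $a-\delta\le\underline{D}(\varepsilon|\mathcal{E}(\vec{\rho})\|\vec{\sigma})$; letting $\delta\to0$ and $a$ tend to $\underline{D}(\varepsilon|\vec{\rho}\|\vec{\sigma})$ then finishes the proof.

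Since $Q_n$ is a spectral projection of an operator that is block diagonal with respect to the eigenprojections of $\sigma_n$, it commutes with $\sigma_n$ and with $\mathcal{E}_n(\rho_n)$. Hence $\Tr[Q_n\mathcal{E}_n(\rho_n)]=\Tr[Q_n\rho_n]$. Combining the pinching inequality with the definition of $Q_n$ gives
\[
Q_n\rho_nQ_n\le d_nQ_n\mathcal{E}_n(\rho_n)Q_n\le d_ne^{n(a-\delta)}Q_n\sigma_nQ_n .
\]

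The key step is to apply the decomposition of Lemma~\ref{lem:Cauchy_Schwarz} with $T=P_n$, $\Pi=Q_n$, $\rho=\rho_n$. Rather than requiring an operator bound $Q_nP_nQ_n\le cQ_n$, we use the trace-level quantity that the lemma's proof actually needs:
\[
\Tr[Q_nP_nQ_n\rho_n]=\Tr[P_nQ_n\rho_nQ_n]\le d_ne^{n(a-\delta)}\Tr[P_nQ_n\sigma_nQ_n]\le d_ne^{n(a-\delta)}\Tr[P_n\sigma_n]\le d_ne^{-n\delta}.
\]
The middle inequality uses $Q_n\sigma_nQ_n\le\sigma_n$, which holds because $Q_n$ commutes with $\sigma_n$. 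Because $d_n=e^{o(n)}$, the right-hand side tends to $0$. Write $\eta_n=d_ne^{-n\delta}$. The same Cauchy--Schwarz decomposition as in the proof of Lemma~\ref{lem:Cauchy_Schwarz} then gives
\[
\Tr[P_n\rho_n]\le\left(\sqrt{\eta_n}+\sqrt{1-\Tr[Q_n\rho_n]}\right)^2,
\]
and therefore
\[
\Tr[Q_n\rho_n]\le 1-\left(\sqrt{\Tr[P_n\rho_n]}-\sqrt{\eta_n}\right)_+^2=\Tr[(I_n-P_n)\rho_n]+o(1).
\]
Taking $\limsup$ gives the goal.

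I expect the main obstacle to be exactly this last step: a usable relation between the two non-commuting projections $P_n$ and $Q_n$. The remedy is that only the trace $\Tr[Q_nP_nQ_n\rho_n]$ has to be controlled, not an operator inequality. This trace is handled through the commutation of $Q_n$ with $\sigma_n$, and the subexponential factor $d_n$ is absorbed by the slack $e^{-n\delta}$. The edge cases $\varepsilon=1$ (trivial) and $a$ arbitrarily negative also need a brief check.
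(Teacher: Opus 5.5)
Your proposal is correct and follows essentially the same route as the paper. The monotonicity direction is the paper's CPTP data-processing argument, specialized to the self-adjoint, unital pinching with $\mathcal{E}_n(\sigma_n)=\sigma_n$. Your trace-level Cauchy--Schwarz step with $P_n$ and $Q_n$ is exactly the paper's two-projection estimate (Lemma~\ref{Cauchy-Schwarz 2} with $\Pi_1=P_n$, $\Pi_2=I_n-Q_n$), combined with the same chain (pinching inequality, then the definition of $Q_n$, then $\Tr[P_n\sigma_n]\le e^{-na}$) to obtain the $d_ne^{-n\delta}$ bound.
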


\begin{proof}
    The inequality
    \begin{align}
        \underline{D}(\varepsilon|\vec{\rho}\|\vec{\sigma})
        \le
        \underline{D}(\varepsilon|\mathcal{E}(\vec{\rho})\|\vec{\sigma})
    \end{align}
    is shown by Proposition~\ref{re: direct part of pinching}, which uses the pinching inequality and the assumption $d_n=e^{o(n)}$.
    Conversely, we show Proposition~\ref{prop:CPTP_monotonicity_spectral_inf_divergence} and apply it to the pinching map $\mathcal{E}_n$ to obtain
    \begin{align}
        \underline{D}(\varepsilon|\vec{\rho}\|\vec{\sigma})
        \ge
        \underline{D}(\varepsilon|\mathcal{E}(\vec{\rho})\|\mathcal{E}(\vec{\sigma})).
    \end{align}
    Since $\mathcal{E}_n(\sigma_n)=\sigma_n$, we obtain
    \begin{align}
        \underline{D}(\varepsilon|\vec{\rho}\|\vec{\sigma})
        \ge
        \underline{D}(\varepsilon|\mathcal{E}(\vec{\rho})\|\vec{\sigma}).
    \end{align}
    Combining these inequalities proves the claim.
\end{proof}

For the first direction of the inequality, we prepare the following lemma.
\begin{lemma}[Cauchy-Schwarz estimate for two projections]\label{Cauchy-Schwarz 2}
    Let $\rho$ be a density operator and $\Pi_1$, $\Pi_2$ be projections.
    Then it holds that
    \begin{align}
        \Tr[\Pi_1\rho]
        \le
        \left(
            \sqrt{\Tr[\Pi_2\rho]}
            +
            \sqrt{\Tr[\Pi_1(I-\Pi_2)\rho(I-\Pi_2)]}
        \right)^2.
    \end{align}
\end{lemma}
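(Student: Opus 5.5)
We will prove this by the same Cauchy--Schwarz mechanism as in Lemma~\ref{lem:Cauchy_Schwarz}, splitting $\Pi_1$ into its compressions onto $\Pi_2$ and $\Pi_2' \coloneqq I-\Pi_2$. First write $\Tr[\Pi_1\rho]=\Tr[\Pi_1\sqrt{\rho}\sqrt{\rho}]=\|\Pi_1\sqrt{\rho}\|_2^2$, where $\|X\|_2=\sqrt{\Tr[X^\dagger X]}$ is the Hilbert--Schmidt norm (using $\Pi_1^2=\Pi_1$). Next insert $I=\Pi_2+\Pi_2'$ to the right of $\Pi_1$:
\begin{align}
\Pi_1\sqrt{\rho}=\Pi_1\Pi_2\sqrt{\rho}+\Pi_1\Pi_2'\sqrt{\rho}.
\end{align}

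Then apply the triangle inequality for $\|\cdot\|_2$:
\begin{align}
\sqrt{\Tr[\Pi_1\rho]}\le \|\Pi_1\Pi_2\sqrt{\rho}\|_2+\|\Pi_1\Pi_2'\sqrt{\rho}\|_2 .
\end{align}
The triangle inequality is exactly the Cauchy--Schwarz bound on the cross terms, as in the proof of Lemma~\ref{lem:Cauchy_Schwarz}, so we could equally expand the square and bound the two off-diagonal terms by Cauchy--Schwarz.

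Now evaluate each term. For the first, $\|\Pi_1\Pi_2\sqrt{\rho}\|_2^2=\Tr[\sqrt{\rho}\Pi_2\Pi_1\Pi_2\sqrt{\rho}]\le \Tr[\sqrt{\rho}\Pi_2\sqrt{\rho}]=\Tr[\Pi_2\rho]$, since $\Pi_2\Pi_1\Pi_2\le\Pi_2$. For the second, $\|\Pi_1\Pi_2'\sqrt{\rho}\|_2^2=\Tr[\sqrt{\rho}\Pi_2'\Pi_1\Pi_2'\sqrt{\rho}]=\Tr[\Pi_1\Pi_2'\rho\Pi_2']$ by cyclicity and idempotence of $\Pi_1$. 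Substituting these into the triangle inequality and squaring gives the claim.

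The only delicate point is the choice of where to insert $\Pi_2+\Pi_2'$. Inserting it on the right of $\Pi_1$ is what makes the second term produce exactly the expression $\Tr[\Pi_1(I-\Pi_2)\rho(I-\Pi_2)]$ in the statement, and makes the first term bounded by $\Tr[\Pi_2\rho]$ rather than involving $\Pi_1$. With that choice the argument is routine.
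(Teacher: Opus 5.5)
Your proof is correct and is essentially the paper's argument. The paper expands $\Tr[\Pi_1\rho]$ into the four terms coming from $\rho=(\Pi_2+\Pi_2')\rho(\Pi_2+\Pi_2')$, bounds the two diagonal terms exactly as you do, and controls the two off-diagonal terms by Cauchy--Schwarz. That is precisely the squared form of your Hilbert--Schmidt triangle inequality for $\Pi_1\Pi_2\sqrt{\rho}+\Pi_1\Pi_2'\sqrt{\rho}$, so your norm formulation is just a more compact packaging of the same estimate.
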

\begin{proof}
    Let $\Pi_2' = I - \Pi_2$. Then, we decompose
    \begin{align}
        \Tr[\Pi_1\rho]
        &=
        \Tr[\Pi_1\Pi_2\rho\Pi_2]
        +
        \Tr[\Pi_1\Pi_2'\rho\Pi_2']
        +
        \Tr[\Pi_1\Pi_2\rho\Pi_2']
        +
        \Tr[\Pi_1\Pi_2'\rho\Pi_2].
    \end{align}
    Since $\Pi_2\rho\Pi_2$ is positive semidefinite, we have
    \begin{align}
        \Tr[\Pi_1\Pi_2\rho\Pi_2]
        \le
        \Tr[\Pi_2\rho\Pi_2]
        =
        \Tr[\Pi_2\rho].
    \end{align}
    By Cauchy-Schwarz inequality,
    \begin{align}
        \left|\Tr[\Pi_1\Pi_2\rho\Pi_2']\right|
        &=
        \left|\Tr[\Pi_1\Pi_2\rho\Pi_2'\Pi_1]\right|\\
        &=
        \left| \Tr[(\sqrt{\rho}\Pi_2\Pi_1)^\dagger (\sqrt{\rho}\Pi_2'\Pi_1)] \right|\\
        &\le
        \sqrt{
            \Tr\left[ \Pi_1\Pi_2\rho\Pi_2\Pi_1 \right]
            \Tr\left[ \Pi_1\Pi_2'\rho\Pi_2'\Pi_1 \right]
        }\\
        &\le
        \sqrt{
            \Tr\left[
                \Pi_2\rho
            \right]
            \Tr\left[
                \Pi_1\Pi_2'\rho\Pi_2'
            \right]
        }.
    \end{align}
    We have the same bound for $\Tr[\Pi_1\Pi_2'\rho\Pi_2]$.
    Therefore, we obtain the desired bound.
\end{proof}

Using this, we prove the following inequality.

\begin{proposition}[Pinching reduction]\label{re: direct part of pinching}
Let $\vec{\rho}=\{\rho_n\}_{n=1}^{\infty}$ and $\vec{\sigma}=\{\sigma_n\}_{n=1}^{\infty}$ be sequences of density operators,
and let $\mathcal{E}(\vec{\rho})$ be the pinched sequence defined in terms of $\vec{\sigma}$.
Suppose $d_n=e^{o(n)}$. Then, for any $\varepsilon\in[0,1]$, it holds that
\begin{align}
\underline{D}(\varepsilon|\vec{\rho}\|\vec{\sigma})
\le
\underline{D}(\varepsilon|\mathcal{E}(\vec{\rho})\|\vec{\sigma}).
\end{align}
\end{proposition}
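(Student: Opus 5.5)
The plan is to fix $a<\underline{D}(\varepsilon|\vec{\rho}\|\vec{\sigma})$ and $\delta>0$, and to show that $a-2\delta$ is admissible for the pinched pair. Since $a$ and $\delta$ are arbitrary, this gives the claim. I would work with two projections:
\begin{align}
T_n\coloneqq\{\rho_n-e^{na}\sigma_n\ge 0\},\qquad
P_n\coloneqq\{\mathcal{E}_n(\rho_n)-e^{n(a-2\delta)}\sigma_n\ge 0\}.
\end{align}
By assumption, $\limsup_n\Tr[(I-T_n)\rho_n]\le\varepsilon$. As in the proofs of Propositions~\ref{main proposition: direct} and~\ref{main proposition: converse}, we also have $\Tr[T_n\sigma_n]\le e^{-na}$.

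The first structural point is that $P_n$ is a function of the commuting pair $\mathcal{E}_n(\rho_n)$ and $\sigma_n$. Hence $P_n$ commutes with every $\Pi_{n,i}$, and
\begin{align}
\Tr[(I-P_n)\mathcal{E}_n(\rho_n)]=\Tr[(I-P_n)\rho_n].
\end{align}
It therefore suffices to show $\limsup_n\Tr[(I-P_n)\rho_n]\le\varepsilon$.

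Next I would combine the pinching inequality $\rho_n\le d_n\mathcal{E}_n(\rho_n)$ with this commutation, writing $Q_n\coloneqq I-P_n$, to get
\begin{align}
Q_n\rho_nQ_n\le d_n\,Q_n\mathcal{E}_n(\rho_n)Q_n\le d_n e^{n(a-2\delta)}\,Q_n\sigma_nQ_n .
\end{align}
Pairing this with $T_n$ gives
\begin{align}
\Tr[T_nQ_n\rho_nQ_n]\le d_ne^{n(a-2\delta)}\Tr[Q_nT_nQ_n\sigma_n].
\end{align}
Because $Q_n$ commutes with $\sigma_n$, we have $\Tr[Q_nT_nQ_n\sigma_n]=\Tr[T_n\,Q_n\sigma_nQ_n]\le\Tr[T_n\sigma_n]\le e^{-na}$. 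This yields
\begin{align}
\Tr[T_nQ_n\rho_nQ_n]\le d_ne^{-2n\delta}\to 0,
\end{align}
where the limit uses $d_n=e^{o(n)}$. So the part of $\rho_n$ that the pinched test rejects has vanishing overlap with the accepting projection $T_n$ of the unpinched test.

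The remaining step is to convert this into $\Tr[Q_n\rho_n]\le\Tr[(I-T_n)\rho_n]+o(1)$. I would apply Lemma~\ref{Cauchy-Schwarz 2} with $\rho=\rho_n$, $\Pi_1=Q_n$ and $\Pi_2=I-T_n$, which gives
\begin{align}
\Tr[Q_n\rho_n]\le\Big(\sqrt{\Tr[(I-T_n)\rho_n]}+\sqrt{\Tr[Q_nT_n\rho_nT_n]}\Big)^2.
\end{align}
Taking $\limsup$, the first term contributes at most $\varepsilon$.

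The main obstacle is the cross term $\Tr[Q_nT_n\rho_nT_n]$. It has the \emph{opposite} sandwich order from the quantity $\Tr[T_nQ_n\rho_nQ_n]$ controlled above, and $T_n$ does not commute with $\sigma_n$. I would first try to close the gap with the one-sided projection bound used in the proof of Lemma~\ref{Cauchy-Schwarz 2}: estimate $\|Q_n T_n\sqrt{\rho_n}\|_2$ by inserting the decomposition $\rho_n\le d_n\mathcal{E}_n(\rho_n)$ on the $\sqrt{\rho_n}$ side, and then use that $Q_n$ commutes with $\mathcal{E}_n(\rho_n)$ to move $Q_n$ next to $\rho_n$. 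This would reduce the cross term to $d_n e^{n(a-2\delta)}\Tr[T_n\sigma_n]$, up to another vanishing term. The spare $\delta$ is reserved to absorb this extra step.

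If that does not go through, the fallback is to replace $T_n$ by its pinched version, i.e., work with the pair $(\mathcal{E}_n(\rho_n),\sigma_n)$ throughout and use the monotonicity direction (Proposition~\ref{prop:CPTP_monotonicity_spectral_inf_divergence}) only for the type-II side. In that version every relevant projection commutes with $\sigma_n$, and the two sandwich orders coincide.

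Once the cross term vanishes, we obtain $\limsup_n\Tr[(I-P_n)\mathcal{E}_n(\rho_n)]\le\varepsilon$. Hence $a-2\delta\le\underline{D}(\varepsilon|\mathcal{E}(\vec{\rho})\|\vec{\sigma})$, and letting $\delta\to0$ and $a\uparrow\underline{D}(\varepsilon|\vec{\rho}\|\vec{\sigma})$ completes the proof.
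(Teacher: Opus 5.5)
Your argument is correct up to and including the bound $\Tr[T_nQ_n\rho_nQ_n]\le d_ne^{-2n\delta}$, and that bound is exactly the key estimate in the paper. The gap is the last step. You apply Lemma~\ref{Cauchy-Schwarz 2} with $\Pi_1=Q_n$ and $\Pi_2=I-T_n$. This produces the cross term $\Tr[Q_nT_n\rho_nT_n]$, and neither of your remedies controls it.

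In the first remedy, inserting $\rho_n\le d_n\mathcal{E}_n(\rho_n)$ gives $d_n\Tr[Q_nT_n\mathcal{E}_n(\rho_n)T_n]$. Here $T_n$ still sits between $Q_n$ and $\mathcal{E}_n(\rho_n)$, so the fact that $Q_n$ commutes with $\mathcal{E}_n(\rho_n)$ buys nothing. Moreover, the definition of $T_n$ only gives the lower bound $T_n\rho_nT_n\ge e^{na}T_n\sigma_nT_n$, never an upper bound in terms of $\sigma_n$.

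In fact this term need not vanish. Take $a>0$ and a two-dimensional block, with the rest of $\sigma_n$'s mass on an orthogonal direction. Let $\rho_n=|\psi\rangle\langle\psi|$ with $\psi=\sqrt{1-t}\,|0\rangle+\sqrt{t}\,|1\rangle$, $t=c\,e^{-4n\delta}/2$ for some $c\in(0,1)$. Let $e^{na}\sigma_n=\mathrm{diag}(k_0,k_1)$ on the block, with $k_1=e^{-2n\delta}/2$ and $k_0=(1-t)/(1-c\,e^{-2n\delta}/2)-k_1\approx 1$. This choice makes the top eigenvalue of $\rho_n-e^{na}\sigma_n$ equal to $k_1$. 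Then $|1\rangle$ lies in the range of your $Q_n$ and $|0\rangle$ in that of $P_n$, so $\Tr[Q_n\rho_n]=t\to 0$. But $T_n$ is the rank-one projection onto $(e^{na}\sigma_n+k_1)^{-1}\psi\approx|0\rangle+\sqrt{c/2}\,|1\rangle$. Hence $\Tr[Q_nT_n\rho_nT_n]\to\frac{c/2}{(1+c/2)^2}>0$, while $\Tr[(I-T_n)\rho_n]\to\frac{c/2}{1+c/2}$. So your inequality cannot deliver $\limsup_n\Tr[Q_n\rho_n]\le\varepsilon$.

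Your fallback is circular. Proposition~\ref{prop:CPTP_monotonicity_spectral_inf_divergence} gives the opposite inequality. The hypothesis $a<\underline{D}(\varepsilon|\vec{\rho}\|\vec{\sigma})$ is information about the unpinched test acting on $\rho_n$, and transferring it to the pinched pair is precisely what must be proved.

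The missing idea is to orient the lemma the other way and compare acceptance probabilities instead of rejection probabilities. Apply Lemma~\ref{Cauchy-Schwarz 2} with $\Pi_1=T_n$ (unpinched) and $\Pi_2=P_n$ (pinched). The cross term is then $\Tr[T_n(I-P_n)\rho_n(I-P_n)]=\Tr[T_nQ_n\rho_nQ_n]$, which is exactly the quantity you already bounded, and
\[
\Tr[T_n\rho_n]\le\Bigl(\sqrt{\Tr[P_n\rho_n]}+\sqrt{d_ne^{-2n\delta}}\Bigr)^2\le\Tr[P_n\rho_n]+2\sqrt{d_ne^{-2n\delta}}+d_ne^{-2n\delta}.
\]
Therefore $\Tr[Q_n\mathcal{E}_n(\rho_n)]=\Tr[Q_n\rho_n]\le\Tr[(I-T_n)\rho_n]+o(1)$, which finishes the proof. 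This is the paper's argument, with its $S_n$ and $T_n$ playing the roles of your $T_n$ and $P_n$; a single $\delta$ suffices there.
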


\begin{proof}
    Let $a$ be any real number satisfying $a<\underline{D}(\varepsilon|\vec{\rho}\|\vec{\sigma})$.
    It suffices to show that $a\le \underline{D}(\varepsilon|\mathcal{E}(\vec{\rho})\|\vec{\sigma})$.
    Let $\delta>0$ be arbitrary. Define
    \begin{align}
        S_n&\coloneqq \{\rho_n-e^{na}\sigma_n\geq 0\},\\
        T_n&\coloneqq \{\mathcal{E}_n(\rho_n)-e^{n(a-\delta)}\sigma_n\geq 0\}.
    \end{align}
    Use lemma (\ref{Cauchy-Schwarz 2}) with $\Pi_1=S_n$, $\Pi_2=T_n$, and $\rho=\rho_n$ to obtain
    \begin{align}\label{use case of Cauchy-Schwarz 2}
        \Tr[S_n\rho_n]
        \le
        \left(
            \sqrt{\Tr[T_n\rho_n]}
            +
            \sqrt{\Tr[S_n(I_n-T_n)\rho_n(I_n-T_n)]}
        \right)^2.
    \end{align}

    By the pinching inequality \cite[Lemma 3.10]{hayashi2017qit}, we have
    \begin{align}
        \rho_n \le d_n \mathcal{E}_n(\rho_n),
    \end{align}
    and thus
    \begin{align}
        (I_n-T_n)\rho_n(I_n-T_n)
        \le
        d_n (I_n-T_n)\mathcal{E}_n(\rho_n)(I_n-T_n).
    \end{align}
    Since $S_n$ is positive semidefinite, we have
    \begin{align}\label{ineq 1}
        \Tr\left[ S_n(I_n-T_n)\rho_n(I_n-T_n) \right]
        &\le
        d_n \Tr\left[ S_n(I_n-T_n)\mathcal{E}_n(\rho_n)(I_n-T_n) \right].
    \end{align}

    By the definition of $T_n$, we have
    \begin{align}
        (I_n-T_n)\mathcal{E}_n(\rho_n)(I_n-T_n)
        \le
        e^{n(a-\delta)}(I_n-T_n)\sigma_n(I_n-T_n)
        \le
        e^{n(a-\delta)}\sigma_n,
    \end{align}
    where the second inequality holds since $\sigma_n$ commutes with $(I_n-T_n)$. Thus, we have
    \begin{align}\label{ineq 2}
        \Tr\left[ S_n(I_n-T_n)\mathcal{E}_n(\rho_n)(I_n-T_n) \right]
        &\le
        e^{n(a-\delta)} \Tr\left[ S_n\sigma_n \right].
    \end{align}

    Likewise, by the definition of $S_n$, we have
    \begin{align}
        S_n \sigma_n S_n
        \le
        e^{-na} S_n \rho_n S_n,
    \end{align}
    and thus
    \begin{align}\label{ineq 3}
        \Tr\left[ S_n\sigma_n \right]
        &\le
        e^{-na} \Tr\left[ S_n\rho_n \right]\\
        &\le e^{-na}.
    \end{align}

    Combining~\eqref{ineq 1},~\eqref{ineq 2}, and~\eqref{ineq 3}, we obtain
    \begin{align}
        \Tr\left[ S_n(I_n-T_n)\rho_n(I_n-T_n) \right]
        &\le
        d_n \Tr\left[ S_n(I_n-T_n)\mathcal{E}_n(\rho_n)(I_n-T_n) \right]\\
        &\le
        d_n e^{n(a-\delta)} \Tr\left[ S_n\sigma_n \right]\\
        &\le
        d_n e^{-n\delta} \Tr\left[ S_n\rho_n \right]\\
        &\le
        d_n e^{-n\delta}.
    \end{align}
    This inequality and (\ref{use case of Cauchy-Schwarz 2}) leads to
    \begin{align}
        \Tr[S_n\rho_n]
        &\le
        \left(
            \sqrt{\Tr[T_n\rho_n]}
            +
            \sqrt{d_n e^{-n\delta}}
        \right)^2\\
        &\le
        \Tr[T_n\rho_n]
        +
        2\sqrt{d_n e^{-n\delta}}
        +
        d_n e^{-n\delta}.
    \end{align}
    Since $d_n=e^{o(n)}$, it follows $d_n e^{-n\delta} \to 0$ as $n \to \infty$.
    Therefore, we have
    \begin{align}
        \liminf_{n\to\infty} \Tr[S_n\rho_n]
        &\le
        \liminf_{n\to\infty} \Tr[T_n\rho_n]
        =
        \liminf_{n\to\infty} \Tr[T_n\mathcal{E}_n(\rho_n)].
    \end{align}
    Equivalently,
    \begin{align}
        \limsup_{n\to\infty} \Tr\left[ (I_n-T_n)\mathcal{E}_n(\rho_n) \right]
        \le
        \limsup_{n\to\infty} \Tr\left[ (I_n-S_n)\rho_n \right].
    \end{align}
    Since $a<\underline{D}(\varepsilon|\vec{\rho}\|\vec{\sigma})$, the right-hand side is at most $\varepsilon$, and thus
    \begin{align}
        \limsup_{n\to\infty} \Tr\left[ (I_n-T_n)\mathcal{E}_n(\rho_n) \right]
        \le
        \varepsilon.
    \end{align}
    This implies that
    \begin{align}
        a-\delta
        \le
        \underline{D}(\varepsilon|\mathcal{E}(\vec{\rho})\|\vec{\sigma}).
    \end{align}
    Since $\delta>0$ was arbitrary, we conclude that
    \begin{align}
        a\le
        \underline{D}(\varepsilon|\mathcal{E}(\vec{\rho})\|\vec{\sigma}),
    \end{align}
    which completes the proof.
\end{proof}

The reverse direction follows from the following monotonicity property of the spectral inf-divergence rate under CPTP maps.

\begin{proposition}[Monotonicity of the spectral inf-divergence rate]
\label{prop:CPTP_monotonicity_spectral_inf_divergence}
    Let $\vec{\rho}=\{\rho_n\}_{n=1}^{\infty}$ and $\vec{\sigma}=\{\sigma_n\}_{n=1}^{\infty}$ be sequences of density operators defined on $\vec{\mathcal{H}}=\{\mathcal{H}_n\}_{n=1}^{\infty}$.
    Let $\mathcal{E}_n'$ be a completely positive and trace-preserving (CPTP) map acting on $\mathcal{D}(\mathcal{H}_n)$, and define $\mathcal{E}'(\vec{\rho})=\{\rho_n'=\mathcal{E}_n'(\rho_n)\}_{n=1}^\infty$ and $\mathcal{E}'(\vec{\sigma})=\{\sigma_n'=\mathcal{E}_n'(\sigma_n)\}_{n=1}^\infty$.
    Then, for any $\varepsilon\in[0,1]$, it holds that
    \begin{align}
        \underline{D}(\varepsilon | \vec{\rho}\|\vec{\sigma}) \ge \underline{D}(\varepsilon | \mathcal{E}'(\vec{\rho}) \| \mathcal{E}'(\vec{\sigma})).
    \end{align}
\end{proposition}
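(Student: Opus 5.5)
The plan is to go through the operational characterization in Lemma~\ref{lem:B=D}, namely $\underline{D}(\varepsilon|\vec{\rho}\|\vec{\sigma})=B(\varepsilon|\vec{\rho}\|\vec{\sigma})$. This reduces the claim to monotonicity of the supremum achievable type-II exponent $B$ under CPTP maps, which is a simple data-processing argument in the Heisenberg picture. I would not manipulate the projections $\{\rho_n'-e^{na}\sigma_n'\ge 0\}$ directly.

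Concretely, fix any sequence of tests $\vec{T}'=\{T_n'\}$ with $0\le T_n'\le I_n$ that is feasible for the processed pair, i.e.\ $\limsup_n \Tr[(I_n-T_n')\mathcal{E}_n'(\rho_n)]\le\varepsilon$. Define pulled-back tests $T_n\coloneqq \mathcal{E}_n'^{\dagger}(T_n')$, where $\mathcal{E}_n'^{\dagger}$ is the adjoint (unital, completely positive) map. Since $\mathcal{E}_n'^{\dagger}$ is positive and unital, we have $0\le T_n\le I_n$. Moreover, $\Tr[(I_n-T_n)\rho_n]=\Tr[\mathcal{E}_n'^{\dagger}(I_n-T_n')\rho_n]=\Tr[(I_n-T_n')\mathcal{E}_n'(\rho_n)]$, so the type-I errors coincide. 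Likewise $\Tr[T_n\sigma_n]=\Tr[T_n'\mathcal{E}_n'(\sigma_n)]$, so the type-II errors coincide. Hence every feasible exponent for $(\mathcal{E}'(\vec{\rho}),\mathcal{E}'(\vec{\sigma}))$ is also feasible for $(\vec{\rho},\vec{\sigma})$ with the same value of $\liminf_n -\frac1n\log\beta_n$. Taking the supremum gives $B(\varepsilon|\vec{\rho}\|\vec{\sigma})\ge B(\varepsilon|\mathcal{E}'(\vec{\rho})\|\mathcal{E}'(\vec{\sigma}))$. Applying Lemma~\ref{lem:B=D} on both sides yields the statement for every $\varepsilon\in[0,1]$.

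The only point requiring care, and the main potential obstacle, is that Lemma~\ref{lem:B=D} is quoted from Ref.~\cite{4069150} for arbitrary sequences on arbitrary Hilbert spaces and all $\varepsilon\in[0,1]$. I need it to apply to the processed sequences and to the endpoint case $\varepsilon=1$, where both sides are trivially $+\infty$. If one prefers not to rely on it, I would instead argue directly from Definition~\ref{def D}.

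For the direct route, take $a<\underline{D}(\varepsilon|\mathcal{E}'(\vec{\rho})\|\mathcal{E}'(\vec{\sigma}))$ and $\delta>0$. Pull back the spectral test $T_n'=\{\rho_n'-e^{na}\sigma_n'\ge0\}$ to $T_n=\mathcal{E}_n'^{\dagger}(T_n')$; it satisfies $\Tr[T_n(\rho_n-e^{na}\sigma_n)]\ge 0$ and has the same type-I error. Then invoke the Neyman--Pearson inequality exactly as in the proof of Proposition~\ref{main proposition: converse}, comparing $\{\rho_n-e^{n(a-\delta)}\sigma_n\ge0\}$ with $T_n$:
\begin{align}
\Tr\!\left[\left(I_n-\{\rho_n-e^{n(a-\delta)}\sigma_n\ge0\}\right)\rho_n\right]\le \Tr[(I_n-T_n)\rho_n]+e^{n(a-\delta)}\Tr[T_n\sigma_n]\le \Tr[(I_n-T_n')\rho_n']+e^{-n\delta}.
\end{align}
Taking $\limsup$ shows $a-\delta\le\underline{D}(\varepsilon|\vec{\rho}\|\vec{\sigma})$. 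Letting $\delta\to0$ and $a$ approach the supremum gives the claim.
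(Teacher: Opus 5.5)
Your main argument is correct and is essentially the paper's proof: you reduce to monotonicity of $B$ via Lemma~\ref{lem:B=D}, pull back any feasible test sequence through the unital completely positive adjoint $\mathcal{E}_n^{\prime\dagger}$ so that both error probabilities are unchanged, and take the supremum. Your backup direct route via the Neyman--Pearson comparison also works and avoids citing Lemma~\ref{lem:B=D}, provided you take $a$ inside the defining set for the processed pair rather than merely below its supremum (or first show, by the same Neyman--Pearson trick, that the defining condition is asymptotically monotone in $a$).
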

\begin{proof}
    Using \eqref{B=D}, it suffices to show the following inequality:
    \begin{align}
        B(\varepsilon | \vec{\rho}\|\vec{\sigma}) \ge B(\varepsilon | \mathcal{E}'(\vec{\rho}) \| \mathcal{E}'(\vec{\sigma})).
    \end{align}
    For a sequence of test $\vec{T}'=\{T_n'\}_{n=1}^\infty$, we have
    \begin{align}
        \Tr\left[\rho_n'T_n'\right]
        &= \Tr\left[\mathcal{E}_n'(\rho_n)T_n'\right]\\
        &= \Tr\left[\rho_n \mathcal{E}_n^{\prime\dagger}(T_n')\right].
    \end{align}
    where $\mathcal{E}_n^{\prime\dagger}$ is the adjoint of $\mathcal{E}_n'$, and $T_n=\mathcal{E}_n'^\dagger(T_n')$ is a valid test since $\mathcal{E}_n'^\dagger$ is unital and completely positive.
    Similarly, we have
    \begin{align}
        \Tr\left[\sigma_n'T_n'\right] = \Tr\left[\sigma_n T_n\right].
    \end{align}
    Thus, it follows that
    \begin{align}
        &\left\{ \liminf_{n\to\infty} -\frac{1}{n}\log \Tr[T_n' \sigma_n'] \middle| 0 \le T_n' \le I_n, \limsup_{n\to\infty} \Tr[(I_n-T_n')\rho_n'] \le \varepsilon \right\}
        \\&\subset
        \left\{\liminf_{n\to\infty} -\frac{1}{n}\log \Tr[T_n \sigma_n] \middle| 0 \le T_n \le I_n, \limsup_{n\to\infty} \Tr[(I_n-T_n)\rho_n] \le \varepsilon \right\}.
    \end{align}
    Taking $\sup$ over $T_n'$ and $T_n$ yields the desired inequality.
\end{proof}

\subsection{Proof of Theorem \ref{Sub main theorem}}
\label{proof of the sub main theorem}

Now we are ready to prove Theorem~\ref{Sub main theorem}. Proposition~\ref{sub main proposition} states that the spectral inf-divergence rate does not change when we apply, to $\rho_n$, the pinching map with respect to $\sigma_n$. As $\mathcal{E}_n(\rho_n)$ commutes with $\sigma_n$, we can directly apply the classical results.

\begin{proof}[Proof of Theorem~\ref{Sub main theorem}]
    Let
    \begin{align}
        \rho_n
        =
        \sum_{i\in J}
        p_i\overline{\rho}_i^{\otimes n},
    \end{align}
    where $J$ is finite, $p_i>0$ for all $i\in J$, and $\sum_{i\in J}p_i=1$.
    Let
    \begin{align}
        \vec{\sigma}
        =
        \{\sigma^{\otimes n}\}_{n=1}^{\infty}
    \end{align}
    be an IID source.

    Since $\sigma$ acts on a fixed finite-dimensional Hilbert space, the number of distinct eigenvalues of $\sigma^{\otimes n}$ grows at most polynomially in $n$.
    In particular, if $d_n$ denotes the number of distinct eigenvalues of $\sigma^{\otimes n}$, then
    \begin{align}
        d_n
        =
        e^{o(n)}.
    \end{align}
    Hence Proposition~\ref{sub main proposition} gives
    \begin{align}
        \underline{D}(\varepsilon|\vec{\rho}\|\vec{\sigma})
        =
        \underline{D}(\varepsilon|\mathcal{E}(\vec{\rho})\|\vec{\sigma}).
        \label{eq:pinching_reduction_mixed_source}
    \end{align}

    For every $n$, the states $\mathcal{E}_n(\rho_n)$ and $\sigma^{\otimes n}$ commute.
    Therefore, after simultaneous diagonalization, the pair $\mathcal{E}(\vec{\rho})$ and $\vec{\sigma}$ can be regarded as a pair of classical sources.
    Moreover,
    \begin{align}
        \mathcal{E}_n(\rho_n)
        =
        \sum_{i\in J}
        p_i\mathcal{E}_n(\overline{\rho}_i^{\otimes n}),
    \end{align}
    so $\mathcal{E}(\vec{\rho})$ is a finite mixture of the classical sources obtained from
    \begin{align}
        \mathcal{E}(\vec{\rho}_i)
        \coloneqq
        \{\mathcal{E}_n(\overline{\rho}_i^{\otimes n})\}_{n=1}^{\infty}.
    \end{align}

    For each $i\in J$, define
    \begin{align}
        d_i
        \coloneqq
        D(\overline{\rho}_i\|\sigma).
    \end{align}
    By Proposition~\ref{sub main proposition} applied to each IID component $\vec{\rho}_i=\{\overline{\rho}_i^{\otimes n}\}_{n=1}^{\infty}$, together with the quantum Stein lemma for IID states, we have
    \begin{align}
        \underline{D}(\varepsilon|\mathcal{E}(\vec{\rho}_i)\|\vec{\sigma})
        =
        \underline{D}(\varepsilon|\vec{\rho}_i\|\vec{\sigma})
        =
        d_i
    \end{align}
    for every $i\in J$ and every $\varepsilon\in[0,1)$.
    Hence the information spectrum of the $i$-th component concentrates at the single point $d_i$.

    We now apply the classical information-spectrum formula for $\varepsilon$-hypothesis testing~\cite[Theorem~4.2.1]{han2003}.
    For the classical pair corresponding to $\mathcal{E}(\vec{\rho})$ and $\vec{\sigma}$, let $P_{X^n}$ denote the probability distribution corresponding to $\mathcal{E}_n(\rho_n)$, and let $P_{Y^n}$ denote the probability distribution corresponding to $\sigma^{\otimes n}$.
    Define
    \begin{align}
        K(R)
        :=
        \limsup_{n\to\infty}
        \Pr
        \left[
            \frac{1}{n}
            \log
            \frac{
                P_{X^n}(X^n)
            }{
                P_{Y^n}(X^n)
            }
            \le
            R
        \right],
    \end{align}
    where $X^n$ is distributed according to $P_{X^n}$.
    Then Ref.~\cite{chen} (see also Ref.~\cite[Theorem~4.2.1]{han2003}) gives
    \begin{align}
        B(\varepsilon|\mathcal{E}(\vec{\rho})\|\vec{\sigma})
        =
        \sup
        \left\{
            R
            :
            K(R)
            \le
            \varepsilon
        \right\}.
        \label{eq:han_information_spectrum_formula}
    \end{align}

    It remains to evaluate $K(R)$ for the present finite mixed source.
    We write $P_{X_i^n}$ for the probability distribution corresponding to $\mathcal{E}_n(\overline{\rho}_i^{\otimes n})$.
    Then
    \begin{align}
        P_{X^n}
        =
        \sum_{i\in J}
        p_iP_{X_i^n},
    \end{align}
    so the random variable $X^n$ can be generated by first choosing an index $I\in J$ according to the distribution $\{p_i\}_{i\in J}$ and then drawing $X_i^n$ according to $P_{X_i^n}$.
    Therefore,
    \begin{align}
        K(R)
        =
        \limsup_{n\to\infty}
        \sum_{i\in J}
        p_i
        \Pr
        \left[
            \frac{1}{n}
            \log
            \frac{
                P_{X^n}(X_i^n)
            }{
                P_{Y^n}(X_i^n)
            }
            \le
            R
        \right],
        \label{eq:K_decomposition_by_components}
    \end{align}
    where $X_i^n$ is distributed according to $P_{X_i^n}$.
    Note that the numerator in~\eqref{eq:K_decomposition_by_components} is the mixed distribution $P_{X^n}$, not the component distribution $P_{X_i^n}$.

    We now compare the likelihood ratio of the mixed distribution with that of each component.
    Since
    \begin{align}
        P_{X^n}(x^n)
        \ge
        p_iP_{X_i^n}(x^n)
    \end{align}
    for all $i\in J$ and all $x^n$, we have
    \begin{align}
        \frac{1}{n}
        \log
        \frac{
            P_{X^n}(x^n)
        }{
            P_{Y^n}(x^n)
        }
        \ge
        \frac{1}{n}
        \log
        \frac{
            P_{X_i^n}(x^n)
        }{
            P_{Y^n}(x^n)
        }
        +
        \frac{1}{n}\log p_i.
        \label{eq:mixture_lower_bound_component_likelihood}
    \end{align}
    Conversely, the argument on the information spectrum for mixed sources in Example~4.2.1 of~\cite{han2003} shows that the additional contributions from the other components do not change the asymptotic normalized log-likelihood ratio associated with the $i$-th component at the exponential scale.
    Consequently, the $i$-th component contributes weight $p_i$ to the information-spectrum distribution at the point $d_i$.

    For each $i\in J$, the component spectrum concentrates at the single point
    $d_i = D(\overline{\rho}_i\|\sigma)$,
    as shown above from Proposition~\ref{sub main proposition} and the quantum Stein lemma.
    Therefore, at every continuity point of the finite step function
    \begin{align}
        R
        \mapsto
        \sum_{i\in J:d_i\le R}
        p_i,
    \end{align}
    the distribution function $K(R)$ satisfies
    \begin{align}
        K(R)
        =
        \sum_{i\in J:d_i<R}
        p_i.
        \label{eq:K_finite_mixed_source}
    \end{align}
    The distinction between $d_i<R$ and $d_i\le R$ only concerns the finitely many discontinuity points $R=d_i$.
    This boundary ambiguity does not affect the supremum in~\eqref{eq:han_information_spectrum_formula}.
    Consequently,
    \begin{align}
        \sup
        \left\{
            R
            :
            K(R)
            \le
            \varepsilon
        \right\}
        =
        \sup
        \left\{
            R
            :
            \sum_{i\in J:d_i\le R}
            p_i
            \le
            \varepsilon
        \right\}.
        \label{eq:finite_mixture_spectrum_formula}
    \end{align}

    Combining~\eqref{eq:han_information_spectrum_formula} and~\eqref{eq:finite_mixture_spectrum_formula}, we obtain
    \begin{align}
        B(\varepsilon|\mathcal{E}(\vec{\rho})\|\vec{\sigma})
        =
        \sup
        \left\{
            R
            :
            \sum_{i\in J:d_i\le R}
            p_i
            \le
            \varepsilon
        \right\}.
        \label{eq:epsilon_exponent_pinched_mixed_source}
    \end{align}
    Finally, Lemma~\ref{lem:B=D}, together with~\eqref{eq:pinching_reduction_mixed_source} and~\eqref{eq:epsilon_exponent_pinched_mixed_source}, yields
    \begin{align}
        B(\varepsilon|\vec{\rho}\|\vec{\sigma})
        =
        \sup
        \left\{
            R
            :
            \sum_{i\in J:d_i\le R}
            p_i
            \le
            \varepsilon
        \right\}.
    \end{align}
\end{proof}

\section{Conclusion}
\label{sec:conclusion}

In this work, we prove a theorem that generalizes the generalized quantum Stein's lemma under the axioms of Ref.~\cite{hayashi2025generalizedquantumsteinslemma} to the setting where the null hypothesis is given by a mixture of IID states rather than a single IID state.
The theorem shows that, in the limit of vanishing type-I error $\varepsilon=0$, the infimum of the supremum type-II error exponent is determined by the regularized relative entropy of resource of the least distinguishable component in the mixture.
Our proof of the main theorem, Theorem~\ref{Main theorem}, uses the information spectrum method~\cite{han2003,4069150} to analyze mixed sources of quantum states.
We develop quantum-specific techniques to overcome the non-commutativity of quantum states in this analysis; nevertheless, the resulting characterization of the supremum type-II error exponent is analogous to the classical formula for mixed sources~\cite{han2003}.

Moreover, we show that an analogous worst-case-component characterization does not generally hold for $\varepsilon$-hypothesis testing with $\varepsilon>0$, by presenting a counterexample in which the supremum $\varepsilon$-achievable type-II error exponent is not determined by the worst-case component, in contrast to the case $\varepsilon=0$.
To analyze this, we use the pinching technique, which has been previously applied to quantum hypothesis testing for IID null hypotheses~\cite{hiai1991proper,Hayashi_2002,hayashi2025generalizedquantumsteinslemma}, but was not used in the information-spectrum analysis of general non-IID quantum sources in Ref.~\cite{4069150}.
Under the additional assumption that the number of distinct eigenvalues of the alternative hypothesis $\sigma_n$ grows subexponentially in $n$, we showed that this pinching reduction preserves the relevant spectral inf-divergence rate and reduces the problem to a commuting, hence classical, setting.
We can then apply the classical information-spectrum formula for mixed sources~\cite{han2003} to obtain the characterization of $\varepsilon$-hypothesis testing.

These quantum results are closely aligned with the classical case.
For $\varepsilon=0$, the supremum type-II error exponent admits a concise characterization as the minimum of the relative entropies of the components; by contrast, for $\varepsilon>0$, the corresponding exponent for mixed sources against general sources does not admit such a concise characterization~\cite{han2003}.
It is worth noting that our analysis uses different techniques for the vanishing-error case $\varepsilon=0$ and for the finite-error case $\varepsilon\in[0,1)$. For $\varepsilon=0$, we characterize the spectral inf-divergence rate of a mixed source as the minimum of the rates of its components by extending the information-spectrum-based argument to handle non-commutativity directly, without any additional assumption. By contrast, the case $\varepsilon>0$ is analyzed under the additional assumption that $\sigma_n$ has at most subexponentially many distinct eigenvalues, i.e., $d_n=e^{o(n)}$, which allows us to reduce the problem to the classical case. These two approaches are complementary, and we expect that the techniques developed here will be useful for future research in both directions.

Finally, our results suggest several directions for further development.
One natural direction is to extend the present analysis beyond mixed sources, including quantum extensions of known results in the classical information spectrum method~\cite{han2003}.
It would also be interesting to investigate applications of our results to quantum resource theories, in the spirit of the generalized quantum Stein's lemma for quantum resource theories~\cite{Brand_o_2008,brandao2010reversible,brandao2010generalization,Brandao2015,hayashi2025generalizedquantumsteinslemma,hayashi2025generalizedquantumsteinslemmaCQ,bergh2025generalizedquantumsteinslemma}.

\textit{Note added.---}
We mention the independent concurrent work by Lami et al.~\cite{lami2026}.
Our main result, Theorem~\ref{Main theorem}, yields, due to Proposition~\ref{relationship between two definitions},
\begin{align}\label{note added: ours}
\lim_{\varepsilon\to +0} \liminf_{n\to\infty} -\frac{1}{n}\log \beta_\varepsilon(\rho_n\|S_n)
=
\min_{\rho \in \mathcal{R}}\lim_{n\to\infty}\frac{1}{n}\min_{\sigma_n\in S_n}D(\rho^{\otimes n}\|\sigma_n),
\end{align}
where $\mathcal{R}=\{\overline{\rho}_i:i\in J\}$ is a finite subset of $\mathcal{D}(\mathcal{H})$, $\{\rho_n=\sum_{i\in J}p_i\overline{\rho}_i^{\otimes n}\}_{n=1}^{\infty}$ is a mixed source of the IID states in $\mathcal{R}$, and $p_i$ is an arbitrary probability distribution.
By contrast, their main result~\cite[Theorem 1]{lami2026} states that, for any $\varepsilon\in (0, 1)$,
\begin{align}\label{note added: theirs}
    \lim_{n\to\infty} \inf_{\rho_n\in \operatorname{conv}(\mathcal{R}_n^{\prime\mathrm{iid}})}
    -\frac{1}{n}\log\beta_\varepsilon(\rho_n\|S_n)
    =
    \inf_{\rho\in\mathcal{R}^\prime}\lim_{n\to\infty}\frac{1}{n}\min_{\sigma_n\in S_n} D(\rho^{\otimes n}\|\sigma_n),
\end{align}
where $\mathcal{R}'$ is a subset of $\mathcal{D}(\mathcal{H})$, and $\operatorname{conv}(\mathcal{R}_n^{\prime\mathrm{iid}})$ is the convex hull of $\mathcal{R}_n^{\prime\mathrm{iid}} \coloneqq \{\rho^{\otimes n} : \rho \in \mathcal{R}^\prime\}$.
In their analysis, $S_n$ is assumed to satisfy the Brand\~{a}o--Plenio axioms~\cite{brandao2010generalization} for each $n$.
Apparently, these results look similar, but there are several differences that are worth noting.

First, our result holds for any fixed mixed source with an arbitrary probability distribution $p_i$, while their result characterizes the type-II error exponent after an optimization over the convex hull of IID states.
In our setting with arbitrary $p_i$, the strong converse does not generally hold, unlike in their setting; indeed, we have given a counterexample (Example~\ref{counterexample varepsilon}) where the optimal type-II error exponent under a fixed nonzero type-I error threshold changes discontinuously when the probability distribution defining the mixture is changed.

Second, in our result, the number $|\mathcal{R}|$ of components of the mixed source is assumed to be finite; if $|\mathcal{R}|$ is infinite, then it is larger than any exponential function of $n$, and the result cannot hold in general, as shown by the counterexample in Example~\ref{counterexample if J_n grows}.
By contrast, in their work, the set $\mathcal{R}'$ of IID states is subject to no cardinality restriction, while the type-II error exponent is characterized after the optimization over its convex hull.

Finally, the Brand\~{a}o–Plenio axioms are more restrictive than our assumptions: the Brandão–Plenio axioms require, in addition to our assumptions, closedness under tracing out subsystems and closedness under permutations of subsystems.
It remains unclear whether these additional axioms are necessary for \eqref{note added: theirs} to hold.

Therefore, these two results, \eqref{note added: ours} and \eqref{note added: theirs}, are closely related yet not directly comparable.
It would be interesting to further clarify the relationship between them and to seek a unifying result that encompasses both \eqref{note added: ours} and \eqref{note added: theirs} as special cases.

\begin{acknowledgments}
    HY thanks Ludovico Lami for discussions.
    This work was supported by the Google UTokyo Quantum Partnership Research Program, Faculty Research Funding from Google Quantum AI\@, JST PRESTO Grant Number JPMJPR201A, JPMJPR23FC, JSPS KAKENHI Grant Number JP23K19970, JST CREST Grant Number JPMJCR25I5, and JST [Moonshot R\&D] [Grant Number JPMJMS256J]\@.
\end{acknowledgments}

\bibliography{citation}
\end{document}